\newtheorem{lemma}{Lemma}[section]
\newtheorem{theorem}[lemma]{Theorem}
\newtheorem{corollary}[lemma]{Corollary}
\newtheorem{obs}[lemma]{Observation}
\newtheorem{definition}[lemma]{Definition}
\newtheorem{claim}{Claim}
\newtheorem{proposition}[lemma]{Proposition}
\def\qed{\vrule height .9ex width .8ex depth -.1ex}
\newenvironment{proof}{\noindent {\bf Proof.}$~$\ignorespaces}
{\hspace*{\fill} \qed\medskip}
\newtheorem{RedRule}{Reduction Rule}
\title{Kernel(s) for Problems With no Kernel: On Out-Trees With Many Leaves (Extended~Abstract)}
\author{Henning Fernau\thanks{Univ. Trier, FB 4�Abteilung Informatik, 54286 Trier, Germany.
\texttt{\{fernau|raible\}@uni-trier.de}} \and Fedor V. Fomin\thanks{%
Department of Informatics, University of Bergen, Bergen
Norway. \newline $~$\hspace{.75cm}\texttt{\{fedor.fomin|daniello|saket.saurabh|yngve.villanger\}@ii.uib.no}}\addtocounter{footnote}{-1} \and Daniel Lokshtanov\footnotemark
\and  \addtocounter{footnote}{-2} Daniel Raible\footnotemark \and 
 Saket Saurabh\footnotemark \and \addtocounter{footnote}{-1} Yngve Villanger\footnotemark}
\begin{document}
\date{}
\maketitle

\begin{abstract}
The {\sc $k$-Leaf Out-Branching} problem is to find an out-branching
(i.e. a rooted oriented spanning tree)
with at least $k$ leaves in a given digraph. The problem has recently received much attention
from the viewpoint of parameterized algorithms~\cite{alonLNCS4596,AlonFGKS07fsttcs,BoDo2,KnLaRo}.
In this paper we step aside and take a kernelization based approach to the
{\sc $k$-Leaf-Out-Branching} problem. We give
the first polynomial kernel for
{\sc Rooted $k$-Leaf-Out-Branching}, a variant of {\sc $k$-Leaf-Out-Branching}
where the root of the tree searched for is also a part of the input. Our
kernel has cubic size and is obtained using extremal combinatorics.

For the {\sc $k$-Leaf-Out-Branching} problem we show that no polynomial
kernel is possible unless polynomial hierarchy collapses to third level 
by applying a recent breakthrough
result by Bodlaender et al.~\cite{BDFH08} in a non-trivial fashion. However our
positive results for {\sc Rooted $k$-Leaf-Out-Branching}
immediately imply that the seemingly
intractable the {\sc $k$-Leaf-Out-Branching} problem admits
a data reduction to $n$ independent $O(k^3)$ kernels. These two results,
tractability and intractability side by side, are the first separating
{\it many-to-one kernelization} from {\it Turing kernelization}. This answers
affirmatively an open problem  regarding ``cheat kernelization'' raised in~\cite{IWPECOPEN08}.

\smallskip
\noindent
{\em Keywords:} Parameterized Algorithms, Kernelization, Out-Branching, Max-Leaf, Lower Bounds
\end{abstract}

\section{Introduction}
Kernelization is a powerful and natural technique  in  the design of 
parameterized algorithms. 
The main idea of kernelization is to replace a given parameterized
instance $(I , k)$ of a problem $\Pi$ by a simpler instance $(I',k')$ of $\Pi$
in polynomial time,
such that $(I , k)$
is a yes instance if and only if $(I',k')$ is a yes instance and the size of $I'$ is bounded by a function of $k$ alone. The
reduced instance $I'$  is called the {\em kernel} for the problem.
Typically kernelization algorithms work by applying reduction rules, which iteratively reduce the instance to an
equivalent ``smaller" instance. From this point of view, kernelization can be seen as
pre-processing with an explicit performance guarantee, ``a
humble strategy for coping with hard problems, almost
universally employed" \cite{Fellows06}.

A parameterized problem is said to have a polynomial kernel if we have a
kernelization algorithm such that the size of the reduced instance obtained
as its output is bounded by a polynomial of the parameter of the input.
There are many parameterized problems for which polynomial, and even linear kernels
are known \cite{CKJ01,CFKX07,estivill,GN07,Thomasse09}. Notable examples include a $2k$-sized kernel for
{\sc $k$-Vertex Cover}~\cite{CKJ01}, a $O(k^2)$ kernel for {\sc $k$-Feedback Vertex Set}~\cite{Thomasse09} and a $67k$ kernel for
{\sc $k$-Planar-Dominating Set}~\cite{CFKX07}, among many others. While positive kernelization results have been around for
quite a while, the first results ruling out polynomial kernels for parameterized problems have appeared only recently.
In a seminal paper Bodlaender et al. \cite{BDFH08}
have shown that a variety of important FPT problems cannot have
polynomial kernels unless the polynomial hierarchy collapses to third level ($PH=\Sigma_p^3$), a well known
complexity theory hypothesis.  Examples of such  problems are
\textsc{$k$-Path}, $k$-\textsc{Minor Order Test}, $k$-\textsc{Planar Graph Subgraph Test}, and many others.
 However, while this negative result rules out the existence of a polynomial kernel for these problems it does not rule out
 the possibility of a kernelization algorithm reducing the instance to $|I|^{O(1)}$ independent polynomial kernels. 
This raises the question of the relationship between {\it many-to-one kernelization} and {\it Turing kernelization}, a question raised in~\cite{IWPECOPEN08,estivill,GN07}. That is,
 can we have a natural parameterized problem for which there is no polynomial kernel but
 we can ``cheat" this lower bound by providing $|I|^{O(1)}$ independent polynomial kernels.
Besides of theoretical interest, this type of results
would be  very desirable
from a practical point of view as well.
In this paper, we address the issue of many-to-one kernelization versus Turing kernelization
through  {\sc $k$-Leaf Out-Branching}.



The {\sc Maximum Leaf Spanning Tree} problem on connected undirected graphs is to find a spanning tree with the maximum number of leaves in a given input graph $G$.
The problem is well studied both from an algorithmic~\cite{GalbiatMM97,LuR98,Solis-Oba98,fominGK06} and combinatorial~\cite{DingJS01,GriggsW92,KleitmanW91,LinialS87} point of view. The problem has been studied
from the parameterized complexity perspective as well~\cite{bonsmaLNCS2747,estivill,FellowsMRS00}. 
An extension of {\sc
Maximum Leaf Spanning Tree} to directed graphs is defined as follows.
We say that a subdigraph $T$ of a digraph $D$ is an {\em out-tree}
if $T$ is an
oriented tree with only one vertex $r$ of in-degree zero (called
the {\em root}). The vertices of $T$ of out-degree zero are called {\em
leaves}. If $T$ is a spanning out-tree, i.e. $V(T)=V(D)$, then
$T$ is called an {\em out-branching} of $D$.
 The {\sc
Directed Maximum Leaf Out-Branching}  problem  is to find an
out-branching in a given digraph with the maximum number of leaves.
The parameterized version of the {\sc
Directed Maximum Leaf Out-Branching}  problem  is
{\sc
$k$-Leaf Out-Branching}, where one for a given digraph $D$ and integer
$k$ is asked to decide whether $D$ has an
out-branching  with at least $k$ leaves. If we replace out-branching with out-tree in
the definition of  {\sc $k$-Leaf Out-Branching} we get the problem of {\sc $k$-Leaf Out-Tree}.

Unlike its undirected counterpart, the study of
 {\sc $k$-Leaf Out-Branching} has only begun recently.
Alon et al.  \cite{alonLNCS4596,AlonFGKS07fsttcs} proved that the problem is
 fixed parameter tractable (FPT) by providing an algorithm
 deciding in time $O(f(k)n)$ whether a strongly connected digraph has an out-branching
with at least $k$ leaves. Bonsma and Dorn \cite{BoDo2} extended this result to connected
digraphs, and improved  the running time of the algorithm.
Very recently, Kneis et al.~\cite{KnLaRo} provided parameterized algorithm solving
the problem in time $4^kn^{O(1)}$. In a related work Drescher and Vetta~\cite{DV08} described an
$\sqrt{OPT}$-approximation algorithm for  the {\sc
Directed Maximum Leaf Out-Branching}  problem.
Let us remark, that despite of similarities of directed and
undirected variants
of  {\sc
Maximum Leaf Spanning Tree}, the directed case requires a totally different
approach.
The existence of a polynomial kernel for
 {\sc $k$-Leaf Out-Branching} has not been addressed until now.

\medskip\noindent\textbf{Our contribution.}
We prove that {\sc Rooted $k$-Leaf Out-Branching},
where for a given vertex $r$ one asks for $k$-leaf out-branching rooted at $r$,
admits a polynomial, in fact a $O(k^3)$, kernel. A similar result also holds for
{\sc Rooted $k$-Leaf Out-Tree}, where we are looking for a rooted (not necessary
spanning) tree with $k$ leaves. While many polynomial kernels are known
for undirected graphs, this is the first, to our knowledge, non-trivial parameterized
problem on digraphs admitting a polynomial kernel. To obtain the kernel we
establish a number of results on the structure of digraphs not having a $k$-leaf
out-branching. These results may be of independent interest.

In light of our positive results it is natural to suggest that {\sc $k$-Leaf Out-Branching} admits polynomial kernel as well. We
find it a bit striking that this is not the case. We establish
kernelization lower bounds 
by proving that unless   $PH=\Sigma_p^3$,
there is no polynomial kernel for neither {\sc $k$-Leaf Out-Branching} nor
{\sc $k$-Leaf Out-Tree}. While the main idea of our proof is based on the framework
of Bodlaender et al. \cite{BDFH08}, our adaptation is non-trivial. 
In particular, we use the cubic kernel obtained for {\sc Rooted $k$-Leaf Out-Branching}
to prove the lower bound. Our contributions are summarized in Table~\ref{table:results}.

\begin{table}[t]\label{table:results}
\begin{center}
\begin{tabular}{|c|c|c|}
\hline
 & \textsc{$k$-Out-Tree} & \textsc{$k$-Out-Branching} \\
\hline \hline
Rooted & $O(k^3)$ kernel & $O(k^3)$ kernel\\
\hline
\multirow{2}{*}{Unrooted} & No $poly(k)$ kernel, & No $poly(k)$ kernel, \\
 & $n$~kernels of size $O(k^3)$ & $n$~kernels of size $O(k^3)$\\
\hline
\end{tabular}
\end{center}
\caption{\label{table:pd} Our Results}
\end{table}

Finally, let us remark that the polynomial kernels
for the rooted versions of our problems 
provide a ``cheat" solution
for the poly-kernel-intractable {\sc $k$-Leaf Out-Branching} and
{\sc $k$-Leaf Out-Tree}. Indeed, let $D$ be a digraph on $n$ vertices.
By running the kernelization for the rooted
version of the problem for every vertex of $D$ as a root,
we obtain $n$ graphs where each of them has $O(k^3)$ vertices, such that at least
one of them has a $k$-leaf out-branching if and only if $D$ does.


\section{Preliminaries}

Let $D$ be a directed graph or digraph for short.
By $V(D)$ and $A(D)$ we represent
vertex set and arc set respectively of $D$.
Given a subset $V'\subseteq V(D)$ of a digraph $D$, by
$D[V']$ we mean the digraph induced on $V'$.  A vertex $y$ of $D$ is an
{\em in-neighbor} ({\em out-neighbor}) of a vertex $x$ if $yx\in A$
($xy\in A$). The {\em in-degree} ({\em out-degree}) of a vertex $x$ is the number of its in-neighbors
(out-neighbors) in $D$. Let  $P=p_1p_2\ldots p_l$ be a given path. Then by  $P[p_ip_j]$ we denote
a subpath of $P$ starting at  vertex $p_i$ and ending at vertex $p_j$. For a given vertex $q\in V(D)$,
by $q$-out-branching (or $q$-out-tree)  we denote an out-branching (out-tree) of $D$ rooted at vertex
$q$.

We say that the removal of an arc $uv$ (or a vertex set $S$) \emph{disconnects} a vertex $w$
from the root $r$ if every path from $r$ to $w$ in $D$ contains arc $uv$ (or one of the vertices in $S$).
An arc $uv$ is contracted as follows, add a new vertex $u'$, and for each arc $wv$ or $wu$ add
the arc $wu'$ and for an arc $vw$ or $uw$ add the arc $u'w$,
remove all arcs incident to $u$ and $v$ and the vertices $u$ and $v$.
We say that a reduction rule is \emph{safe} for a value $k$ if whenever the rule is applied to an instance $(D,k)$ to obtain an instance $(D',k')$, $D$ has an $r$-out-branching with at least $k$ leaves if and only if $D'$ has
an $r$-out-branching with at least $k'$ leaves. 
We also need the following.
\begin{proposition}\label{prop:grow}{\rm \cite{KnLaRo}}
Let $D$ be a digraph and $r$ be a vertex from which every vertex in $V(D)$ is reachable. Then if we have an
out-tree rooted at $r$ with $k$ leaves  then we also have an out-branching rooted at $r$ with $k$ leaves.
\end{proposition}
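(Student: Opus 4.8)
The statement to prove: if $D$ is a digraph and $r$ a vertex from which every vertex of $D$ is reachable, then the existence of an out-tree rooted at $r$ with $k$ leaves implies the existence of an out-branching rooted at $r$ with $k$ leaves. The natural approach is to start from a $k$-leaf out-tree $T$ and greedily extend it to a spanning out-tree while never destroying any of the existing leaves.

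Let me think about the key argument. Let $T$ be an $r$-out-tree with $k$ leaves, and suppose $T$ is not spanning, so there is a vertex $w \in V(D) \setminus V(T)$. Since every vertex is reachable from $r$, there is a path from $r$ to $w$ in $D$; let $v$ be the last vertex on this path that lies in $V(T)$ and let $u$ be the next vertex (so $u \notin V(T)$, $vu \in A(D)$). Now add the arc $vu$ to $T$. The result $T'$ is still an out-tree rooted at $r$: $u$ acquires in-degree one and is reachable from $r$ through $v$, and no cycle is created since $u$ was not previously in $T$. How many leaves does $T'$ have? The only vertices whose out-degree in the tree changed are $v$ (which gained out-neighbor $u$) and $u$ (which is now a leaf of $T'$). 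If $v$ was a leaf of $T$, it is no longer a leaf of $T'$, but $u$ is now a leaf, so the leaf count is unchanged; if $v$ was not a leaf of $T$, the leaf count strictly increased. Either way $T'$ has at least $k$ leaves and one more vertex than $T$.

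Iterating this step at most $|V(D)| - |V(T)|$ times, I obtain an $r$-out-branching with at least $k$ leaves. That completes the proof.

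The only subtlety — and the step I would be most careful about — is verifying that adding the single arc $vu$ genuinely keeps the structure an out-tree (no vertex gets in-degree $2$, no directed cycle appears) and correctly bookkeeping the leaf count through the two vertices $u$ and $v$ whose tree-out-degree changes; everything else is a routine induction on $|V(D)\setminus V(T)|$. Note that the hypothesis that every vertex is reachable from $r$ in $D$ (not in $T$) is exactly what guarantees the extending arc $vu$ exists at every stage, so it cannot be dropped.
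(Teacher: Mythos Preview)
Your argument is correct. The paper does not actually prove Proposition~\ref{prop:grow}; it is imported from~\cite{KnLaRo} and stated without proof, so there is no ``paper's own proof'' to compare against. Your greedy extension --- repeatedly attaching one new vertex $u$ via an arc $vu$ with $v\in V(T)$, $u\notin V(T)$, observing that the leaf count never drops --- is the standard way to establish this folklore fact, and all the points you flag (that $T'$ remains an out-tree because $u$ is new, and that the only out-degrees affected are those of $u$ and $v$) are exactly the right ones to check.
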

Let $T$ be an out-tree of a digraph $D$. We say that $u$ is a {\em parent} of 
$v$ and $v$ is a {\em child} of $u$ if $uv\in A(T)$. We say that $u$ is an 
{\em ancestor} of $v$ if there is a directed path from $u$ to $v$ in $T$. An 
arc $uv$ in $A(D)\setminus A(T)$ is called a {\em forward} arc if  $u$ is an 
ancestor of $v$, a {\em backward} arc if $v$ is an ancestor of $u$ and a 
{\em cross} arc otherwise.  Finally, parameterized decision problems are defined by specifying the input ($I$), the parameter ($k$), and
the question to be answered.  A parameterized problem that can be solved in time $f(k)|I|^{O(1)}$ where $f$ is a
function of $k$ alone is said to be fixed parameter tractable (FPT).

\section{Reduction Rules for {\sc Rooted $k$-Leaf Out-Branching}}
\label{section:redrule}
In this section we give all the data reduction rules we apply on the given instance of
{\sc Rooted $k$-Leaf Out-Branching} to shrink its size.


\begin{RedRule} \label{reachRule}{\rm[Reachability Rule]}
If there exists a vertex $u$ which is disconnected from the root $r$, then return {\sc No}.
\end{RedRule}
\noindent
For the {\sc Rooted $k$-Leaf Out-Tree} problem the Rule~\ref{reachRule} translates into following:
If a vertex $u$ is disconnected from the root $r$, then remove $u$ and all in-arcs to $u$ and out-arcs from $u$.

\begin{RedRule} \label{uslessArcRule}{\rm [Useless arc Rule]}
If vertex $u$ disconnects a vertex $v$ from the root $r$,
then remove the arc $vu$.
\end{RedRule}

\begin{lemma}$[\star]$\footnote{Proofs of the results labeled with $[\star]$ 
have been moved to appendix due to space restrictions.} 
Reduction Rules \ref{reachRule} and \ref{uslessArcRule} are safe.
\end{lemma}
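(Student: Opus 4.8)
The plan is to prove each rule safe separately, in both cases verifying the ``if and only if'' by showing that a valid out-branching for one instance can be transformed into one for the other with no loss in the number of leaves.

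First I would dispose of Reduction Rule~\ref{reachRule}. For the out-branching version, observe that an $r$-out-branching is a spanning tree, so it must contain a directed path from $r$ to every vertex of $D$; if some vertex $u$ is disconnected from $r$, no such path exists and hence no $r$-out-branching exists at all, so returning {\sc No} is correct regardless of $k$. For the out-tree version, the claim is that $D$ has an $r$-out-tree with at least $k$ leaves if and only if $D'=D-u$ does. Since $u$ is disconnected from $r$, no $r$-out-tree of $D$ can contain $u$ (an out-tree rooted at $r$ contains a path from $r$ to each of its vertices), so every $r$-out-tree of $D$ is already an $r$-out-tree of $D'$ with the same leaves, and conversely any $r$-out-tree of $D'$ is trivially one of $D$. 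Note that removing $u$ may itself disconnect further vertices from $r$; this is fine, as the rule is simply applied again.

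Next I would handle Reduction Rule~\ref{uslessArcRule}. Suppose $u$ disconnects $v$ from $r$, and let $D' = D - vu$. One direction is immediate: any $r$-out-branching of $D'$ is an $r$-out-branching of $D$ with the same number of leaves, since $D'$ is a subgraph of $D$ on the same vertex set. For the other direction, let $T$ be an $r$-out-branching of $D$ with at least $k$ leaves. If $vu \notin A(T)$ we are done, so assume $vu \in A(T)$, i.e.\ $v$ is the parent of $u$ in $T$. The key combinatorial observation is that the unique $r$-to-$v$ path $P$ in $T$ must pass through $u$ (because $u$ disconnects $v$ from $r$, every $r$-to-$v$ walk in $D$, in particular $P$, contains $u$) — but $u$ is a \emph{descendant} of $v$ in $T$ via the arc $vu$, so $P$ would have to contain $u$ both as an ancestor of $v$ and as the child of $v$, which is impossible in a tree. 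Hence the case $vu \in A(T)$ cannot occur, and $T$ is already an out-branching of $D'$ with at least $k$ leaves. The same argument works verbatim for out-trees, using Proposition~\ref{prop:grow} only if one prefers to phrase things via reachability.

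The main obstacle — really the only subtle point — is the direction for Rule~\ref{uslessArcRule} showing that deleting the arc $vu$ does not destroy all good out-branchings; everything hinges on the clean observation that if $u$ separates $v$ from $r$ then no out-branching can use $vu$ as a tree arc at all, since that would force $u$ to be simultaneously an ancestor and a child of $v$. Once that is isolated, the rest is bookkeeping about subgraphs preserving leaf counts. One should also remember to state the argument for both the out-branching and the out-tree variants, since the reduction rules were phrased for both; for the out-tree variant of Rule~\ref{reachRule} the only extra care needed is noting that leaves of the out-tree are unaffected by deleting a vertex that no out-tree could have contained.
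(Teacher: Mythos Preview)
Your proof is correct and follows the same approach as the paper: for Rule~\ref{reachRule} you use that an $r$-out-branching must reach every vertex, and for Rule~\ref{uslessArcRule} you observe that since every $r$-to-$v$ path goes through $u$, the arc $vu$ can never be a tree arc of an $r$-out-branching (the paper phrases this as ``$vu$ is a back arc in any $r$-out-branching''). Your write-up is in fact more thorough than the paper's, as you explicitly verify both directions and treat the out-tree variant separately.
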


\begin{RedRule}
\label{bridgeRule}{\rm [Bridge Rule]}
If an arc $uv$ disconnects at least two vertices from the root $r$, contract arc $uv$.
\end{RedRule}

\begin{lemma}$[\star]$
 Reduction Rule \ref{bridgeRule} is safe.
\end{lemma}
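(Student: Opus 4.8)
The plan is to show that contracting $uv$ affects neither the existence of a suitable out-branching nor the required leaf count, so we may take $k'=k$. Write $D'$ for the digraph obtained from $D$ by contracting $uv$ into a new vertex $u'$, and let $r'$ be $u'$ if $r=u$ and $r$ otherwise. By Reduction Rule~\ref{reachRule} we may assume every vertex of $D$ is reachable from $r$. Since $uv$ disconnects at least two vertices from $r$, fix such a vertex $w$ with $w\neq v$; then every path from $r$ to $w$ in $D$ uses the arc $uv$, which forces $r\neq v$, shows that $r$ reaches $u$, and shows that there is a path from $v$ to $w$ in $D$ not using the arc $uv$. The key structural fact, used in both directions, is that in \emph{every} $r$-out-branching $T$ of $D$ the arc $uv$ belongs to $T$ and $v$ (and hence also $u$, which has the child $v$) is an internal vertex of $T$: the unique $r$-to-$w$ path of $T$ must use $uv$, so $uv\in A(T)$, and since $w\neq v$ this path continues past $v$, so $v$ has a child in $T$.

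For the forward direction, take an $r$-out-branching $T$ of $D$ with at least $k$ leaves and contract $uv$ inside $T$; as $uv$ is a tree-arc oriented from parent to child, the result $T'$ is an $r'$-out-branching of $D'$. By the structural fact $u$ and $v$ are internal in $T$ and the merged vertex $u'$ is internal in $T'$ (it inherits the child $v$ had on the path to $w$), while every other vertex keeps its leaf/internal status, so $T'$ has exactly as many leaves as $T$, hence at least $k'$.

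For the reverse direction, take an $r'$-out-branching $T'$ of $D'$ with at least $k$ leaves; we expand $u'$ back into the arc $u\to v$. Let $c_1,\dots,c_t$ be the children of $u'$ in $T'$; each arc $u'c_i\in A(D')$ originates from $uc_i$ or $vc_i$ in $A(D)$, so we may hang $c_i$ together with its whole subtree below $u$ or below $v$ accordingly. If $u=r$, make $u=r$ the root with child $v$ and attach the $c_i$ as above. If $u\neq r$, let $p$ be the parent of $u'$ in $T'$; we shall show $pu\in A(D)$, and then make $p$ the parent of $u$, add the arc $u\to v$, and attach the $c_i$ as above. In both cases $T$ is a genuine $r$-out-branching of $D$ (spanning, correctly rooted, all arcs present), and a short check shows it has at least as many leaves as $T'$ — expanding $u'$ never destroys a leaf and may create one at $v$ — hence at least $k$.

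It remains to establish the claim $pu\in A(D)$, which is the crux. The unique path from $r'$ to $p$ in the tree $T'$ avoids the child $u'$ of $p$, and it uses only arcs of $D'$ not incident to $u'$, i.e.\ arcs of $D$ not incident to $u$ or $v$; hence $D$ contains a path $Q$ from $r$ to $p$ avoiding $\{u,v\}$. Suppose $pu\notin A(D)$. Since $pu'\in A(D')$, we get $pv\in A(D)$, so $Q$ followed by the arc $pv$ is a path from $r$ to $v$ in $D$ using no arc incident to $u$, in particular not $uv$. Concatenating it with the path from $v$ to $w$ that avoids $uv$ (fixed at the outset) yields a walk from $r$ to $w$ in $D$ avoiding the arc $uv$, contradicting that $uv$ disconnects $w$ from $r$; therefore $pu\in A(D)$. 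The main obstacle is precisely this last step: one must rule out the configuration in which, after contraction, the parent of $u'$ is adjacent in $D$ only to $v$ and not to $u$. This is exactly where the hypothesis that $uv$ disconnects \emph{at least two} vertices (not just $v$ itself) is needed, together with the elementary observation that contracting $uv$ and reversing this contraction both preserve reachability along paths avoiding $\{u,v\}$; everything else is routine bookkeeping.
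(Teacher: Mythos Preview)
Your proof is correct and follows essentially the same approach as the paper's: contract $uv$ inside $T$ for the forward direction, and in the reverse direction expand $u'$ back to $u\to v$, distributing the children of $u'$ between $u$ and $v$ and using that the parent $p$ of $u'$ must satisfy $pu\in A(D)$. You are in fact more careful than the paper on two points---you handle the edge case $r=u$ explicitly, and you spell out fully the argument for $pu\in A(D)$ (the paper compresses this to the single clause ``since there is no path from $r$ to $v$ avoiding $u$ in $D$'')---but the underlying argument is the same.
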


\begin{RedRule} \label{AvoidableArcRule}{\rm [Avoidable Arc Rule]}
If a vertex set $S$, $|S| \leq 2$, disconnects a vertex $v$ from the root $r$, $vw \in A(D)$ and $xw \in A(D)$ for all $x \in S$, then delete the arc $vw$.
\end{RedRule}

\begin{lemma}$[\star]$ 
Reduction Rule \ref{AvoidableArcRule} is safe.
\end{lemma}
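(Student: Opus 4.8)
The plan is to establish the two directions of the required equivalence separately; one of them is immediate. Write $D'=D-vw$ for the reduced digraph. Since $A(D')\subseteq A(D)$, every $r$-out-branching of $D'$ with at least $k$ leaves is also an $r$-out-branching of $D$ with at least $k$ leaves, so that implication is trivial. For the converse I would take an $r$-out-branching $T$ of $D$ with at least $k$ leaves and modify it into one of $D'$ with at least as many leaves. If $vw\notin A(T)$ then $T$ already lives in $D'$ and we are done, so assume $vw\in A(T)$; in particular $v$ is not a leaf of $T$.

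The heart of the argument is a rerouting of the in-arc of $w$. Let $P$ be the unique path from $r$ to $v$ in $T$. Because $S$ disconnects $v$ from $r$ in $D$, some vertex $s\in S$ lies on $P$, so $s$ is an ancestor of $v$, hence of $w$, in $T$. I would record two facts: $s\neq w$ (otherwise $w$ would be a strict ancestor of its own parent $v$, a contradiction), and $s\neq v$ (using $v\notin S$), so $s$ is a \emph{strict} ancestor of $v$ and the child $c$ of $s$ lying on $P$ satisfies $c\neq w$. Now set $T':=(T-vw)+sw$. The arc $sw$ belongs to $A(D)$ because $xw\in A(D)$ for all $x\in S$, and $sw\neq vw$, so $T'\subseteq A(D')$. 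Deleting $vw$ from the out-tree $T$ splits it into the subtree $T_w$ rooted at $w$ and an out-tree $T_1$ rooted at $r$ with $s\in V(T_1)$; re-attaching $T_w$ via $sw$ therefore yields an out-tree rooted at $r$ on the same vertex set, i.e. an $r$-out-branching of $D'$. For the leaf count, the only vertices whose child sets change are $v$ (which loses $w$ but was not a leaf anyway) and $s$ (which gains $w$ but stays internal, since $c\neq w$ is still its child); $w$ keeps all its out-neighbours in $T'$ and hence its status. Thus $T'$ has at least as many leaves as $T$, i.e. at least $k$, which finishes the argument.

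I expect the main obstacle to be the careful bookkeeping of the rerouting: one must verify that $T'$ is genuinely an out-branching — no directed cycle is created because $w$ is a descendant of $s$ in $T$ and hence not an ancestor of $s$, and reachability from $r$ is preserved because $s$ is still reachable from $r$ after removing $vw$ — and that no vertex silently leaves the leaf set, which hinges on the observation $s\neq w$. A secondary point worth spelling out is the boundary case $v\in S$: the natural reading of ``$S$ disconnects $v$ from $r$'' is that $S$ is disjoint from $\{r,v\}$, and this is exactly what makes $s\neq v$, and hence the whole rerouting, go through; allowing $S=\{v\}$ would in fact make the rule unsafe, which confirms that this restriction is implicit in the statement.
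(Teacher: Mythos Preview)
Your proof is correct and follows essentially the same approach as the paper: find an ancestor $s\in S$ of $v$ in $T$ and replace the arc $vw$ by $sw$. The paper's version is terser---it simply asserts that $T'=(T\cup uw)\setminus vw$ is an out-branching with at least as many leaves---while your explicit checks that $s\neq w$, $s\neq v$, and that $s$ already had a child $c\neq w$ on the $r$--$v$ path fill in precisely the bookkeeping the paper leaves implicit.
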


\begin{figure}
\begin{center}
\input{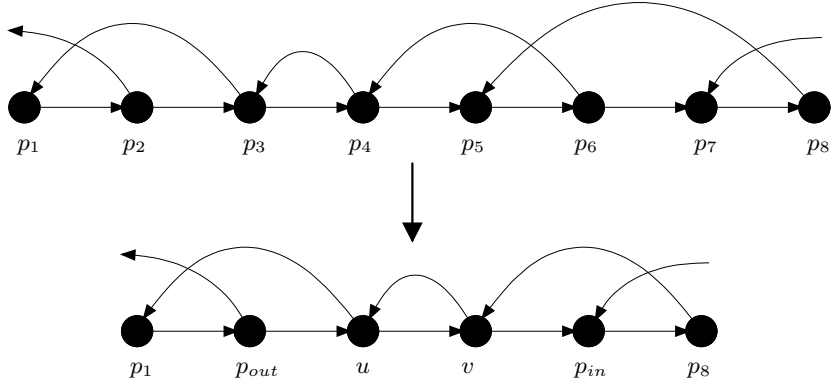}
\end{center}
\caption{\label{fig:redrule5}An Illustration of Reduction Rule ~\ref{twoDirectionalPathRule}.}

\end{figure}

\begin{RedRule} \label{twoDirectionalPathRule}{\rm [Two directional path Rule]}
If there is a path $P = p_1p_2 \ldots p_{l-1}p_l$ with $l = 7$ or $l = 8$ such that
\begin{itemize}
\setlength{\itemsep}{-2pt}
 \item $p_1$ and $p_{in} \in \{p_{l-1},p_l\}$ are the only vertices with in-arcs from the outside of $P$.
 \item $p_l$ and $p_{out} \in \{p_1,p_2\}$ are the only vertices with out-arcs to the outside of $P$.
 \item The path $P$ is the unique out-branching of $D[V(P)]$ rooted at $p_1$.
 \item There is a path $Q$ that is the unique out-branching of $D[V(P)]$ rooted at $p_{in}$.
 \item The vertex after $p_{out}$ on $P$ is not the same as the vertex after $p_l$ on $Q$.
\end{itemize}
Then delete $R = P \setminus \{p_1, p_{in}, p_{out}, p_l\}$ and all arcs incident to these vertices from $D$.
Add two vertices $u$ and $v$ and the arc set $\{p_{out}u, uv, vp_{in}, p_lv, vu, up_{1}\}$ to $D$.
\end{RedRule}

Notice that every vertex on $P$ has in-degree at most $2$ and out-degree at most $2$. Figure~\ref{fig:redrule5} gives an example of an application of 
Reduction Rule~\ref{twoDirectionalPathRule}. 

\begin{lemma}
Reduction Rule \ref{twoDirectionalPathRule} is safe. \end{lemma}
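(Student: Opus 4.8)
The plan is to show that $D$ has an $r$-out-branching with at least $k$ leaves if and only if the reduced digraph $D'$ does, by exhibiting a leaf-preserving correspondence between $r$-out-branchings of $D$ and of $D'$. The key structural fact, which I would establish first, is a ``canonical form'' for how any $r$-out-branching $T$ of $D$ must traverse the path $P$: since $p_1$ and $p_{in}$ are the only vertices of $P$ with in-arcs from outside $P$, every vertex of $P$ is reached in $T$ either through $p_1$ or through $p_{in}$; and since $P$ (resp. $Q$) is the \emph{unique} out-branching of $D[V(P)]$ rooted at $p_1$ (resp. $p_{in}$), the part of $T$ induced on $V(P)$ is forced to be a union of an initial segment of $P$ (the vertices fed from $p_1$) and an initial segment of $Q$ (the vertices fed from $p_{in}$), meeting so as to cover $V(P)$ exactly once. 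So $T\cap D[V(P)]$ is determined by a single ``split point'': there is an index $i$ such that $p_1,\dots,p_i$ are taken along $P$ and $p_{i+1},\dots$ are taken along $Q$ starting from $p_{in}$. The only vertices of $V(P)$ that can possibly be leaves of $T$ are the two ``endpoints'' of these two segments, i.e. at most $p_l$ (the last vertex on $Q$) and the vertex of $P$ right before the split — and crucially $p_{out}\in\{p_1,p_2\}$ and $p_{in}\in\{p_{l-1},p_l\}$ are never leaves unless the split is degenerate, while all interior vertices of $R=P\setminus\{p_1,p_{in},p_{out},p_l\}$ have out-degree forced to $1$ inside $T$ and hence contribute $0$ leaves.

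Next I would observe that the gadget $\{u,v\}$ with arcs $\{p_{out}u,\,uv,\,vp_{in},\,p_lv,\,vu,\,up_1\}$ is precisely a ``two-directional path of length $2$'' reproducing the same reachability pattern: in $D'$, to reach $u,v$ from $r$ one must enter via $p_{out}$ (then $p_{out}\to u\to v$) or via $p_l$ (then $p_l\to v\to u$), and from $u,v$ one leaves only via $u\to p_1$ or $v\to p_{in}$; moreover $D'[\{u,v,\dots\}]$ restricted to the gadget again has a unique out-branching from each of the two entry points. Then I would set up the two directions of the equivalence. Given $T$ for $D$ with $\ge k$ leaves: if the split in $T$ has $p_{out}$'s successor-on-$P$ used (the ``generic'' case) I route the gadget as $p_{out}\to u\to v\to p_{in}$ together with $u\to p_1$ being absent and $v$'s children outside being exactly what $p_{in}$ had — wait, more carefully: I mirror whichever of the two segments in $T$ is nonempty on the $p_1$-side and which on the $p_{in}$-side, choosing the gadget arcs so that $u$ plays the role of the $p_1$-segment's continuation and $v$ the $p_{in}$-segment's, and I check the leaf count changes by the same constant $k-k'$ (here $k'=k$, since the rule does not decrement $k$, so I must verify leaves are preserved \emph{exactly}). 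The deleted interior vertices contributed no leaves, and $p_{out},p_{in}$ keep their status; the at-most-two ``segment-end'' leaves in $T$ correspond to $u$ and/or $v$ being leaves in $T'$. Conversely, given $T'$ for $D'$, I expand the gadget back into the appropriate initial segments of $P$ and $Q$, again preserving the leaf count. The fifth bullet (``the vertex after $p_{out}$ on $P$ is not the vertex after $p_l$ on $Q$'') is what guarantees $P$ and $Q$ genuinely diverge immediately, so the split is well-defined and the two segments don't overlap in a way that would let a configuration in $D$ have \emph{more} leaves than anything realizable in $D'$; I would use it exactly at the point of matching leaf counts in the harder direction.

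I expect the main obstacle to be the bookkeeping in the forward direction: carefully enumerating the possible ``splits'' of $T$ on $V(P)$ (including the degenerate cases where the whole of $V(P)$ is reached from $p_1$, or the whole from $p_{in}$, or where $p_{out}=p_2$ versus $p_{out}=p_1$, and symmetrically $p_{in}=p_{l-1}$ versus $p_{in}=p_l$), and verifying in each case that the gadget admits a configuration with exactly the same number of leaves — in particular that no split of $T$ on the original long path produces two leaves among $\{$end-of-$P$-segment, $p_l\}$ in a way the length-$2$ gadget cannot match. The reason $l\in\{7,8\}$ appears is precisely that these are the smallest lengths for which $P\setminus\{p_1,p_{in},p_{out},p_l\}$ is nonempty (so the rule makes progress, shrinking the instance) while still leaving the four distinguished vertices distinct; I would note this to justify that the rule is both applicable and genuinely reducing, though strictly only safety is being claimed here. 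Once the case analysis is organized around the single split index, each case is a routine check that the arc sets are valid in $D$ resp. $D'$ (using the hypotheses on which vertices have arcs to/from outside $P$) and that leaf counts match, which I would present compactly rather than exhaustively.
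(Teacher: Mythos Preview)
Your proposal is correct and follows essentially the same approach as the paper. The paper organizes the argument into three explicit cases for $T[V(P)]$ --- all of $P$, all of $Q$, or a disjoint union of an initial segment $\tilde{P}$ of $P$ and an initial segment $\tilde{Q}$ of $Q$ --- which is exactly your ``split index'' parameterization with the two degenerate endpoints separated out; the paper likewise shows at most two leaves of $T$ lie in $V(P)$, replaces the segments by the corresponding traversal of the gadget path $p_1p_{out}uvp_{in}p_l$ or its reverse, and uses the fifth bullet precisely in the reverse direction (from $D'$ to $D$) to guarantee that the vertex after $p_{out}$ on $P$ and the vertex after $p_l$ on $Q$ are distinct, so that $R$ can be covered by two vertex-disjoint paths when reconstructing $T$ from $T'$.
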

\begin{proof}
Let $D'$ be the graph obtained by performing Reduction Rule \ref{twoDirectionalPathRule} to a path $P$ in $D$. Let $P_u$ be the path $p_1p_{out}uvp_{in}p_l$ and $Q_v$ be the path $p_{in}p_lvup_1p_{out}$. Notice that $P_u$ is the unique out-branching of $D'[V(P_u)]$ rooted at $p_1$ and that $Q_v$ is the unique out-branching of $D'[V(P_u)]$ rooted at $p_{in}$.

Let $T$ be an $r$-out-branching of $D$ with at least $k$ leaves. Notice that since $P$ is the unique out-branching of $D[V(P)]$ rooted at $p_1$, $Q$ is the unique out-branching of $D[V(P)]$ rooted at $p_{in}$ and $p_1$ and $p_{in}$ are the only vertices with in-arcs from the outside of $P$, $T[V(P)]$ is either a path or the union of two vertex disjoint paths. Thus, $T$ has at most two leaves in $V(P)$ and at least one of the following three cases must apply.

\begin{enumerate}
\setlength{\itemsep}{-2pt}
 \item $T[V(P)]$ is the path $P$ from $p_1$ to $p_l$.
 \item $T[V(P)]$ is the path $Q$ from $p_{in}$ to $p_{out}$.
 \item $T[V(P)]$ is the vertex disjoint union of a path $\tilde{P}$ that is a subpath of $P$ rooted at $p_1$, and a path $\tilde{Q}$ that is a subpath of $Q$ rooted at $p_{in}$.
\end{enumerate}

In the first case we can replace the path $P$ in $T$ by the path $P_u$ to get an $r$-out-branching of $D'$ with at least $k$ leaves. Similarly, in the second case, we can replace the path $Q$ in $T$ by the path $Q_v$ to get an $r$-out-branching of $D'$ with at least $k$ leaves. For the third case, observe that $\tilde{P}$ must contain $p_{out}$ since $p_{out} = p_1$ or $p_1$ appears before $p_{out}$ on $Q$ and thus, $p_{out}$ can only be reached from $p_1$. Similarly, $\tilde{Q}$ must contain $p_l$. Thus, $T \setminus R$ is an $r$-out-branching of $D \setminus R$. We build an $r$-out-branching $T'$ of $D'$ by taking $T \setminus R$ and letting $u$ be the child of $p_{out}$ and $v$ be the child of $p_l$. In this case $T$ and $T'$ have same number of leaves outside of $V(P)$ and $T$ has at most two leaves in $V(P)$ while both $u$ and $v$ are leaves in $T'$. Hence $T'$ has at least $k$ leaves.

The proof for the reverse direction is similar and can be found 
in Appendix~\ref{appendix:redrule}.
\end{proof}

We say that a digraph $D$ is a \emph{reduced instance} of
\textsc{Rooted $k$-Leaf Out-Branching} if none of the
reduction rules (Rules $1$--$5$) can be applied to $D$.
It is easy to observe from the description of the reduction rules  that we can apply them in polynomial time, resulting in the
following lemma.
\begin{lemma}
\label{lemma:polyreduction}
For a digraph $D$ on $n$ vertices we can obtain a reduced instance $D'$ in polynomial time.
\end{lemma}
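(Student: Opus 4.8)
The plan is to establish two facts: first, that for each of Rules~1--5 one can decide in polynomial time whether it is applicable and, if so, carry it out in polynomial time; and second, that only polynomially many rule applications can occur before no rule is applicable (or the algorithm outputs \textsc{No}). Together these prove the lemma, since the reduction procedure just loops ``find an applicable rule and apply it'' until stuck.

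For the first fact I would go rule by rule. Rules~1--4 are all phrased in terms of reachability and disconnection, so each reduces to a bounded number of reachability queries in $D$, or in $D$ after deleting one vertex or one arc, each answerable in linear time by a graph search: for Rule~1 test whether every vertex is reachable from $r$; for Rule~2 loop over all arcs $vu$ and test whether deleting $u$ destroys every $r$-$v$ path; for Rule~3 loop over all arcs $uv$ and count the vertices that deleting $uv$ disconnects from $r$; for Rule~4 loop over all arcs $vw$ and all sets $S$ of at most two in-neighbours of $w$ (there are $O(n^2)$ of them) and test whether deleting $S$ disconnects $v$. For Rule~5 I would enumerate all directed paths on $l\in\{7,8\}$ vertices ($O(n^8)$ of them) and, for each, check the five bullet conditions; these are again reachability/degree conditions on $D$, except for the uniqueness of the $p_1$- and $p_{in}$-out-branchings of $D[V(P)]$, but since $|V(P)|\le 8$ is a constant this is a constant-size computation (one can do better using that every vertex of $P$ has in- and out-degree at most $2$, but a crude polynomial bound is all we need). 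Finally, performing any of the rules --- deleting vertices/arcs, contracting an arc, or doing the surgery of Rule~5 --- is trivially polynomial.

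For the second fact I would use the potential $\mu(D)=c\,|V(D)|+|A(D)|$ for a sufficiently large constant, say $c=7$. This is a nonnegative integer, it is $O(n^2)$ on the input, and since $|V(D)|$ never increases along the process it stays $O(n^2)$ throughout. I would then check that every rule application strictly decreases $\mu$: Rule~1 in the out-branching setting halts the algorithm, and in the out-tree setting it deletes a vertex and its incident arcs, so $\mu$ drops by at least $c$; Rules~2 and~4 delete one arc and add nothing, so $\mu$ drops by exactly $1$; Rule~3 contracts an arc, removing a vertex without increasing the arc count, so $\mu$ drops by at least $c$; and Rule~5 deletes $R=P\setminus\{p_1,p_{in},p_{out},p_l\}$, which has at least $l-4\ge 3$ vertices, together with the arcs incident to $R$, and adds only the two new vertices $u,v$ and the six arcs $\{p_{out}u,uv,vp_{in},p_lv,vu,up_1\}$, so the vertex count drops by at least $1$ while the arc count grows by at most $6$, giving a net decrease of $\mu$ by at least $c-6\ge 1$. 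Hence at most $\mu(D)=O(n^2)$ applications occur, and since each costs polynomial time, the whole reduction runs in polynomial time.

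The step I expect to need the most care is the termination argument for Rule~5: it is the only rule that both destroys and creates vertices and arcs, so one must confirm that the net effect still decreases the measure --- this is precisely why the lower bound $|R|\ge 3$ (equivalently $l\ge 7$) is built into the rule and why the vertex weight $c$ in $\mu$ must be large enough to dominate the six newly added arcs. A minor secondary point is justifying that the two uniqueness conditions in Rule~5 are polynomial-time checkable, which is immediate because they concern an induced subgraph on a constant number of vertices.
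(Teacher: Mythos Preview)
Your proposal is correct. The paper itself does not prove this lemma at all; it simply states that ``it is easy to observe from the description of the reduction rules that we can apply them in polynomial time'' and moves on. Your argument supplies exactly the details the paper suppresses: a rule-by-rule verification that applicability can be tested in polynomial time, and a termination bound. In particular, your potential-function argument $\mu(D)=7|V(D)|+|A(D)|$ is a clean way to handle Rule~5, which is the only nontrivial case since it both deletes and creates vertices and arcs; the paper does not spell this out, and the observation that $|R|\ge l-4\ge 3$ (so the vertex count drops by at least one, dominating the at most six added arcs) is precisely the content that makes the ``easy'' claim actually true.
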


\section{Polynomial Kernel: Bounding a Reduced No-instance}
\label{section:polykernelno}
In this section we show that any reduced no-instance of
\textsc{Rooted $k$-Leaf Out-Branching} must have at most $O(k^3)$ vertices. In order to do so
we start with $T$, a BFS-tree rooted at $r$, of a reduced instance $D$ and look at a path $P$
of $T$ such that every vertex on $P$ has out-degree one in $T$. 

We bound the number of endpoints of arcs with one endpoint in $P$ and one endpoint outside of $P$ (Section \ref{subsec:inandoutnbh}). We then
use these results to bound the size of any maximal path with every vertex having out-degree one in $T$
(Section~\ref{subsection:boundpaths}). Finally, we combine these results to bound the size of any reduced no-instance of
\textsc{Rooted $k$-Leaf Out-Branching} by $O(k^3)$.

\subsection{Bounding the Number of Entry and Exit Points of a Path}
\label{subsec:inandoutnbh}

Let $D$ be a reduced no-instance, and $T$ be a BFS-tree rooted at $r$.
The BFS tree $T$ has at most $k-1$ leaves and hence at most $k-2$ vertices
with out-degree at least $2$ in $T$. Now, let $P = p_1p_2 \ldots p_l$
be a path in $T$ such that all vertices in $V(P)$ have out-degree $1$ in $T$ ($P$ does not need to be a maximal path of $T$).
Let $T_1$ be the subtree of $T$ induced by the vertices reachable from $r$ in $T$
without using vertices in $P$ and let $T_2$ be the subtree of $T$
rooted at the child $r_2$ of $p_l$ in $T$.
Since $T$ is a BFS-tree, it does not have any forward arcs, and thus $p_lr_2$ is the only arc from $P$ to $T_2$.
Thus all arcs originating in $P$ and ending outside of $P$ must have their endpoint in $T_1$.

\begin{lemma} \label{le:missPath}
Let $D$ be a reduced instance,
$T$ be a BFS-tree rooted at $r$, and
$P = p_1p_2 \ldots p_l$ be a path in $T$ such that all vertices in $V(P)$ have out-degree $1$ in $T$.
Let $up_{i} \in A(D)$, for some $i$ between $1$ and $l$, be an arc with $u \notin P$.
There is a path $P_{up_{i}}$ from $r$ to $p_i$ using the arc $up_i$,
such that $V(P_{up_i}) \cap V(P) \subseteq \{p_i, p_l\}$.
\end{lemma}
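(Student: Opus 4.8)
The plan is to locate $u$ within the partition $V(D) = V(T_1)\,\cup\,V(P)\,\cup\,V(T_2)$ — this is a genuine partition because every vertex of $P$ has out-degree one in $T$, so the only subtree of $T$ hanging off $P$ is the one rooted at the child $r_2$ of $p_l$, namely $T_2$ — and then to build $P_{up_i}$ as a path from $r$ to $u$ followed by the arc $up_i$. Note the reduction: it suffices to find a path $W$ from $r$ to $u$ with $V(W)\cap V(P)\subseteq\{p_l\}$, and with $p_l\notin V(W)$ in the special case $i=l$; because then $p_i\notin V(W)$, so $W$ together with $up_i$ is a path from $r$ to $p_i$ meeting $V(P)$ only in $\{p_i,p_l\}$. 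I will use freely that $D$ is reduced: in particular every vertex is reachable from $r$ (Reachability Rule), and, since $T$ is a BFS tree with no forward arcs, the only arc from $V(P)$ into $V(T_2)$ is the tree arc $p_lr_2$.

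If $u\in V(T_1)$ the lemma is immediate: the tree path of $T$ from $r$ to $u$ stays inside $V(T_1)$, hence avoids $V(P)$ entirely, and is the required $W$.

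The substantive case is $u\in V(T_2)$, where the tree path of $T$ from $r$ to $u$ runs through all of $P$ (as $r\leadsto p_1\to\cdots\to p_l\to r_2\leadsto u$) and must be rerouted. Here I would argue a dichotomy: either $u$ can be reached from $r$ by a path avoiding $V(P)$ — which finishes the proof — or every path from $r$ to $u$ goes through $p_l$, in fact through the arc $p_lr_2$ (the point being that, since $p_lr_2$ is the unique entrance to $V(T_2)$ from $V(P)$, any $r$--$u$ path entering $V(T_2)$ without using $p_lr_2$ must do so via a cross arc from $V(T_1)$, and can then be trimmed to one avoiding $V(P)$). In the second case $p_lr_2$ disconnects $u$ from $r$, and the same argument shows it also disconnects the $T_2$-parent of $u$ whenever $u\ne r_2$; so if $u\ne r_2$ the arc $p_lr_2$ disconnects at least two vertices and the Bridge Rule would have contracted it — contradicting reducedness. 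Thus $u=r_2$. Now $p_l$ disconnects $r_2$ from $r$, so if $i=l$ the arc $up_i=r_2p_l$ would be deleted by the Useless Arc Rule — contradiction — and if $i<l$ I take $P_{up_i}$ to be $W'$ followed by $r_2\to p_i$, where $W'$ is a path from $r$ to $p_l$ avoiding $p_1,\dots,p_{l-1}$, produced by running the very same dichotomy with $p_l$ in place of $u$: an in-neighbour of $p_l$ in $V(T_1)$ gives $W'$ outright, and otherwise $p_{l-1}p_l$ is forced to be the unique useful in-arc of $p_l$, whence (as every $r$--$r_2$ path goes through $p_l$) it disconnects both $p_l$ and $r_2$ and the Bridge Rule again yields a contradiction.

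The hard part is exactly this $u\in V(T_2)$ case: making the dichotomy rigorous — that is, showing that whenever $u$ cannot be reached while avoiding $V(P)$, some concrete arc ($p_lr_2$, or later $p_{l-1}p_l$) is forced to disconnect two or more vertices, so that reducedness is violated — requires a careful bookkeeping of how cross arcs from $V(T_1)$ can reach into $V(T_2)$ and of the in-neighbourhoods of $p_l$ and $r_2$. The degenerate situations ($r$ itself lying on $P$, $l$ small, or $r_2$ being an in-neighbour of $p_l$) have to be checked separately, but each falls quickly to the Reachability, Useless Arc, or Bridge Rule.
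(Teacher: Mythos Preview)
Your plan works in outline, but it is more circuitous than the paper's and leaves one step under-argued. The paper handles $u\in V(T_2)$ in a single move: if every $r$--$u$ path used the arc $p_{l-1}p_l$, that arc would disconnect both $p_l$ and $r_2$ from $r$, contradicting the Bridge Rule; so take any path $P'$ from $r$ to $u$ avoiding $p_{l-1}p_l$, let $x$ be its last $T_1$-vertex, and check that the suffix $P'[x,u]$ can meet $V(P)$ only in $\{p_l\}$ (any earlier $p_j$ would force $P'$ back through $p_{l-1}p_l$ to reach $T_2$). Prepending the $T_1$-tree-path to $x$ and appending $up_i$ finishes --- no split on whether $u=r_2$ or on $i=l$ is needed.

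Your route instead pivots on the arc $p_lr_2$, which forces the residual sub-case $u=r_2$ and then the auxiliary construction of $W'$. The gap is exactly there. You assert that if $p_l$ has no in-neighbour in $V(T_1)$ then ``$p_{l-1}p_l$ is forced to be the unique useful in-arc of $p_l$'', but this ignores in-neighbours of $p_l$ lying in $V(T_2)$: one can have $w\in V(T_2)$ with $wp_l\in A(D)$ and $w$ reachable from $T_1$ via a cross arc, so that $W'$ exists through $w$, while $p_l$ has no direct in-arc from $T_1$ and $p_lr_2$ still disconnects $r_2$. Your dichotomy for $W'$ therefore does not exhaust the cases. To close it you must argue that if no such $W'$ exists then \emph{every} $r$--$p_l$ path uses $p_{l-1}p_l$ (e.g.\ by looking at the last visit to $\{p_1,\dots,p_{l-1}\}$ on an arbitrary such path and ruling out each possible successor); but that is precisely the paper's argument, transplanted to $p_l$. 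Once you see this, it is cleaner to apply the Bridge Rule to $p_{l-1}p_l$ from the outset and skip the detour through $p_lr_2$ altogether.
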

\begin{proof}
Let $T_1$ be the subtree of $T$ induced by the vertices reachable from $r$ in $T$
without using vertices in $P$ and let $T_2$ be the subtree of $T$ rooted at the child $r_2$ of $p_l$ in $T$.
If $u \in V(T_1)$ there is a path from $r$ to $u$ avoiding $P$.
Appending the arc $up_i$ to this path yields the desired path $P_{up_i}$,
so assume $u \in V(T_2)$. If all paths from $r$ to $u$ use the arc $p_{l-1}p_l$
then $p_{l-1}p_l$ is an arc disconnecting $p_l$ and $r_2$ from $r$,
contradicting that Reduction Rule \ref{bridgeRule} can not be applied.
Let $P'$ be a path from $r$ to $u$ not using the arc $p_{l-1}p_l$.
Let $x$ be the last vertex from $T_1$ visited by $P'$.
Since $P'$ avoids $p_{l-1}p_l$ we know that $P'$ does not visit any vertices of $P \setminus \{p_l\}$ after $x$.
We obtain the desired path $P_{up_i}$ by taking the path from $r$ to $x$ in $T_1$
followed by the subpath of $P'$ from $x$ to $u$ appended by the arc $up_i$.
\end{proof}

\begin{corollary} \label{cor:fewInNodes}
Let $D$ be a reduced no-instance,
$T$ be a BFS-tree rooted at $r$ and $P = p_1p_2 \ldots p_l$
be a path in $T$ such that all vertices in $V(P)$ have out-degree $1$ in $T$.
There are at most $k$ vertices in $P$ that are endpoints of arcs originating outside of $P$.
\end{corollary}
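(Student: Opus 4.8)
The plan is to argue by contradiction, converting an excess of entry points on $P$ into an $r$-out-branching of $D$ with at least $k$ leaves. So I would assume $P$ has at least $k+1$ vertices that are heads of arcs coming from outside $V(P)$, collect these ``entry vertices'' into a set $S$, discard $p_l$ from it, and set $S' = S\setminus\{p_l\}$, so that $|S'|\ge k$. (Throwing out $p_l$ costs exactly one vertex, which is why the statement is phrased with $k+1$.) One small point worth recording up front: each $q\in S'$ is the head of an arc, hence $q\ne r$, since in a reduced instance $r$ has no in-arcs at all (apply Reduction Rule~\ref{uslessArcRule} with $u=r$: every path from $r$ to any $v$ trivially passes through $r$).

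Next I would, for each $q\in S'$, fix one arc $u_q q$ with $u_q\notin V(P)$ and apply Lemma~\ref{le:missPath} to obtain a path $P_q$ from $r$ to $q$ with $V(P_q)\cap V(P)\subseteq\{q,p_l\}$. The heart of the argument is to ``superpose'' these paths rather than try to glue them: let $D'$ be the digraph on vertex set $\bigcup_{q\in S'}V(P_q)$ with arc set $\bigcup_{q\in S'}A(P_q)$. It is a subgraph of $D$, and every vertex of $D'$ is reachable from $r$ inside $D'$ (each $P_q$ starts at $r$), so $D'$ has a spanning out-tree $T'$ rooted at $r$. The crucial claim is that every $q\in S'$ is a \emph{sink} of $D'$: an out-arc $qw$ of $q$ in $D'$ would lie on some $P_{q'}$, making $q$ a non-final vertex of $P_{q'}$; then $q\in V(P_{q'})\cap V(P)\subseteq\{q',p_l\}$, and since $q\ne p_l$ this forces $q=q'$, contradicting that $q$ is the last vertex of $P_{q'}$. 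Hence each $q\in S'$ is a leaf of $T'$, so $T'$ is an $r$-rooted out-tree of $D$ with at least $|S'|\ge k$ leaves. Finally, because $D$ is reduced, Reduction Rule~\ref{reachRule} is inapplicable and every vertex of $D$ is reachable from $r$, so Proposition~\ref{prop:grow} upgrades $T'$ to an $r$-out-branching of $D$ with at least $k$ leaves, contradicting that $D$ is a no-instance.

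The step I expect to be the only real obstacle is that the paths $P_q$ need not be pairwise vertex-disjoint — they may share vertices off $P$ and at $p_l$ — so one cannot simply merge them into a tree. Passing to the superposition digraph $D'$ and then taking an \emph{arbitrary} spanning out-tree of it is precisely what sidesteps this: the one structural fact we need, that every $q\in S'$ has out-degree $0$ in $D'$, is immune to overlaps because $P_q$ terminates at $q$ and, by Lemma~\ref{le:missPath}, no other $P_{q'}$ can meet $q$ once $q\ne p_l$. Everything else in the argument — reachability and the passage from out-tree to out-branching — is already supplied by the reduction rules and Proposition~\ref{prop:grow}, so no further work is needed there.
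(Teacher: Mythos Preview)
Your argument is correct and follows the same approach as the paper: set aside $p_l$, apply Lemma~\ref{le:missPath} to each remaining entry vertex, assemble the resulting paths into an $r$-out-tree in which all those entry vertices are leaves, and invoke Proposition~\ref{prop:grow}. The paper simply asserts that ``using these paths we can build an $r$-out-tree with every vertex in $S$ as a leaf''; your superposition-digraph construction and sink argument make that step explicit and rigorous, which is a welcome addition rather than a deviation.
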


\begin{proof}
Let $S$ be the set of vertices in $P \setminus \{p_l\}$
that are endpoints of arcs originating outside of $P$.
For the sake of contradiction suppose that there are at least $k+1$ vertices in $P$
that are endpoints of arcs originating outside of $P$. Then $|S| \geq k$.
By Lemma~\ref{le:missPath} there exists a path from the root $r$ to every vertex in $S$,
that avoids vertices of $P \setminus \{p_l\}$ as an intermediate vertex.
Using these paths we can build an $r$-out-tree with every vertex in $S$ as a leaf.
This $r$-out-tree can be extended to a $r$-out-branching with at least $k$
leaves by Proposition~\ref{prop:grow}, contradicting that $D$ is a no-instance.
\end{proof}

\begin{lemma} \label{le:fewOutVertices}
Let $D$ be a reduced no-instance, $T$ be a BFS-tree rooted at $r$ and
$P = p_1p_2 \ldots p_l$ be a path in $T$ such that all
vertices in $V(P)$ have out-degree $1$ in $T$.
There are at most $7(k-1)$ vertices outside of $P$ that are endpoints of arcs originating in $P$.
\end{lemma}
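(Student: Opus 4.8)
The plan is to argue by contradiction: if $P$ had more than $7(k-1)$ out-neighbours outside $V(P)$, we could build an $r$-out-branching of $D$ with at least $k$ leaves, contradicting that $D$ is a no-instance. I first reuse the set-up of the paragraph preceding Lemma~\ref{le:missPath}: let $T_1$ be the subtree of $T$ spanned by the vertices reachable from $r$ in $T$ without using $V(P)$, and let $T_2$ be the subtree of $T$ rooted at the child $r_2$ of $p_l$, so that $V(D)=V(T_1)\cup V(P)\cup V(T_2)$ (a disjoint union). Since $T$ is a BFS-tree it has no forward arcs, so every arc from a vertex of $P$ to $V(T_2)$ must end at $r_2$; hence the set $W$ of vertices outside $V(P)$ that receive an arc from $V(P)$ satisfies $W\subseteq V(T_1)\cup\{r_2\}$, and it suffices to bound $W_1:=W\cap V(T_1)$. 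For each $w\in W_1$ I fix one arc $a_w=u_ww$ with $u_w\in V(P)$, and let $R_q$ be the (possibly trivial) path in $T_1$ from $r$ to the parent $q$ of $p_1$ in $T$.

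I then split $W_1$ into $W_1^{\mathrm{off}}$, the vertices of $W_1$ not lying on $R_q$, and $W_1^{\mathrm{on}}$, those lying on $R_q$, and dispose of the first group. The arc set consisting of $R_q$, the path $q p_1 p_2\cdots p_l$, and the arcs $\{a_w : w\in W_1^{\mathrm{off}}\}$ is an $r$-out-tree: its spine $r\!\rightsquigarrow\! q\!\rightarrow\! p_1\!\rightarrow\!\cdots\!\rightarrow\! p_l$ is a simple path, and every $w\in W_1^{\mathrm{off}}$ is distinct, lies outside $V(R_q)\cup V(P)$, and is incident to only the in-arc $a_w$ in this set, so it hangs as a pendant leaf. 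All vertices of $D$ are reachable from $r$ because Reduction Rule~\ref{reachRule} does not apply, so by Proposition~\ref{prop:grow} this out-tree extends to an $r$-out-branching with at least $|W_1^{\mathrm{off}}|$ leaves; as $D$ is a no-instance, $|W_1^{\mathrm{off}}|\le k-1$.

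It remains to bound $|W_1^{\mathrm{on}}|$, which is the technical heart. These vertices all lie on the single $r$-rooted path $R_q$ of $T_1$, and each is a $T$-ancestor of every vertex of $P$; consequently none of them can be hung as a pendant leaf below $P$ without creating a cycle, so for each such $w$ we genuinely need a route into $P$ that avoids $w$. The key point is that such a route exists: for $w\in W_1^{\mathrm{on}}$ with $a_w=u_ww$, the vertex $w$ does not disconnect $u_w$ from $r$ — otherwise the arc $u_ww$ would have been removed by Reduction Rule~\ref{uslessArcRule} — so there is an $r$–$u_w$ path in $D$ avoiding $w$, and appending the suffix of $P$ from $u_w$ to $p_l$ reaches $p_l$ while leaving $w$ a potential leaf. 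The plan is to traverse $R_q$ from $r$ and, whenever a vertex $w$ of $W_1^{\mathrm{on}}$ is met, splice in such a detour so that $w$ becomes a leaf while the remainder of the construction stays connected to $r$; non-applicability of the Bridge Rule~\ref{bridgeRule} (and, when two ancestors must be routed around simultaneously, of the Avoidable Arc Rule~\ref{AvoidableArcRule}) keeps these detours from colliding destructively, and the bound of at most $k-2$ vertices of out-degree $\ge 2$ in $T$ controls how often a detour is forced through a branching vertex. Carrying this out produces an $r$-out-tree with at least a constant fraction of $|W_1^{\mathrm{on}}|$ minus $O(k)$ leaves, which via Proposition~\ref{prop:grow} and the no-instance assumption forces $|W_1^{\mathrm{on}}|=O(k)$; combining this with $|W_1^{\mathrm{off}}|\le k-1$ and the single extra vertex $r_2$, the careful accounting yields $|W|\le 7(k-1)$. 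I expect the main obstacle to be exactly this step — making the detours for different vertices of $W_1^{\mathrm{on}}$ coexist inside one out-tree without sacrificing previously created leaves — and it is this bookkeeping that pins down the constant $7$.
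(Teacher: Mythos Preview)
Your setup and the bound $|W_1^{\mathrm{off}}|\le k-1$ are correct and coincide with the paper's argument (your $R_q$ is the paper's path $P'$ from $r$ to $p_1$, minus the endpoint $p_1$). The gap is entirely in the treatment of $W_1^{\mathrm{on}}$. You correctly observe that for each $w\in W_1^{\mathrm{on}}$ Rule~\ref{uslessArcRule} guarantees an $r$--$u_w$ path avoiding $w$, but you give no mechanism for combining these detours into a single out-tree, and you yourself flag this as the unresolved step. The difficulty is real: a detour for one $w$ may pass through arbitrarily many other vertices of $W_1^{\mathrm{on}}$ as internal vertices, destroying them as leaves, and nothing in your outline bounds this interference. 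Invoking Rules~\ref{bridgeRule} and~\ref{AvoidableArcRule} in the abstract, without a structure on which to apply them, does not produce a bound, let alone the constant $7$.

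The paper's proof supplies the missing structural idea, and it is not a per-vertex detour argument. It splits $P'$ at its at most $k-2$ vertices of $T$-out-degree $\ge 2$ into subpaths $P_1,\ldots,P_t$ ($t\le k-1$), sets aside the last two vertices of each $P_i$ (at most $2(k-1)$ vertices total), and uses Rule~\ref{bridgeRule} to find, for each $P_i$, an arc from outside $P_i$ into one of those two last vertices. Via Lemma~\ref{le:missPath} this yields, for each $i$, a short bypass path that covers the remaining ``interval'' $I_i$ of $P_i$. Taking a minimal subfamily of bypasses covering all intervals and partitioning it into three classes modulo $3$ gives three alternative $r$--$p_1$ paths, each avoiding every interval in one class. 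Hanging the out-neighbours lying in the avoided intervals as leaves below $P$ then forces at most $k-1$ such out-neighbours per class, hence at most $3(k-1)$ on all intervals; adding the high-degree vertices, the set-aside endpoints, $r_2$, and your $k-1$ off-path vertices gives exactly $7(k-1)$. This interval-cover and $3$-colouring step is the idea your proposal lacks.
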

\begin{proof}
Let $X$ be the set of vertices outside $P$ which are out-neighbors of the vertices on $P$. Let $P'$ be the path from $r$ to $p_1$ in $T$ and $r_2$ be the unique child
of $p_l$ in $T$. First, observe that since there are no forward arcs, $r_2$ is the only out-neighbor of vertices in $V(P)$ in the subtree of $T$
rooted at $r_2$. In order to bound the size of $X$, we differentiate between two kinds of out-neighbors of vertices on $P$.
\begin{itemize}
\setlength{\itemsep}{-2pt}
\item Out-neighbors of $P$ that are not in $V(P')$.
\item Out-neighbors of $P$ in $V(P')$.
\end{itemize}
First, observe that $|X \setminus V(P')| \leq k-1$. Otherwise we could have made an $r$-out-tree with at least $k$ leaves by taking the path $P'P$ and
adding $X \setminus V(P')$ as leaves with parents in $V(P)$.

In the rest of the proof we bound $|X \cap V(P')|$. 
Let $Y$ be the set of vertices on $P'$ with out-degree at least $2$ in $T$ and let
$P_1,P_2,\ldots,P_t$ be the remaining subpaths of $P'$ when vertices in $Y$ are removed.
For every $i \leq t$, $P_i=v_{i1}v_{i2}\ldots v_{iq}$. We define the vertex set $Z$ to contain the two last
vertices of each path $P_i$. The number of vertices with out-degree at least $2$ in $T$ is upper bounded by
$k-2$ as $T$ has at most $k-1$ leaves. Hence, $|Y|\leq k-2$, $t \leq k-1$ and $|Z|\leq 2(k-1)$.

%

\begin{claim}
\label{claim:arcexists}
For every path $P_i=v_{i1}v_{i2}\ldots v_{iq}$, $1\leq i\leq t$, there is either an arc $u_iv_{iq-1}$ or  $u_iv_{iq}$ where $u_i\notin V(P_i)$.
\end{claim}
To see the claim observe that the removal of arc $v_{iq-2}v_{iq-1}$ does not disconnect the root $r$
from both $v_{iq-1}$ and $v_{iq}$ else Rule \ref{bridgeRule} would have been applicable to our reduced
instance. For brevity assume that $v_{iq}$ is reachable from $r$ after the removal of arc $v_{iq-2}v_{iq-1}$. Hence
there exists a path from $r$ to $v_{iq}$. Let $u_iv_{iq}$ be the last arc of this path. The fact that the BFS tree $T$ does not
have any forward arcs implies that  $u_i\notin V(P_i)$.

To every path  $P_i=v_{i1}v_{i2}\ldots v_{iq}$, $1\leq i\leq t$, we associate an interval $I_i=v_{i1}v_{i2}\ldots v_{iq-2}$ 
and an arc $u_iv_{iq'}$, $q'\in \{q-1,q\}$. This arc exists by Claim~\ref{claim:arcexists}.
Claim~\ref{claim:arcexists} and Lemma~\ref{le:missPath} together imply that for every path $P_i$ there is a path $P_{ri}$
from the root $r$ to $v_{iq'}$ that does not use any vertex in $V(P_i) \setminus \{v_{iq-1},v_{iq}\}$ as an intermediate vertex.
That is, $V(P_{ri}\cap (V(P_i) \setminus \{v_{iq-1},v_{iq}\})=\emptyset$.

Let $P_{ri}'$ be a subpath of $P_{ri}$ starting at a vertex $x_i$ before $v_{i1}$ on $P'$ and ending in a vertex $y_i$ after $v_{iq-2}$ on $P'$.
We say that a path $P_{ri}'$ {\em covers} a vertex $x$ if $x$ is on the subpath of $P'$
between $x_i$ and $y_i$ and we say that it {\em covers} an interval $I_j$ if $x_i$ appears before $v_{j1}$ on the path $P'$
and $y_i$ appears after $v_{jq-2}$ on $P'$. Observe that the path $P_{ri}'$ covers the interval $I_i$.

Let ${\cal P}= \{P_1',P_2',\ldots,P_l'\}\subseteq \{P_{r1}',\ldots, P_{rt}'\}$ be a minimum collection of paths,
such that every interval $I_i$, $1\leq i \leq t$, is covered by at least one of the paths in $\cal P$.
Furthermore, let the paths of ${\cal P}$ be numbered by the appearance of their first vertex on $P'$.
The minimality of $\cal P$ implies that for every $P_i' \in \cal P$ there is an interval $I_i'\in \{I_1,\ldots,I_t\}$
such that $P_i'$ is the only path in $\cal P$ that covers $I_i'$.
\begin{claim}
For every $1\leq i \leq l$, no vertex of $P'$ is covered by both  $P_i'$ and $P_{i+3}'$.
\end{claim}
The path $P_{i+1}'$ is the only path in $\cal P$
that covers the interval $I_{i+1}'$ and hence $P_i'$ does not cover the last vertex of $I_{i+1}'$.
Similarly $P_{i+2}'$ is the only path in $\cal P$ that covers the interval $I_{i+2}'$ and hence
$P_{i+3}'$ does not cover the first vertex of $I_{i+2}'$. Thus the set of vertices covered by both $P_i'$ and $P_{i+3}'$ is empty.

Since paths $P_i'$ and $P_{i+3}'$ do not cover a common vertex, we have that
the end vertex of $P_i'$ appears before the start vertex of $P_{i+3}'$
on $P'$ or is the same as the start vertex of $P_{i+3}'$. Partition the paths
of ${\cal P}$ into three sets ${\cal P}_0,{\cal P}_1,{\cal P}_2$, where path $P_i' \in {\cal P}_{i\,mod\,3}$.
Also let ${\cal I}_i$ be the set of intervals covered by ${\cal P}_i$. Observe that every interval $I_j$,
$1\leq j \leq t$, is part of some ${\cal I}_i$ for $i\in \{0,1,2\}$.

Let $i \leq 3$ and consider an interval $I_j \in {\cal I}_i$. There is a path $P_{j'} \in {\cal P}_i$ that
covers $I_j$ such that both endpoints of $P_{j'}$ and none of the inner vertices of $P_{j'}$ lie on $P'$.
Furthermore for any pair of paths $P_a$, $P_b \in {\cal P}_i$ such that $a < b$, there is a subpath in $P'$ from the
endpoint of $P_a$ to the starting point of $P_b$. Thus for every $i \leq 3$ there is a path $P_{i}^{*}$ from the root $r$ to $p_1$
which does not use any vertex of the intervals covered by the paths in ${\cal P}_i$.


We now claim that the total number of vertices on intervals $I_j$, $1\leq j \leq t$,
which are out-neighbors of vertices on $V(P)$ is bounded by $3(k-1)$. If not, then for some
$i$, the number of out-neighbors in ${\cal I}_i$ is at least $k$. Now we can make an $r$-out-tree
with $k$ leaves by taking any $r$-out-tree in $D[V(P_i^*)\cup V(P)]$ and adding the out-neighbors of the
vertices on $V(P)$ in ${\cal I}_i$ as leaves with parents in $V(P)$. 

Summing up the obtained upper bounds yields $|X|\leq (k-1)+|\{r_2\}|+|Y|+|Z|+3(k-1)\leq (k-1)+1+(k-2)+2(k-1)+3(k-1)= 7(k-1)$, concluding the proof.
%
\end{proof}

\noindent\textbf{Remark:} Observe that the path $P$ used in Lemmas~\ref{le:missPath} and ~\ref{le:fewOutVertices} and Corollary~\ref{cor:fewInNodes}
need not be a maximal path in $T$ with its vertices having out-degree one in $T$.

\subsection{Bounding the Length of a Path: On Paths through Nice Forests}
\label{subsection:boundpaths}

For a reduced instance $D$, a BFS tree $T$ of $D$ rooted at $r$, let $P = p_1p_2 \ldots p_l$ be a path in $T$ such that all vertices in $V(P)$ have out-degree  $1$ in $T$, and let $S$ be the set of vertices in $V(P) \setminus \{p_l\}$ with an in-arc from the outside of $P$.

\begin{definition}
A subforest $F=(V(P),A(F))$ of $D[V(P)]$ is said to be \emph{nice forest of $P$} if the following three properties are satisfied: 
(a) $F$ is a forest of directed trees rooted at vertices in $S$; 
(b)  If $p_ip_j \in A(F)$ and $i < j$ then $p_i$ has out-degree at least $2$ in $F$ or $p_j$ has in-degree $1$ in $D$; and (c) If $p_ip_j \in A(F)$ and $i > j$ then for all $q>i$, $p_qp_j \notin A(D)$.

\end{definition}

In order to bound the size a reduced no-instance $D$ we are going to consider a nice forest with the maximum number of leaves. However, in order to do this, we first need to show the existence of a nice forest of $P$.

In the following discussion let $D$ be a reduced no-instance, $T$ be a BFS tree $T$ of $D$ rooted at $r$,  $P = p_1p_2 \ldots p_l$ be a path in $T$ such that all vertices in $V(P)$ have out-degree  $1$ in $T$ and $S$ be the set of vertices in $V(P) \setminus \{p_l\}$ with an in-arc from the outside of $P$.

\begin{lemma} $[\star]$
\label{lemma:niceforest}
There is a nice forest in $P$.
\end{lemma}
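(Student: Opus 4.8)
The plan is to start from the path $P$ itself, viewed as the directed subgraph with arc set $\{p_1p_2,\dots,p_{l-1}p_l\}$, and massage it step by step into a forest satisfying the three properties of a nice forest. Recall $S$ is the set of vertices of $V(P)\setminus\{p_l\}$ having an in-arc from outside $P$. The first move is to make $F$ a forest of trees rooted exactly at the vertices of $S$: delete from $P$ the arc $p_{i-1}p_i$ for every $p_i\in S$, and also delete the incoming arc of $p_l$ if $p_l$ would otherwise not be reachable from $S$ — more precisely, since every vertex of $V(P)$ must be reachable from $r$ in $D$ (Reduction Rule~\ref{reachRule}) and the only in-arcs to $P$ from outside land in $S$ (here we use that in the setting of the lemma the in-neighbours of $P$ from outside are confined to $S\cup\{p_l\}$, and $p_l$ can be handled as a singleton component or attached appropriately), the resulting forest has all its roots in $S$. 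This gives property (a) but may violate (b) and (c), which we then repair by local exchange arguments.

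The heart of the argument is a \emph{potential/exchange} argument. For property (c): suppose $p_ip_j\in A(F)$ is a backward arc ($i>j$) and there is some $q>i$ with $p_qp_j\in A(D)$. Then replace the arc $p_ip_j$ in $F$ by $p_qp_j$; this keeps $F$ a forest rooted in $S$ (the set of vertices reachable from $S$ is unchanged, since $p_i$ and $p_q$ both lie in $P$ and the component structure is preserved — one must check $p_q$ is in a component that still reaches it, which follows because we can always choose the exchange so as not to create a cycle, picking among all admissible in-arcs of $p_j$ the one from the \emph{highest-indexed} source). Iterating, every backward arc of $F$ eventually comes from the highest-indexed possible source, giving (c). For property (b): suppose $p_ip_j\in A(F)$ with $i<j$, $p_i$ has out-degree $1$ in $F$, and $p_j$ has in-degree $\ge 2$ in $D$. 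Then $p_j$ has another in-neighbour $p_m$ in $D$; reconnect $p_j$ to $p_m$ instead of $p_i$. Again one argues this preserves being an $S$-rooted forest and does not introduce a cycle (choosing $p_m$ appropriately, e.g. to respect the ordering constraint so that the move is consistent with the eventual (c)-cleanup). Each such step strictly decreases a carefully chosen measure — for instance the number of violating arcs, or a lexicographic measure on $(\text{\# of }(b)\text{-violations},\ \sum \text{source indices of backward arcs})$ — so the process terminates, and the terminal forest is nice.

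The main obstacle I expect is the \emph{bookkeeping to show the two repair operations do not interfere and jointly terminate}: a move fixing a (b)-violation could in principle create a new (c)-violation or vice versa, so one needs a single monotone potential on which \emph{both} operations make progress (or a two-phase argument where phase~1 fixes (c) via "always take the highest source" and one then checks phase~2's (b)-fixing moves never reintroduce a (c)-violation — this is true if the (b)-fixing move is itself required to pick the highest admissible source). A secondary subtlety is guaranteeing acyclicity throughout: since all arcs involved have both endpoints on the linearly ordered vertex set $V(P)$, a directed cycle would have to use at least one backward arc, and the "highest source" rule combined with the forest structure rules this out; this should be spelled out but is routine. Once termination and invariants are in place, the terminal $F$ satisfies (a), (b), (c) by construction, proving the lemma.
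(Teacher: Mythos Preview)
Your iterative-repair plan would, once the dust settles, arrive at the same object the paper writes down in one line: for every $p_s\notin S$, set the parent of $p_s$ in $F$ to be the $p_t$ of \emph{largest} index $t$ with $p_tp_s\in A(D)$. With this direct definition properties~(b) and~(c) come for free (if the maximal $t$ equals $s-1$ then $p_s$ has in-degree~$1$ in $D$, which gives~(b); if $t>s$ then~(c) holds by maximality), so there is nothing to repair and no potential function to design. Your two-phase exchange process is thus an unnecessarily complicated route to the same destination.

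The only substantive step is proving that this $F$ is acyclic, and here your proposal has a genuine gap: you call acyclicity ``routine'' and credit it to the linear order on $V(P)$ together with the highest-source rule, but that alone is false. If, say, $p_3p_4$ is the unique in-arc of $p_4$ in $D[V(P)]$ while $p_4p_3\in A(D)$, the highest-source rule happily produces the $2$-cycle $p_3\to p_4\to p_3$. What actually excludes such cycles is that $D$ is a \emph{reduced} instance. The paper takes the highest-indexed vertex $p_i$ on a putative cycle, observes that its $F$-parent must be $p_{i-1}$ (no forward arcs in a BFS tree), and that maximality of $i-1$ together with $p_i\notin S$ means $p_{i-1}$ disconnects $p_i$ from $r$; hence Rule~\ref{uslessArcRule} forbids the arc $p_ip_{i-1}$. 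This forces the $F$-parent of $p_{i-1}$ to be $p_{i-2}$, and now the arc $p_{i-2}p_{i-1}$ disconnects both $p_{i-1}$ and $p_i$ from $r$, contradicting Rule~\ref{bridgeRule}. Your outline never invokes Rules~\ref{uslessArcRule} or~\ref{bridgeRule}, and without them neither the direct construction nor your exchange process is guaranteed to yield a forest.
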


For a nice forest $F$ of $P$, we define the set of \emph{key} vertices of $F$ to be the set of vertices in $S$, the leaves of $F$, the vertices of $F$ with out-degree at least $2$ and the set of vertices whose parent in $F$ has out-degree at least $2$.

\begin{lemma} $[\star]$ 
\label{le:keyVertexCount} Let $F$ be a nice forest of $P$. There are at most $5(k-1)$ key vertices of $F$. \end{lemma}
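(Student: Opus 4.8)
The plan is to bound each of the four categories making up the key set of a nice forest $F$ separately, and to observe that the "out-degree $\geq 2$" vertices are what control everything else. First I would record the trivial bound $|S| \leq k$, which is immediate from Corollary~\ref{cor:fewInNodes} (the vertices of $S$ are exactly those vertices of $V(P) \setminus \{p_l\}$ receiving an arc from outside $P$). Next, since $F$ is a forest of directed trees rooted at the vertices of $S$, the number of leaves of $F$ is at most $|S|$ plus the number of branching vertices (vertices of out-degree $\geq 2$); more precisely, in any forest the number of leaves is at most (number of roots) $+$ (number of internal vertices with out-degree $\geq 2$), because each such branching vertex contributes at most one extra leaf beyond what a collection of paths would give. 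So it remains to bound $B$, the number of vertices of $F$ with out-degree at least $2$, since the fourth category — children of branching vertices — has size at most $\sum_{v \in B}(\text{outdeg}_F(v))$, and once we know leaves, roots, and $B$ we can control that sum too (the children-of-branching-vertices are themselves either leaves, branching vertices, or have a unique child, and the total number of such children is at most the number of leaves plus the number of branching vertices, roughly).

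The crux is therefore to show $B \leq$ something like $2(k-1)$. Here I would exploit property (b) of a nice forest: every arc $p_ip_j \in A(F)$ with $i<j$ either emanates from a branching vertex or lands on a vertex $p_j$ of in-degree $1$ in $D$. Consider a branching vertex $p_i$ of $F$; at least one of its $F$-children, say $p_j$, is reached by a "forward" arc $p_ip_j$ with $j > i$ (since $P$ has no forward arcs in $T$ but $F \subseteq D[V(P)]$ may use arcs that are forward along the ordering of $P$; actually at most one child of $p_i$ in $F$ can be $p_{i+1}$, so if $p_i$ has out-degree $\geq 2$ at least one child is a non-consecutive vertex). Each such non-consecutive forward arc $p_ip_j$ of $F$ corresponds to $p_i$ having out-degree $\geq 2$ in $F$ by property (b) automatically, so this does not immediately give a counting handle. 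Instead, the right move is to show that the arcs of $F$ that are \emph{not} of the form $p_ip_{i+1}$ are few: each non-consecutive arc $p_ip_j$ with $i<j$ "skips over" the vertices $p_{i+1},\dots,p_{j-1}$, and by property (b) either $p_i$ is branching or $p_j$ has in-degree $1$ in $D$. I expect one must combine this with the path-covering machinery from the proof of Lemma~\ref{le:fewOutVertices}: the skipped vertices $p_{i+1},\dots,p_{j-1}$ cannot all have in-arcs from outside $P$ (they are not in $S$, else they would be roots and $F$ would not skip them via an arc landing beyond), and the "entry points into $P$" are limited to $k$ vertices by Corollary~\ref{cor:fewInNodes}. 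So the non-consecutive arcs of $F$ are confined to be anchored near the $\leq k$ vertices of $S$ and the $\leq k-2$ branching vertices of the BFS tree.

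The main obstacle I anticipate is making the above counting of non-consecutive arcs of $F$ rigorous without circularity — property (b) lets a branching vertex have arbitrarily many forward children, so one cannot bound $B$ purely from the arcs; one must instead bound $B$ by relating branching vertices of $F$ back to the structure of $P$ inside the BFS tree $T$ and to $S$. Concretely, I would partition $V(P)$ by the at most $k-2$ vertices of $T$-out-degree $\geq 2$ on $P$ (but $P$ has all out-degree $1$ in $T$, so actually there are none — the branching of $F$ is "new" branching not present in $T$), and argue that each branching vertex of $F$ either is a root in $S$ or witnesses an arc from outside $P$ into some descendant, using Lemma~\ref{le:missPath} and Proposition~\ref{prop:grow} to derive a contradiction with $D$ being a no-instance if $B$ were too large — exactly the extremal "build an out-tree with $k$ leaves" argument used in Corollary~\ref{cor:fewInNodes} and Lemma~\ref{le:fewOutVertices}. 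Assembling: $|S| \leq k$, leaves $\leq |S| + B$, children-of-branching $\leq |S| + B$ (each is a leaf, a branching vertex, or leads by a unique-child path to one), and $B \leq 2(k-1)$ from the no-instance extremal argument, giving a total of at most $5(k-1)$ key vertices after collecting the constants and accounting for overlaps between the four categories.
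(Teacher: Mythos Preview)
Your overall strategy---bound each of the four categories of key vertices and use the no-instance hypothesis via an extremal out-tree---is the right one, but the execution has a genuine gap and a detour.

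The detour is the discussion of property~(b) and non-consecutive arcs of $F$. None of that is needed here; property~(b) plays no role in counting key vertices (it is used only in the later structural lemmas about $P'$). So the paragraph attempting to bound $B$ through ``skipped vertices'' and the path-covering machinery of Lemma~\ref{le:fewOutVertices} is heading in the wrong direction.

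The gap is that you never name the out-tree whose leaf count actually bounds both the leaves and the branching vertices of $F$. The paper's proof is short precisely because it does this in one stroke: take the $r$-out-tree $T_S$ produced in the proof of Corollary~\ref{cor:fewInNodes} (it reaches every vertex of $S$ from $r$, with all of $S$ as leaves and meeting $P$ only in $S\cup\{p_l\}$), and graft $F$ onto it to form $T_F=(V(T_S)\cup V(P),\,A(T_S)\cup A(F))$. Now every leaf of $F$ except possibly $p_l$ is a leaf of $T_F$, and every branching vertex of $F$ is a branching vertex of $T_F$. Since $D$ is a no-instance, $T_F$ has at most $k-1$ leaves and at most $k-2$ branching vertices; hence $F$ has at most $k$ leaves and at most $k-2$ branching vertices. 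The children-of-branching count then follows from the standard forest identity $\sum_{v\in B}\mathrm{outdeg}_F(v)=|B|+(\text{leaves of }F)-(\text{roots of }F)\le (k-2)+k=2k-2$, and together with $|S|\le k$ the four bounds sum to $5(k-1)$.

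Incidentally, your inequality ``leaves $\le$ (number of roots) $+\,|B|$'' is false as stated: a single branching vertex of out-degree $d$ already produces $d$ leaves. The correct identity is the one above, and you need the bound on leaves of $F$ (coming from $T_F$) to control the children-of-branching term, not the other way around.
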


We can now turn our attention to a nice forest $F$ of $P$ with the maximum number of leaves. Our goal is to show that if the key points of $F$ are to spaced out on $P$ then some of our reduction rules must apply. First, however, we need some more observations about the interplay between $P$ and $F$.

\begin{obs} \label{obs:uniquePath}{\rm [Unique Path]}
For any two vertices $p_i$, $p_j$ in $V(P)$ such that $i < j$,
$p_ip_{i+1}\ldots p_j$ is the only path from $p_i$ to $p_j$ in $D[V(P)]$.
\end{obs}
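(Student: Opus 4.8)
The plan is to prove Observation~\ref{obs:uniquePath} by exploiting the structural facts we already have about $P$: namely that $P = p_1p_2\ldots p_l$ consists of vertices of out-degree $1$ in the BFS tree $T$, and crucially that $T$, being a BFS tree, has \emph{no forward arcs}. The statement is that within the induced subdigraph $D[V(P)]$, for $i<j$ the only $p_i$-to-$p_j$ path is the sub-path $p_ip_{i+1}\ldots p_j$ of $P$ itself.

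\textbf{Step 1: Locate the forbidden arcs.} First I would observe that since $T$ has no forward arcs and $P$ is a directed path in $T$, there is no arc $p_ap_b \in A(D)$ with $b > a+1$; equivalently, every arc of $D[V(P)]$ is either a ``consecutive'' arc $p_ap_{a+1}$ (which lies on $P$) or a ``backward'' arc $p_ap_b$ with $b \le a$ (in fact $b < a$, since there are no loops). This is the key consequence of the BFS property. I expect this is the main step, though it is short: it requires recalling precisely that a BFS tree's non-tree arcs go either ``sideways/backward'' or ``up'', never ``down'' to a strict descendant, so any arc from $p_a$ into $V(P)$ with target strictly past $p_{a+1}$ would be a forward arc of $T$, a contradiction.

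\textbf{Step 2: Any $p_i \to p_j$ path must be $P[p_ip_j]$.} Now suppose $W = w_0 w_1 \ldots w_m$ is any directed path in $D[V(P)]$ with $w_0 = p_i$, $w_m = p_j$, $i < j$. I would argue by a simple monotonicity/index-chasing argument: write $w_t = p_{\sigma(t)}$ for an index sequence $\sigma$. By Step~1, each arc $w_{t}w_{t+1}$ of $W$ is either of the form $p_a p_{a+1}$ (so $\sigma(t+1) = \sigma(t)+1$) or a backward arc (so $\sigma(t+1) \le \sigma(t)$, actually $\sigma(t+1) < \sigma(t)$). Hence along $W$ the index can increase by exactly one or strictly decrease. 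Since $W$ is a path (no repeated vertices), and we need to get from index $i$ to index $j > i$, I would show that no backward step can ever be used: the cleanest way is to take the \emph{first} backward step, say from $p_a$ to $p_b$ with $b < a$; after it, $W$ must still reach $p_j$ with $j>i \ge$ (the starting index) but we have now ``fallen behind.'' More carefully: let $a = \max_t \sigma(t)$ be the largest index visited; $W$ reaches $p_a$ at some time $t_0$, and since $j \le a$, after $t_0$ the path must descend back down to $p_j$. But the step leaving $p_a$ is backward (it can't be $p_ap_{a+1}$ since $a$ is maximal and $p_{a+1}$ would exceed it, or $a = l$), landing at some $p_b$, $b < a$; and from then on every index visited is $\le b < a$; yet the vertices $p_{b+1}, \ldots, p_{a-1}$ between the descent target and the peak that were \emph{not} already on $W$ before time $t_0$ — in particular, to have climbed monotonically (by $+1$ steps) up to $p_a$, the path must have passed through a contiguous run $p_c, p_{c+1}, \ldots, p_a$ ending at the peak; its predecessor $p_{c}$ is reached by a backward arc or is $p_i$ itself. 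Unwinding this, the only way to have $w_0 = p_i < p_j = w_m$ with the peak at $p_a \ge j$ while never repeating a vertex is to never take a backward step at all, forcing $\sigma(t) = i + t$ for all $t$ and hence $W = P[p_ip_j]$.

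\textbf{On the main obstacle:} the conceptual content is entirely in Step~1 (no forward arcs $\Rightarrow$ every non-$P$ arc inside $V(P)$ is backward); Step~2 is a routine ``a path whose steps go up by one or strictly down, from a lower index to a higher index, without repeating vertices, must go straight up'' argument, which I would phrase via the maximum-index vertex on the path as sketched above, or equivalently by induction on $j - i$. I would write it concisely: suppose for contradiction $W \ne P[p_ip_j]$; let $p_a$ be the highest-indexed vertex on $W$; if $a = j$ then by Step~1 the sub-path of $W$ from $p_i$ up to its first visit of $p_a=p_j$ uses only $+1$-arcs hence equals $P[p_ip_j]$, and since $W$ is a path it ends there, contradiction; if $a > j$, the arc out of $p_a$ on $W$ is backward to some $p_b$ with $i \le b < a$ (using $b \ge i$ since... actually $b$ could be anything $<a$), and continuing, $W$ can never again reach index $a$, yet must reach $j$; but the ascent to $p_a$ used vertices $p_a, p_{a-1}, \ldots$ contiguously down to the point where $W$ first ``jumped up'', and a short case analysis shows some vertex gets repeated — contradiction. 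Either way $W = P[p_ip_j]$, proving the observation.
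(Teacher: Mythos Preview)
Your Step~1 is exactly the paper's key observation: since $T$ is a BFS tree it has no forward arcs, so every arc of $D[V(P)]$ is either $p_ap_{a+1}$ or a backward arc. From here, however, the paper finishes in one line where you spend a page. The paper simply notes that for each $q$ with $i \le q < j$, the set $X_q = \{p_1,\ldots,p_q\}$ has $p_qp_{q+1}$ as its \emph{only} outgoing arc into $V(P)\setminus X_q$; hence any $p_i$-to-$p_j$ path in $D[V(P)]$ must use every arc $p_qp_{q+1}$ for $q=i,\ldots,j-1$, and being a simple path it is therefore $P[p_ip_j]$.

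Your Step~2, by contrast, attempts an index-chasing argument via the maximal index on $W$, and as written it is incomplete: the claim ``the sub-path of $W$ from $p_i$ up to its first visit of $p_a=p_j$ uses only $+1$-arcs'' is asserted without proof, and the case $a>j$ ends with ``a short case analysis shows some vertex gets repeated,'' which is precisely the content that needs to be supplied. Both claims are true (one can show by downward induction that the arc into $p_{a-k}$ is forced to be $p_{a-k-1}p_{a-k}$, so $W$ begins $p_ip_{i+1}\cdots p_a$ and hence already visits $p_j$ before reaching $p_a$), but this is substantially more work than the cut argument. Replace your Step~2 with the cut observation above and the proof becomes two sentences.
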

\begin{proof}
As $T$ is a BFS-tree it has no forward arcs.
So any vertex set $X = \{p_1, p_2, \ldots, p_q\}$ with $q < |V(P)|$,
the arc $p_qp_{q+1}$ is the only arc in $D$ from a vertex in $X$ to a vertex in $V(P) \setminus X$.
\end{proof}

\begin{corollary} $[\star]$ \label{cor:noForward}
No arc $p_{i}p_{i+1}$ is a forward arc of $F$.
\end{corollary}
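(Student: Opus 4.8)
The plan is to argue by contradiction, using only Observation~\ref{obs:uniquePath} together with the fact that a nice forest $F$ is by definition a subforest of $D[V(P)]$, so every arc and every directed path of $F$ lives inside $D[V(P)]$.

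First I would unwind the definitions. Suppose some arc $p_i p_{i+1}$ were a forward arc of $F$. Instantiating the definition of forward arc from the Preliminaries with the out-forest $F$ in place of the out-tree, this means $p_i p_{i+1}\in A(D)\setminus A(F)$ and $p_i$ is an ancestor of $p_{i+1}$ in $F$; in particular $p_i$ and $p_{i+1}$ lie in the same tree of $F$ and there is a directed path $\pi$ from $p_i$ to $p_{i+1}$ in $F$. Since $p_i p_{i+1}\notin A(F)$, the path $\pi$ cannot be the single arc $p_i p_{i+1}$, so $\pi$ has length at least $2$.

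Next I would derive the contradiction: because $A(F)\subseteq A(D[V(P)])$ and $V(F)=V(P)$, the path $\pi$ is a path from $p_i$ to $p_{i+1}$ inside $D[V(P)]$. By Observation~\ref{obs:uniquePath} (with $j=i+1$), the only path from $p_i$ to $p_{i+1}$ in $D[V(P)]$ is the one-arc path $p_i p_{i+1}$, contradicting that $\pi$ has length at least $2$. Hence no arc $p_i p_{i+1}$ is a forward arc of $F$.

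I do not expect any genuine obstacle here; the proof is essentially a one-line consequence of the uniqueness observation. The only point that needs a word of care is the degenerate possibility that $p_i$ is an ancestor of $p_{i+1}$ via a length-one path, but then $p_i p_{i+1}\in A(F)$, and a forward arc is required to lie in $A(D)\setminus A(F)$, so that case is excluded by definition as well.
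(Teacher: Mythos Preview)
Your proposal is correct and essentially identical to the paper's own proof: both argue that if $p_ip_{i+1}$ were a forward arc of $F$ then there would be a directed path from $p_i$ to $p_{i+1}$ in $F\subseteq D[V(P)]$, which by Observation~\ref{obs:uniquePath} must be the single arc $p_ip_{i+1}$, contradicting $p_ip_{i+1}\in A(D)\setminus A(F)$. Your extra remark about the degenerate length-one case is just a more explicit version of the same contradiction.
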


\begin{obs} \label{obs:longJump}
Let $p_tp_j$ be an arc in $A(F)$ such that neither $p_t$ nor $p_j$ are key vertices,
and $t \in \{j-1,j+1,\ldots,l\}$.
Then for all $q>t$, $p_qp_j \not\in A(D)$.
\end{obs}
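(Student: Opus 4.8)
The plan is to dispose of the statement by a short case analysis on the relative order of $t$ and $j$, using only the defining properties of a \emph{nice forest} and the definition of \emph{key} vertex. Since $p_tp_j$ is an arc of $F\subseteq D[V(P)]$ we have $t\neq j$, so the hypothesis $t\in\{j-1,j+1,\ldots,l\}$ leaves exactly two cases to handle: $t\geq j+1$, and $t=j-1$.

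If $t\geq j+1$, then $p_tp_j\in A(F)$ is an arc with $t>j$, and property (c) in the definition of a nice forest applies verbatim: it gives $p_qp_j\notin A(D)$ for every $q>t$. Nothing about key vertices is needed in this case.

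The remaining case is $t=j-1$, where $p_{j-1}p_j$ is an arc of $F$ with the smaller index on the tail. Here I would invoke property (b) of a nice forest applied to this arc: either $p_{j-1}=p_t$ has out-degree at least $2$ in $F$, or $p_j$ has in-degree $1$ in $D$. The first alternative would make $p_t$ a key vertex, since key vertices include every vertex of $F$ of out-degree at least $2$; this contradicts the hypothesis that $p_t$ is not key. Hence $p_j$ has in-degree exactly $1$ in $D$. Now $p_{j-1}p_j$ is an arc of the path $P$, and $P\subseteq T\subseteq D$, so $p_{j-1}$ is an in-neighbour of $p_j$ in $D$; by in-degree $1$ it is the unique in-neighbour. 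Consequently, for any $q$, $p_qp_j\in A(D)$ forces $q=j-1$, and in particular $p_qp_j\notin A(D)$ for every $q>t=j-1$, which is what we want.

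The only point requiring any care is the case $t=j-1$, and there the single idea is that the non-key hypothesis on $p_t$ forecloses the out-degree branch of property (b), leaving the in-degree-one conclusion; everything else is bookkeeping about indices along the path $P$. I do not expect a genuine obstacle here.
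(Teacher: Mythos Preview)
Your argument is correct and is exactly what the paper has in mind: the paper's own ``proof'' merely says that the observation follows directly from the definitions of a nice forest and of key vertices, and your two-case analysis (property~(c) for $t>j$, property~(b) together with the non-key hypothesis on $p_t$ for $t=j-1$) is precisely the unpacking of those definitions.
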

Observation~\ref{obs:longJump} follows directly from the definitions of a 
nice forest and key vertices.

\begin{obs} $[\star]$\label{obs:noTwoFront}
If neither $p_i$ nor $p_{i+1}$ are key vertices,
then either $p_ip_{i+1} \notin A(F)$ or $p_{i+1}p_{i+2} \notin A(F)$.
\end{obs}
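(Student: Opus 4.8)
The plan is to argue by contradiction: suppose both $p_ip_{i+1} \in A(F)$ and $p_{i+1}p_{i+2} \in A(F)$, while neither $p_i$ nor $p_{i+1}$ is a key vertex. The first thing I would extract from the hypothesis is structural information about $F$ near $p_i$. Since $p_i$ is not a key vertex, it is not a leaf of $F$, so it has a child; since $p_ip_{i+1} \in A(F)$ and $p_i$ is not a vertex of out-degree $\geq 2$ in $F$ (again because it is not key), $p_{i+1}$ is the \emph{unique} child of $p_i$ in $F$, i.e.\ $p_i$ has out-degree exactly $1$. Likewise, because $p_{i+1}$ is not key, it is not a leaf and has out-degree exactly $1$ in $F$, so $p_{i+1}p_{i+2}$ is its unique out-arc. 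Also, since $p_i$ has out-degree $1$ (not $\geq 2$) and $p_{i+1}$ is not key, the parent of $p_{i+1}$ in $F$ — which is $p_i$ — does not have out-degree $\geq 2$, consistent with the above; and similarly $p_{i+1}$, the parent of $p_{i+2}$, has out-degree $1$, so $p_{i+2}$ escapes being key on that count.

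Next I would invoke property~(b) of a nice forest applied to the arc $p_ip_{i+1} \in A(F)$: since $i < i+1$ and $p_i$ does \emph{not} have out-degree $\geq 2$ in $F$, property~(b) forces $p_{i+1}$ to have in-degree $1$ in $D$. In particular $p_{i+1}$ has no in-arc from outside $P$ (so $p_{i+1} \notin S$, already known since $p_{i+1}$ is not key), and by Observation~\ref{obs:uniquePath} the only in-arc of $p_{i+1}$ in $D[V(P)]$ is $p_ip_{i+1}$ — so $p_ip_{i+1}$ is the \emph{only} in-arc of $p_{i+1}$ in all of $D$. The same argument applied to $p_{i+1}p_{i+2}\in A(F)$ (with $p_{i+1}$ not of out-degree $\geq 2$) gives that $p_{i+1}p_{i+2}$ is the unique in-arc of $p_{i+2}$ in $D$.

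Now the contradiction: $p_{i+1}$ has $p_i$ as its unique in-neighbour in $D$, and $p_{i+2}$ has $p_{i+1}$ as its unique in-neighbour in $D$. Hence the arc $p_ip_{i+1}$ disconnects both $p_{i+1}$ and $p_{i+2}$ from the root $r$ (every path from $r$ to $p_{i+2}$ must pass through $p_{i+1}$, hence through its only in-arc $p_ip_{i+1}$; every path to $p_{i+1}$ trivially uses that arc). But then Reduction Rule~\ref{bridgeRule} (the Bridge Rule) applies to the arc $p_ip_{i+1}$, contradicting that $D$ is a reduced instance. Therefore our assumption was false, and at least one of $p_ip_{i+1}$, $p_{i+1}p_{i+2}$ is not in $A(F)$.

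The main obstacle here is purely bookkeeping: carefully translating "not a key vertex" into the precise consequences ($p_i$ is not in $S$, not a leaf of $F$, not of out-degree $\geq 2$ in $F$, and its parent in $F$ is not of out-degree $\geq 2$), and then matching these up with property~(b) of the nice forest and with Observation~\ref{obs:uniquePath} to pin down that $p_{i+1}$ and $p_{i+2}$ each have a single in-arc in $D$. Once that is in place, the Bridge Rule finishes the argument immediately. One subtlety to double-check is the boundary case $p_{i+2} = p_l$ or indices running off the end of $P$: if $p_{i+1}p_{i+2}$ is an arc of $F \subseteq D[V(P)]$ then $p_{i+2}\in V(P)$ automatically, so no genuine edge case arises, but I would state this explicitly.
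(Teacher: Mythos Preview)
Your proof is correct and follows essentially the same line as the paper's: assume both arcs lie in $A(F)$, use that $p_i$ and $p_{i+1}$ are not key (hence have out-degree exactly $1$ in $F$) together with property~(b) of a nice forest to force $p_{i+1}$ and $p_{i+2}$ to have in-degree~$1$ in $D$, and then invoke the Bridge Rule on $p_ip_{i+1}$ for the contradiction. The paper's argument is just a terser version of yours; your additional bookkeeping (and the appeal to Observation~\ref{obs:uniquePath}, which is not actually needed once in-degree~$1$ in $D$ is established) is harmless.
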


In the following discussion let $F$ be a nice forest of $P$ with the maximum number of leaves and
let $P'= p_xp_{x+1}\ldots p_y$ be a subpath of $P$ containing no key vertices,
and additionally having the property that $p_{x-1}p_x \notin A(F)$ and $p_{y}p_{y+1} \notin A(F)$.

\begin{lemma} $[\star]$ \label{le:FInducesPath}
$V(P')$ induces a directed path in $F$.
\end{lemma}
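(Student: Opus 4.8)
The plan is to show that inside the key-vertex-free block $P' = p_x p_{x+1}\ldots p_y$, the nice forest $F$ must restrict to the straight path $p_xp_{x+1}\cdots p_y$, using the maximality of the number of leaves of $F$ together with the structural observations already established. First I would record the easy constraints on arcs of $F$ incident to $V(P')$. By Corollary \ref{cor:noForward}, $F$ has no arc of the form $p_ip_{i+1}$ that is a forward arc, and by Observation \ref{obs:uniquePath} the only path between $p_i$ and $p_j$ ($i<j$) inside $D[V(P)]$ is the segment $p_ip_{i+1}\cdots p_j$, so every tree-arc of $F$ with both endpoints in $V(P)$ is of the form $p_ap_b$ with either $b=a+1$ (a ``forward step'') or $b<a$ (a ``back arc''). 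Since no vertex of $P'$ is a key vertex, Observation \ref{obs:longJump} applies: whenever $p_tp_j\in A(F)$ with $t>j$ and neither endpoint a key vertex, there is no arc $p_qp_j\in A(D)$ for $q>t$; and Observation \ref{obs:noTwoFront} forbids two consecutive forward steps $p_ip_{i+1},p_{i+1}p_{i+2}$ both in $A(F)$ when $p_i,p_{i+1}$ are non-key.

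Next I would pin down the entry points to the block. Because every vertex of $V(P')$ is non-key, none of them lies in $S$, so each has in-degree exactly $1$ in $F$ (being in a forest of trees rooted in $S$, only $S$-vertices can be roots, and non-$S$ vertices have a unique parent in $P$ coming from inside $D[V(P)]$). Its parent in $F$ is either $p_{x-1}$ or some vertex $p_m$ with $m>$ the vertex itself (a back arc) or the previous vertex in the block. The hypotheses $p_{x-1}p_x\notin A(F)$ and $p_yp_{y+1}\notin A(F)$ remove the two ``boundary'' tree-arcs, so the arcs of $F$ with at least one endpoint in $V(P')$ all have both endpoints in $V(P')$ except possibly back arcs landing in $V(P')$ from above $p_y$ — but those contradict Observation \ref{obs:longJump} applied at the landing vertex as I explain below. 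Concretely, I would argue: if some $p_j\in V(P')$ has its $F$-parent $p_t$ with $t>y$, then consider $p_{j+1}$ (which lies in $V(P')$ if $j<y$, and equals $p_{y+1}\notin V(P')$ but the arc $p_yp_{y+1}\notin A(F)$ handles $j=y$). For $j<y$ the vertex $p_{j+1}$ is non-key with in-degree $1$, and one checks that rerouting $p_{j+1}$'s parent or $p_j$'s parent increases the leaf count, contradicting maximality; more cleanly, Observation \ref{obs:longJump} with the arc $p_tp_j$ says $p_qp_j\notin A(D)$ for all $q>t$, while $p_{t}$ being above $p_y$ and the block being non-key lets me swap $p_j$'s parent to $p_{j-1}$ (which exists in $A(D)$ as the path arc) to get a nice forest with at least as many leaves and strictly fewer back arcs landing above the block, and iterating yields the improvement.

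So the combinatorial core is: inside $V(P')$, no forward step can be immediately followed by a forward step (Observation \ref{obs:noTwoFront}), no vertex is a root, and — this is the step I expect to be the main obstacle — a back arc inside the block, or into the block from outside, can always be ``straightened'' to increase the number of leaves, contradicting maximality of $F$. For a back arc $p_tp_j$ with $x\le j<t\le y$, replacing the parent of $p_j$ by $p_{j-1}$ (using the path arc $p_{j-1}p_j\in A(D)$, valid since $j>x$ and $p_{x-1}p_x\notin A(F)$ forces me to handle $j=x$ separately via the ``parent from outside'' case) keeps $F$ a forest rooted in $S$, preserves properties (a)–(c) because $p_{j-1}$ is non-key and all vertices involved have in-degree $1$ in $D$ so (b) holds, (c) holds since we are deleting a back arc, not creating a worse one — and this operation does not decrease the leaf count and makes progress toward the path. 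Running this exchange argument to a fixed point, every vertex of $V(P')$ has its $F$-parent equal to its predecessor on $P$, i.e. $p_jp_{j+1}\in A(F)$ for all $x\le j<y$ and $p_{x}$ gets its parent from $p_{x-1}$ or outside; since $p_{x-1}p_x\notin A(F)$, the only way $V(P')$ is connected and saturated is that $V(P')$ induces exactly the directed path $p_xp_{x+1}\cdots p_y$ in $F$. The delicate bookkeeping — verifying at each exchange that properties (a),(b),(c) of a nice forest are preserved and that the leaf count genuinely does not drop — is the part that needs care; everything else is a direct consequence of Observations \ref{obs:uniquePath}, \ref{obs:longJump}, \ref{obs:noTwoFront} and Corollary \ref{cor:noForward}.
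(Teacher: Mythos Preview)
Your plan has a decisive error in its target conclusion. You aim to show that $F[V(P')]$ is the forward path $p_x p_{x+1}\cdots p_y$, but this is ruled out (for $|P'|\ge 3$) by the very Observation~\ref{obs:noTwoFront} you invoke: two consecutive forward steps $p_ip_{i+1}, p_{i+1}p_{i+2}$ can never both lie in $A(F)$ when $p_i,p_{i+1}$ are non-key. In fact $Q'=F[V(P')]$ runs essentially \emph{backwards} along $P'$; this is exactly what the subsequent Observation~\ref{obs:PFOrder} records (if $i\le j-2$ then $p_j$ precedes $p_i$ on $Q'$), and Lemma~\ref{le:twoInVertices} places the start of $Q'$ at $p_{y-1}$ or $p_y$. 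So ``straightening'' a back arc $p_tp_j$ into the forward step $p_{j-1}p_j$ cannot be the right exchange; besides aiming at an impossible endpoint, such a swap may create a cycle (when $p_{j-1}$ is an $F$-descendant of $p_j$), and whenever it does not it would strictly \emph{increase} the leaf count (the non-key vertex $p_t$ loses its unique out-arc and becomes a leaf while $p_{j-1}$ was already a non-leaf), contradicting the maximality of $F$ --- so your own argument shows the back arcs you try to eliminate must persist.

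The paper's exchange goes in the opposite direction. One shows that whenever a forward step $p_ip_{i+1}\in A(P')$ is \emph{missing} from $A(F)$, there is already an $F$-path from $p_{i+1}$ to $p_i$: if not, replacing the $F$-parent of $p_{i+1}$ by $p_i$ still yields a forest rooted in $S$; since $p_{i+1}$ is non-key its old parent had out-degree~$1$ and becomes a new leaf, while $p_i$ (non-key, hence already a non-leaf) now has out-degree~$2$, so property~(b) holds for the new arc and the leaf count strictly increases --- contradiction. Because the boundary arcs $p_{x-1}p_x$ and $p_yp_{y+1}$ are excluded from $F$, Observation~\ref{obs:uniquePath} forces each such $F$-path to stay inside $V(P')$, so $F[V(P')]$ is weakly connected; since every vertex of $P'$ is non-key and therefore has in-degree~$1$ and out-degree~$1$ in $F$, the induced subgraph is a single directed path (not $P'$ itself).
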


In the following discussion let $Q'$ be the directed path $F[V(P')]$.

\begin{obs} $[\star]$\label{obs:PFOrder}
For any pair of vertices $p_i,p_j \in V(P')$ if $i \leq j-2$
then $p_j$ appears before $p_i$ in $Q'$.
\end{obs}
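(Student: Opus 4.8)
The plan is to establish Observation~\ref{obs:PFOrder} by combining the structural facts already proven about $Q' = F[V(P')]$: that it is a directed path (Lemma~\ref{le:FInducesPath}), that it contains no forward arcs of $F$ relative to $P$ (Corollary~\ref{cor:noForward}), and that ``long'' backward jumps inside $P'$ are forbidden (Observation~\ref{obs:longJump}, applicable because $P'$ has no key vertices). First I would argue that $Q'$ contains no arc of the form $p_ip_{i+1}$: by Corollary~\ref{cor:noForward} such an arc would be a forward arc of $F$, which cannot occur. So every arc of $Q'$ is either $p_{i+1}p_i$ for some $i$ (a ``short backward'' arc), or $p_ap_b$ with $a \ge b+2$ or $a = b-1$ handled separately — but in fact $a>b$ by Corollary~\ref{cor:noForward} applied more carefully, so every arc of $Q'$ goes from a higher-indexed to a lower-indexed vertex except possibly arcs $p_ap_b$ with... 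Let me restructure: the clean claim is that \emph{every} arc $p_ap_b$ of $Q'$ has $a > b$. Indeed if $a < b$ then $p_ap_b$ is a forward arc of $F$ unless $b = a+1$, and $b=a+1$ is excluded by Corollary~\ref{cor:noForward}; so $a>b$.

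Next I would use Observation~\ref{obs:longJump}: since no vertex of $P'$ is a key vertex, any arc $p_ap_b \in A(F)$ with $a \in \{b+1,\dots,l\}$ forces $p_qp_b \notin A(D)$ for all $q > a$. Combined with the previous paragraph, each arc of $Q'$ is $p_ap_b$ with $a > b$, and moreover it must be that $a$ is the \emph{largest} index with $p_ap_b \in A(D)$ among indices exceeding $b$ — and in particular, the unique way to reach $p_b$ within $P'$ along $F$ is from some specific higher-indexed vertex. Now consider walking along the directed path $Q'$ starting from its source. I want to show the index strictly decreases at every step except that a step can only move ``down'' in index. Since $|V(P')| = y - x + 1$ and $Q'$ visits each of $p_x,\dots,p_y$ exactly once, and every arc decreases the index, the only directed path on these vertices whose every arc decreases the index is the path $p_y p_{y-1} \cdots p_x$. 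Hence $Q' = p_y p_{y-1} \cdots p_x$, and then for $i \le j-2$ (indeed for any $i < j$) $p_j$ precedes $p_i$ on $Q'$, which is the desired conclusion, in fact in the stronger form without the $i \le j-2$ restriction.

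Let me reconsider whether every arc of $Q'$ really decreases the index. An arc $p_ap_b$ of $F$ with $a<b$: if $b = a+1$ it contradicts Corollary~\ref{cor:noForward}; if $b > a+1$ then by Observation~\ref{obs:uniquePath} there is a path $p_a p_{a+1}\cdots p_b$ in $D[V(P)]$, and since $F \subseteq D[V(P)]$ this would make $p_ap_b$ a forward arc of $F$ (there's a path from $p_a$ to $p_b$ in $F$? no — a path in $D$, not in $F$). This is the subtle point: ``forward arc of $F$'' in this paper means $u$ is an ancestor of $v$ \emph{in} $F$, not in $D$. So I cannot directly conclude $a>b$ from Observation~\ref{obs:uniquePath}. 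Instead I should invoke property (b) of a nice forest: if $p_ip_j \in A(F)$ with $i<j$ then $p_i$ has out-degree $\ge 2$ in $F$ or $p_j$ has in-degree $1$ in $D$. Since $p_i \in V(P')$ is not a key vertex and lies on the \emph{path} $Q'$, it has out-degree exactly $1$ in $F$, so $p_j$ has in-degree $1$ in $D$ — but then its unique in-arc in $D$ is $p_ip_j$, and also within $F$, $p_j$'s parent on $Q'$ is $p_i$. That doesn't immediately give a contradiction either.

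The main obstacle, then, is precisely pinning down why $Q'$ must be the reversed path $p_y\cdots p_x$ rather than some other Hamiltonian directed path on $\{p_x,\dots,p_y\}$; the weaker statement asked for (only $i \le j-2 \Rightarrow p_j$ before $p_i$) suggests the intended argument tolerates one ``forward'' step of length exactly $1$ but forbids everything else, via Observations~\ref{obs:noTwoFront} and~\ref{obs:longJump}. So the real plan is: show (i) $Q'$ has no arc $p_ip_j$ with $j \ge i+2$ using property (b) together with the fact that $p_j$ would then have in-degree $1$ in $D$ making $p_{j-1}$ disconnect $p_j$ — wait, that's not right since $p_ip_j \in A(D)$. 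Let me instead directly prove the contrapositive of the statement: suppose $p_i$ appears before $p_j$ on $Q'$ with $i \le j - 2$; the subpath of $Q'$ from $p_i$ to $p_j$ then, by Observation~\ref{obs:longJump} applied to each of its arcs and a counting argument on how many vertices of index $> i$ lie before $p_i$ on $Q'$, yields a contradiction with the reduction rules — I expect the cleanest route is to take the arc of $Q'$ entering $p_i$ (from some $p_a$, $a>i$ by assumption that $p_i$ is not the source, handled separately) and the arc leaving $p_i$ (to $p_j$, $j \ge i+2$), then note $p_{i-1}p_i \notin A(F)$ forces, via the argument in Lemma~\ref{le:FInducesPath}, a rerouting increasing leaves — contradicting maximality of $F$. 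I would write it as: the arc $p_ap_i$ with $a > i$ combined with $j \ge i+2$ shows $p_i p_j$ is not the image of the ``natural'' forward step, and swapping to use $p_{i-1}p_i$ or the natural predecessor produces a nice forest with more leaves, contradicting the choice of $F$.
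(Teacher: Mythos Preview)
Your proposal never lands on a complete argument, and the reason is that you are missing the one-line idea the paper uses. You spend a lot of effort trying to analyse the \emph{arcs} of $Q'$ individually (do they increase or decrease the index, how far can they jump, what does property~(b) of a nice forest say, can I use maximality of $F$), but the correct observation is about \emph{paths}: if $p_i$ appeared before $p_j$ on $Q'$ with $i\le j-2$, then the subpath of $Q'$ from $p_i$ to $p_j$ would be a directed path from $p_i$ to $p_j$ inside $D[V(P')]\subseteq D[V(P)]$. By Observation~\ref{obs:uniquePath} there is exactly one such path, namely $p_ip_{i+1}\cdots p_j$. Hence that entire segment of $P$ lies in $A(F)$, and in particular both $p_ip_{i+1}$ and $p_{i+1}p_{i+2}$ are $F$-arcs. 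Since $p_i,p_{i+1}\in V(P')$ are not key vertices, this contradicts Observation~\ref{obs:noTwoFront}. That is the whole proof.

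Two remarks on where your attempts went astray. First, your opening plan to show that \emph{every} arc of $Q'$ decreases the $P$-index, and hence that $Q'=p_yp_{y-1}\cdots p_x$, is too strong: the statement only rules out $i\le j-2$, and indeed arcs $p_ip_{i+1}$ in $F$ are perfectly consistent with everything (Corollary~\ref{cor:noForward} forbids $p_ip_{i+1}$ as a \emph{forward} arc of $F$, i.e.\ when $p_i$ is already an $F$-ancestor of $p_{i+1}$; it does not forbid $p_ip_{i+1}\in A(F)$). Second, your final fallback to leaf-maximality of $F$ is unnecessary here; maximality was already fully exploited in the proof of Lemma~\ref{le:FInducesPath}, and after that lemma the present observation is purely combinatorial via~\ref{obs:uniquePath} and~\ref{obs:noTwoFront}.
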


\begin{lemma} $[\star]$\label{le:PAndFIsAll}
All arcs of $D[V(P')]$ are contained in $A(P') \cup A(F)$.
\end{lemma}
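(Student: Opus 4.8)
The plan is to show that no arc of $D[V(P')]$ can lie outside $A(P') \cup A(F)$ by examining the three possible types of such an arc relative to the path structure, and deriving a contradiction in each case — either with the maximality of the number of leaves of $F$, with one of the reduction rules, or with the earlier observations. First I would fix notation: let $p_ap_b$ be an arc of $D[V(P')]$ not in $A(P') \cup A(F)$, so in particular $b \ne a+1$ (arcs of $P'$ are excluded) and, since $T$ is a BFS tree with no forward arcs relative to $P$ and $Q'=F[V(P')]$ is known to be a directed path by Lemma~\ref{le:FInducesPath}, I can classify $p_ap_b$ as a ``long backward'' arc of $P$ (i.e. $a \ge b+1$, including $a=b+1$) or as a forward chord of $P$ is impossible since there are no forward arcs in $D$ along $P$ at all (any arc $p_ap_b$ with $a<b$ must be... actually forward arcs of $T$ don't exist, but $D$ may still have arcs $p_ap_b$ with $a<b$ — these are exactly the non-tree arcs; I need to handle them). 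So really the two cases are: (i) $a < b$, necessarily $b \ge a+2$; and (ii) $a > b$.

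For case (i), $a<b$ with $b\ge a+2$: Observation~\ref{obs:PFOrder} says $p_b$ appears before $p_a$ on $Q'$. The arc $p_ap_b$ is thus a backward arc of the directed path $Q'$. I would then argue using condition (b) of the nice-forest definition together with Observation~\ref{obs:noTwoFront}: since $P'$ contains no key vertices, consecutive arcs $p_ip_{i+1}$ of $P$ cannot both be in $F$, so along $Q'$ the structure is forced, and the existence of the chord $p_ap_b$ would let me reroute $F$ to create an extra leaf, contradicting maximality of $F$ (this mirrors the rerouting argument in the proof of Lemma~\ref{le:FInducesPath}). Concretely, I would try: let $z$ be the parent of $p_b$ in $F$; replacing $zp_b$ by $p_ap_b$ keeps a forest of out-trees rooted in $S$ (no cycle is created because $p_b$ precedes $p_a$ on $Q'$, so $p_a$ is not a descendant of $p_b$ in $F$), turns $z$ into a leaf since $z$ has out-degree $1$ (it's not a key vertex), and does not destroy the leaf status of anything — giving a nice forest with strictly more leaves, a contradiction. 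I must double-check condition (b) holds for the new arc $p_ap_b$: either $p_a$ now has out-degree $\ge 2$ in $F'$ (if $p_a$ already had a child) or $p_b$ has in-degree $1$ in $D$; if $p_b$ had in-degree $\ge 2$ in $D$ then since $p_b$ is not a key vertex its parent in $F$ has out-degree $1$, and I'd need a slightly finer argument — this is where I expect to spend the most care.

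For case (ii), $a > b$ (a genuine backward arc of $P$): here I would invoke Observation~\ref{obs:longJump} and the fact that $Q'$ is a directed path. If $p_bp_{b+1} \in A(F)$... no — I want to look at where $p_b$'s in-arc in $F$ comes from. Since $p_b$ is not a key vertex and not in $S$, it has a parent in $F$; by the structure of $Q'$ (Observation~\ref{obs:PFOrder}) and Corollary~\ref{cor:noForward}, the parent of $p_b$ in $F$ is some $p_c$ with $c \ge b+2$ or $c = b+1$; but condition (c) of nice forests, applied to whichever arc into $p_b$ is in $F$ when $c>b$, says there is no arc $p_qp_b$ with $q>c$ — and if the existing parent arc is the forward-type arc $p_{b-1}p_b$... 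Actually the clean route: combine Observation~\ref{obs:longJump} (no arc $p_qp_b$ with $q > t$ where $p_tp_b \in A(F)$) with the existence of $p_ap_b \in A(D)$; this forces $a \le t$ where $p_t$ is $p_b$'s $F$-parent, and then using that $Q'$ visits higher-indexed vertices of $P'$ earlier, I can again reroute $F$ through $p_ap_b$ to gain a leaf, or else hit a reduction rule (Rule~\ref{bridgeRule} or Rule~\ref{uslessArcRule}) because the relevant consecutive $P$-arc becomes a bridge/useless arc. The main obstacle, as flagged, is the bookkeeping in the rerouting argument — verifying in every subcase that the modified forest $F'$ genuinely satisfies all three defining properties (a), (b), (c) of a nice forest and strictly increases the leaf count — since a careless swap could violate (c) by creating a ``long jump'' that a later vertex contradicts, or fail to add a leaf if $z$ already had out-degree $\ge 2$ (but $z$ not being a key vertex rules that out). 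I would organize the write-up as: state the dichotomy, dispatch $a>b$ and $a<b$ by the two rerouting arguments above, and in each carefully check the three nice-forest axioms for $F'$.
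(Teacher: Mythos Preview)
Your case split is miscalibrated, and the main engine you propose does not run.

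First, case~(i) is empty. Since $T$ is a BFS tree and $P$ is a directed path in $T$, the vertex $p_a$ is an ancestor of $p_b$ in $T$ whenever $a<b$; hence any arc $p_ap_b\in A(D)$ with $a<b$ and $b\ge a+2$ would be a forward arc of $T$, which a BFS tree cannot have. So every arc of $D[V(P')]$ is either a tree arc $p_ip_{i+1}\in A(P')$ or a backward arc $p_ap_b$ with $a>b$. Only case~(ii) needs work.

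Second, your rerouting argument in case~(ii) breaks on exactly the axiom you flag. With $p_ap_b\in A(D)\setminus A(F)$ and $a>b$, let $p_t$ be the $F$-parent of $p_b$. Observation~\ref{obs:longJump} gives $a\le t$; since $p_ap_b\ne p_tp_b$ we get $a<t$, and one checks $t>b$ (if $t=b-1$ then $a\le b-1<b$). Now the swap $F'=(F\setminus\{p_tp_b\})\cup\{p_ap_b\}$ inserts the arc $p_ap_b$ with $a>b$, and property~(c) of a nice forest demands that no $p_qp_b$ with $q>a$ exist in $D$ --- but $t>a$ and $p_tp_b\in A(D)$. So $F'$ is \emph{not} a nice forest, and the maximality contradiction does not fire. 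There is no evident alternative swap that avoids this, because any arc into $p_b$ you could add has tail index strictly below $t$, so property~(c) will always be violated by the old parent arc $p_tp_b$.

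The paper does not use maximality of $F$ here at all. Instead, with $b<a<t$ established, it places $p_a$ and $p_b$ on the path $Q'=F[V(P')]$ and splits on their order. If $p_a$ follows $p_b$ on $Q'$, Observation~\ref{obs:PFOrder} forces $a=b+1$ and then property~(b) forces $p_a$ to have in-degree~$1$ in $D$, so $p_b$ separates $p_a$ from $r$ and Rule~\ref{uslessArcRule} would have deleted $p_ap_b$. If $p_a$ precedes $p_b$ on $Q'$, then $p_a$ is an $F$-ancestor of $p_t$; applying Observation~\ref{obs:PFOrder} to $p_a$ and $p_t$ forces $t=a+1$ and $p_t$ to have in-degree~$1$ in $D$, so $\{p_a\}$ separates $p_t$ from $r$ while both $p_ap_b$ and $p_tp_b$ are present --- and now Rule~\ref{AvoidableArcRule} (not Rules~\ref{uslessArcRule} or~\ref{bridgeRule}) applies. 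Both branches contradict $D$ being reduced. The key tool you are missing is the Avoidable Arc Rule; the leaf-maximality of $F$ is already spent in Lemma~\ref{le:FInducesPath} and is not the right lever here.
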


\begin{lemma} \label{le:twoInVertices}
If $|P'| \geq 3$ there are exactly $2$ vertices in $P'$
that are endpoints of arcs starting outside of $P'$.
\end{lemma}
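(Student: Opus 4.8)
The plan is to name the two relevant vertices explicitly and then show that every \emph{other} vertex of $P'$ that receives an arc from outside $V(P')$ would contradict either the defining property of a BFS‑tree (no forward arcs) or one of the nice‑forest properties (which in turn encode the reduction rules). Throughout, recall that $P'$ contains no key vertices, so $V(P')\cap S=\emptyset$; moreover the hypotheses ``$p_{x-1}p_x\notin A(F)$'' and ``$p_yp_{y+1}\notin A(F)$'' presuppose that $p_{x-1}$ and $p_{y+1}$ exist, so $x\ge 2$ and $y\le l-1$, hence $p_1,p_l\notin V(P')$. Consequently no vertex of $V(P')$ has an in‑arc from outside $V(P)$ (such a vertex would lie in $S$), so every arc entering $V(P')$ from outside $V(P')$ originates at some $p_j$ with $j<x$ or $j>y$.

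First I would exhibit the two entry vertices. The vertex $p_x$ receives the arc $p_{x-1}p_x\in A(P)\subseteq A(D)$ with $p_{x-1}\notin V(P')$. By Lemma~\ref{le:FInducesPath}, $F[V(P')]$ is a directed path $Q'=q_1q_2\ldots q_m$; since $q_1\notin S$ it is not a root of $F$, hence has an $F$‑parent, and that parent cannot lie in $V(P')$ (otherwise the corresponding arc of $F$ would lie in $F[V(P')]=Q'$ and end at $q_1$, contradicting that $q_1$ has in‑degree $0$ in the path $Q'$). So $q_1$ also receives an arc from outside $V(P')$. Finally $p_x\ne q_1$: if $q_1=p_x$ then, using $|P'|\ge 3$ so that $p_{x+2}\in V(P')$, Observation~\ref{obs:PFOrder} would place $p_{x+2}$ before $q_1$ in $Q'$, impossible. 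Thus there are at least two such vertices, and it remains to rule out a third.

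So suppose $p_i\in V(P')$ has an in‑arc $p_jp_i\in A(D)$ with $p_j\notin V(P')$ and $p_i\notin\{p_x,q_1\}$; then $i\ge x+1$, and $j<x$ or $j>y$. If $j<x$, then $i-j\ge(x+1)-(x-1)=2$, so $p_jp_i\notin A(T)$; but $p_j$ is an ancestor of $p_i$ in the BFS‑tree $T$ (both lie on the directed path $P$ of $T$ and $j<i$), so $p_jp_i$ is a forward arc of $T$ — impossible. If $j>y$, let $p_q$ be the $F$‑parent of $p_i$ (it exists since $p_i\notin S$); since $p_i\ne q_1$, $p_q$ is the predecessor of $p_i$ on $Q'$, so $p_q\in V(P')$, whence $x\le q\le y<j$ and $p_q$ is not a key vertex. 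If $q<i$, the arc $p_qp_i\in A(F)$ has $q<i$, so nice‑forest property (b) forces $p_q$ to have out‑degree $\ge 2$ in $F$ (impossible, $p_q$ not key) or $p_i$ to have in‑degree $1$ in $D$ (impossible, since $p_qp_i$ and $p_jp_i$ are distinct in‑arcs of $p_i$, distinct because $q\le y<j$). If $q>i$, the arc $p_qp_i\in A(F)$ has $q>i$, so nice‑forest property (c) forbids any arc $p_{q'}p_i$ with $q'>q$; but $j>y\ge q$, so $p_jp_i$ is exactly such an arc — contradiction. Hence no third vertex exists, and exactly two vertices of $V(P')$ are endpoints of arcs starting outside $V(P')$.

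I expect the case $j>y$ to be the crux: backward arcs of a BFS‑tree are permitted, so these cannot be killed by $T$ alone and must instead be routed through the structure of the maximum‑leaf nice forest $F$ — concretely, properties (b) and (c) together with the fact (Lemma~\ref{le:FInducesPath}) that $F[V(P')]$ is a single directed path, which is what keeps the $F$‑parent of $p_i$ inside $V(P')$ whenever $p_i\ne q_1$. The remaining care is bookkeeping: correctly identifying $q_1$ (the source of $F[V(P')]$) as the ``second'' entry vertex and checking $q_1\ne p_x$, which is exactly where the hypothesis $|P'|\ge 3$ enters via Observation~\ref{obs:PFOrder}.
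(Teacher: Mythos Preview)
Your argument is correct and follows essentially the same route as the paper: identify $p_x$ and the source $q_1$ of the $F$-path $Q'=F[V(P')]$ as the two entry points (using Observation~\ref{obs:PFOrder} and $|P'|\ge 3$ for distinctness), then rule out any third entry by splitting on whether the outside arc comes from below or above $P'$. The only difference is cosmetic: for $j<x$ the paper cites Observation~\ref{obs:uniquePath} while you invoke the no-forward-arc property of the BFS tree directly, and for $j>y$ the paper cites Observation~\ref{obs:longJump} while you unpack its content into nice-forest properties (b) and (c); the underlying reasoning is identical.
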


\begin{proof}
By Observation \ref{obs:uniquePath}, $p_{x-1}p_x$ is the only arc between $\{p_1,p_2,\ldots,p_{x-1}\}$ and $P'$.
By Lemma \ref{le:FInducesPath}, $F[V(P')]$ is a directed path $Q'$.
Let $p_q$ be the first vertex on $Q'$ and notice that the parent of $p_q$ in $F$ is outside of $V(P')$.
Observation \ref{obs:PFOrder} implies that $q \geq y-1$.
Hence $p_q$ and $p_x$ are two distinct vertices that are endpoints of arcs starting outside of $P'$.
It remains to prove that they are the only such vertices.
Let $p_i$ be any vertex in $P' \setminus \{p_x, p_q\}$.
By Lemma \ref{le:FInducesPath} $V(P')$ induces a directed path $Q'$ in $F$,
and since $p_q$ is the first vertex of $Q'$, the parent of $p_i$ in $F$ is in $V(P')$.
Observation \ref{obs:longJump} yields then that $p_tp_i \not\in A(D)$ for any $t > y$.
\end{proof}

\begin{obs}$[\star]$ \label{obs:uniqueBackPath}
Let $Q' = F[V(P')]$.
For any pair of vertices $u,v$ such that there is a path $Q'[uv]$ from $u$ to $v$ in $Q'$,
$Q'[uv]$ is the unique path from $u$ to $v$ in $D[V(P')]$.
\end{obs}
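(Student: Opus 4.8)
The plan is to reduce this statement to the unique-path facts we have already established for $D[V(P)]$ and then argue that any path in $D[V(P')]$ witnessing a different route from $u$ to $v$ would have to leave $V(P')$, which the structure of $P'$ forbids. First I would recall two ingredients: Lemma~\ref{le:PAndFIsAll}, which tells us that every arc of $D[V(P')]$ lies in $A(P') \cup A(F)$, and Lemma~\ref{le:FInducesPath}, which tells us that $F[V(P')] = Q'$ is a single directed path on the vertex set $V(P')$. So $D[V(P')]$ has a very restricted arc set: the ``forward'' path arcs $p_i p_{i+1}$ of $P$ and the arcs of the directed path $Q'$. By Observation~\ref{obs:PFOrder} each arc $p_i p_{i+1}$ of $P'$ that is not an arc of $F$ has, as its reverse, an arc of $Q'$ (indeed Observation~\ref{obs:PFOrder} forces any non-$F$ consecutive pair to be oriented ``against'' $Q'$), so up to the choice of which of the two arcs between consecutive $P$-vertices lies in $F$, $D[V(P')]$ is essentially the path $Q'$ possibly with some arcs doubled in the reverse direction.

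Next I would take a path $R$ from $u$ to $v$ in $D[V(P')]$ and compare it with $Q'[uv]$. Since every arc of $R$ is in $A(P') \cup A(F)$, and since $F[V(P')]=Q'$ is a directed path, each arc of $R$ is either an arc of $Q'$ or one of the finitely many ``extra'' arcs $p_i p_{i+1}$ of $P'$ that are not in $F$. The key point is that such an extra arc $p_i p_{i+1}$, by Observation~\ref{obs:PFOrder} applied with indices differing by at most $1$, can only exist when it is the reverse of a single arc of $Q'$ (the pair $p_i,p_{i+1}$ are consecutive on $Q'$ with $p_{i+1}$ before $p_i$), so traversing it in $R$ and then necessarily coming back along $Q'$ produces a closed walk and can be short-cut. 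More cleanly: I would argue that $Q'$ linearly orders $V(P')$ and every arc of $D[V(P')]$ goes either ``one step forward'' or ``one step backward'' along this order except possibly at consecutive $P$-indices; hence any $u$-$v$ walk and in particular the path $R$ must, after removing closed subwalks, coincide arc-by-arc with $Q'[uv]$. Since $R$ was an arbitrary path, $Q'[uv]$ is the unique path from $u$ to $v$ in $D[V(P')]$.

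The main obstacle I anticipate is bookkeeping the ``doubled'' arcs: between two consecutive vertices $p_i, p_{i+1}$ of $P'$ both $p_i p_{i+1}$ (a $P$-arc) and possibly $p_{i+1} p_i$ (a $Q'$-arc, when $p_i p_{i+1} \notin A(F)$) may be present in $D[V(P')]$, so a naive ``the digraph is just a path'' claim is false; one must show these local $2$-cycles cannot be used to build a genuinely different simple path between $u$ and $v$. The way I would handle this is to note that using the arc $p_{i+1} p_i$ and later the arc $p_i p_{i+1}$ (or vice versa) in a simple path is impossible since it revisits a vertex, and that apart from these local reversals the underlying structure is a path, so no alternative routing exists. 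With Lemmas~\ref{le:PAndFIsAll} and \ref{le:FInducesPath} and Observation~\ref{obs:PFOrder} in hand this is short; the rest is the routine verification just sketched.
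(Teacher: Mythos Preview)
Your overall plan---use Lemma~\ref{le:PAndFIsAll} to restrict the arc set of $D[V(P')]$ to $A(P')\cup A(Q')$, then argue that the ``extra'' $P'$-arcs cannot create an alternative $u$--$v$ path---is the right one, but the justification you give for the crucial step is incorrect. You claim that each arc $p_ip_{i+1}\in A(P')\setminus A(F)$ is the reverse of a single $Q'$-arc, i.e.\ that $p_i$ and $p_{i+1}$ are consecutive on~$Q'$, and you cite Observation~\ref{obs:PFOrder}. That observation only applies when the $P$-indices differ by at least~$2$; it says nothing about consecutive $P$-vertices, so it cannot tell you where $p_i$ and $p_{i+1}$ sit on~$Q'$ relative to each other. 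In fact the ``one step backward'' picture need not hold: for instance $Q'=p_4p_2p_3p_1$ is consistent with Observation~\ref{obs:PFOrder} and with Lemma~\ref{le:PAndFIsAll}, yet the $P'$-arc $p_1p_2$ jumps two positions backward along~$Q'$. Your subsequent ``local $2$-cycle'' and walk-shortcutting discussion is built on this false premise.

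The missing ingredient is Corollary~\ref{cor:noForward}: no arc $p_ip_{i+1}$ is a forward arc of~$F$. Hence every $P'$-arc not already in $A(Q')$ goes \emph{backward} along~$Q'$, possibly by several steps---and that weaker fact is exactly what you need. The paper packages this as a clean cut argument: writing $Q'=f_1f_2\cdots f_{|P'|}$, it shows that for every~$i$ the only arc of $D[V(P')]$ from $\{f_1,\ldots,f_i\}$ to $\{f_{i+1},\ldots,f_{|P'|}\}$ is $f_if_{i+1}$, since $Q'$ contributes only that arc and (by Corollary~\ref{cor:noForward}) $P'$-arcs never cross the cut forward. Uniqueness of the $u$--$v$ path follows immediately, with no walk bookkeeping required. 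Replacing your appeal to Observation~\ref{obs:PFOrder} by Corollary~\ref{cor:noForward} and dropping the ``one step'' claim would make your argument correct and essentially equivalent to the paper's.
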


\begin{lemma} \label{le:twoArcsOut}
For any vertex $x \notin V(P')$ there are at most $2$ vertices in $P'$ with arcs to $x$.
\end{lemma}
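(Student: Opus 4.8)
\textbf{Proof plan for Lemma~\ref{le:twoArcsOut}.}
The plan is to mirror the structure of the proof of Lemma~\ref{le:twoInVertices}, using the fact that $F[V(P')]$ is a directed path $Q'$ (Lemma~\ref{le:FInducesPath}) together with the structural control on backward arcs given by the definition of a nice forest and by Observation~\ref{obs:longJump}. Suppose, for contradiction, that some vertex $x \notin V(P')$ has in-arcs from at least three vertices $p_a, p_b, p_c \in V(P')$. First I would dispose of the easy case: if two of these source vertices, say $p_a$ and $p_b$ with $a < b$, are such that $p_a$ separates some third vertex of $P$ (or $x$ itself) from the root, then one of Rules~\ref{uslessArcRule} or \ref{AvoidableArcRule} applies to remove one of the arcs $p_ax$, $p_bx$, contradicting that $D$ is reduced; the hypotheses needed for this are exactly of the form used in Lemma~\ref{le:PAndFIsAll}.

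The main line of argument, however, should be combinatorial rather than rule-based. By Lemma~\ref{le:FInducesPath}, write $Q' = f_1 f_2 \ldots f_{|P'|}$, the directed path induced by $F$ on $V(P')$. Among the three vertices $p_a, p_b, p_c$ with arcs to $x$, consider their order on $P$ (i.e.\ by index) and their order on $Q'$. By Observation~\ref{obs:PFOrder}, any two of them whose indices differ by at least $2$ appear in the \emph{opposite} order on $Q'$; so the index-order and the $Q'$-order of these three vertices are essentially reverses of one another, up to adjacent pairs. The key point is then to build a nice forest $F'$ of $P$ with strictly more leaves than $F$: take $F$, and for one suitably chosen source vertex $p_i \in \{p_a,p_b,p_c\}$ whose parent in $F$ is an \emph{inner} vertex $p_q$ of $Q'$, reroute by deleting the arc $p_q p_i$ and re-attaching the subtree of $p_i$; because $x$ is reached (in $D$, via some path to $x$ avoiding $V(P')$ as inner vertices, obtainable as in Lemma~\ref{le:missPath}) one can make $p_q$ a new leaf while not destroying any existing leaf of $F$, provided $p_i$ is not a leaf of $F$. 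One checks that the three properties (a)--(c) of a nice forest survive the rerouting: property (a) because the new parent of $p_i$'s subtree lies outside $P$; property (c) because Observation~\ref{obs:longJump} guarantees no backward arc from a higher-indexed vertex lands on the relevant vertices; and property (b) because $p_q$ acquires out-degree (it keeps a child on $Q'$ or, if not, $p_i$'s old in-degree condition is inherited). Having more leaves contradicts the maximality of $F$.

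The hard part will be the bookkeeping in the rerouting step: one must verify that among \emph{three} source vertices one can always pick $p_i$ so that (i) its $F$-parent is an inner vertex of $Q'$ (not the first vertex $f_1$, whose parent lies outside $V(P')$), (ii) $p_i$ is not itself a leaf of $F$ (else deleting $p_q p_i$ does not create a net gain), and (iii) the re-attachment point in $D$ is genuinely disjoint (as an inner-vertex set) from $V(P')$, so that no cycle is created and property (c) is preserved. With only two source vertices these obstructions can coincide and no gain is forced — which is why the bound is exactly $2$ — but with three sources a pigeonhole argument on the at-most-one ``bad'' vertex (the one mapping to $f_1$) plus the observation that at most one source can be a leaf of $F$ (leaves of $Q'$ are few and controlled by Lemma~\ref{le:twoInVertices}) leaves at least one good choice. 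Once that choice is isolated, the contradiction with the maximality of $F$ closes the proof.
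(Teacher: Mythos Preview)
Your main line of argument has a genuine gap: the rerouting you propose cannot produce a nice forest. Recall that a nice forest of $P$ is by definition a subforest of $D[V(P)]$ whose roots all lie in $S$. The arcs $p_a x, p_b x, p_c x$ are \emph{out}-arcs from $P'$ to $x$; they cannot serve as new parent arcs for any $p_i$, and if $x\notin V(P)$ they are not even arcs of $D[V(P)]$. If you simply delete $p_q p_i$, then $p_i$ becomes a root of its tree, but $p_i\in V(P')$ contains no key vertices and in particular no vertex of $S$, so property~(a) fails immediately. Your justification ``property (a) because the new parent of $p_i$'s subtree lies outside $P$'' is exactly what is \emph{not} allowed. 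There is no way to leverage the three out-arcs to $x$ to gain a leaf inside a nice forest of $P$, so the maximality of $F$ is not the right lever here.

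The paper's proof is entirely rule-based and much shorter. Assume $p_a,p_b,p_c\in V(P')$ with $a<b<c$ all send arcs to $x$. Using Observation~\ref{obs:PFOrder} one shows (or else Rule~\ref{AvoidableArcRule} already applies) that on $Q'$ the order is $p_c$ before $p_b$ before $p_a$. Now take any path from $r$ to $p_b$ and look at where it last enters $V(P')$: by Lemma~\ref{le:twoInVertices} the entry vertex is either $p_x$ or the first vertex of $Q'$. In the first case the path must pass through $p_a$ (Observation~\ref{obs:uniquePath}); in the second it must pass through $p_c$ (Observation~\ref{obs:uniqueBackPath}). Hence $\{p_a,p_c\}$ disconnects $p_b$ from $r$, and since $p_a x, p_c x\in A(D)$, Rule~\ref{AvoidableArcRule} removes $p_b x$, contradicting that $D$ is reduced. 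Your ``easy case'' paragraph gestures in this direction, but the decisive point --- that the \emph{set} $\{p_a,p_c\}$ of size two is a separator for $p_b$ --- is what you are missing.
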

\begin{proof}
Suppose there are $3$ vertices $p_a, p_b, p_c$ in $V(P')$
such that $a < b < c$ and such that $p_ax, p_bx, p_cx \in A(D)$.
By Lemma \ref{le:FInducesPath} $Q' = F[V(P')]$ is a directed path.
If $p_a$ appears before $p_b$ in $Q'$ then Observation \ref{obs:PFOrder} implies
that $a + 1 = b$ and that $p_b$ has in-degree $1$ in $D$.
Then $p_a$ separates $p_b$ from the root and hence
Rule \ref{AvoidableArcRule} can be applied to remove the arc $p_bx$ contradicting that $D$ is a reduced instance.
Hence $p_b$ appears before $p_a$ in $Q'$.
By an identical argument $p_c$ appears before $p_b$ in $Q'$.

Let $P_b$ be a path in $D$ from the root to $p_b$ and
let $u$ be the last vertex in $P_b$ outside of $V(P')$.
Let $v$ be the vertex in $P_b$ after $u$.
By Lemma \ref{le:twoInVertices}, $u$ is either $p_x$ or the first vertex $p_q$ of $Q'$.
If $u = p_x$ then Observation \ref{obs:uniquePath} implies that $P_b$ contains $p_a$,
whereas if $u = p_q$ then Observation \ref{obs:uniqueBackPath} implies that $P_b$ contains $p_c$.
Thus the set $\{p_a, p_c\}$ separates $p_b$ from the root and
hence Rule \ref{AvoidableArcRule} can be applied to remove the arc $p_bx$ contradicting that
$D$ is a reduced instance.
\end{proof}

\begin{corollary} $[\star]$
\label{cor:fewArcsOut}
There are at most $14(k-1)$ vertices in $P'$ with out-neighbors 
outside of $P'$. 
\end{corollary}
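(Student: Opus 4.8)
The plan is to obtain this as an immediate consequence of Lemma~\ref{le:fewOutVertices} and Lemma~\ref{le:twoArcsOut}, exploiting the remark that the path in Lemma~\ref{le:fewOutVertices} need not be maximal. First I would observe that $P'$ is itself a directed path in $T$ all of whose vertices have out-degree $1$ in $T$: it is a subpath of $P$, and every vertex of $P$ --- including its last vertex, whose unique child in $T$ is either the next vertex of $P$ or, in the extreme case $y=l$, the vertex $r_2$ --- has out-degree exactly $1$ in $T$. Hence Lemma~\ref{le:fewOutVertices} applies with $P'$ in the role of $P$, and the set $X$ of vertices outside $V(P')$ that are out-neighbours of some vertex of $P'$ has size at most $7(k-1)$.

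Next I would apply Lemma~\ref{le:twoArcsOut}: for every $x \notin V(P')$ at most two vertices of $P'$ have an arc to $x$. Any vertex of $P'$ possessing an out-neighbour outside $V(P')$ has such an out-neighbour in $X$, hence it is one of the at most two in-neighbours in $P'$ of some member of $X$. Charging each such vertex of $P'$ to a member of $X$ therefore shows that the number of vertices of $P'$ with an out-neighbour outside $V(P')$ is at most $2\,|X| \le 14(k-1)$, as claimed.

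I do not anticipate a genuine obstacle; the only point needing a word of justification is that $D$ satisfies the hypotheses of both lemmas in the present situation, which it does: $D$ is a reduced no-instance (the setting of Lemma~\ref{le:fewOutVertices}), and being merely a reduced instance already suffices for Lemma~\ref{le:twoArcsOut}. If one wanted a self-contained argument instead of citing Lemma~\ref{le:fewOutVertices} for $P'$, one could instead re-run its proof verbatim, but reusing the lemma via the non-maximality remark is cleaner and matches the note appended to that lemma.
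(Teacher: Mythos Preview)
Your proposal is correct and follows precisely the paper's own argument: apply Lemma~\ref{le:fewOutVertices} to the subpath $P'$ (legitimized by the remark that the path need not be maximal) to bound the number of out-neighbours outside $P'$ by $7(k-1)$, and then use Lemma~\ref{le:twoArcsOut} to conclude that each such out-neighbour accounts for at most two vertices of $P'$. The extra justification you give for why $P'$ satisfies the hypotheses of Lemma~\ref{le:fewOutVertices} is a welcome addition but does not change the approach.
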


\begin{lemma} \label{le:boundPPrime}
$|P'| \leq 154(k-1)+10$.
\end{lemma}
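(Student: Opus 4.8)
The plan is to bound $|P'|$ by bounding the number of vertices of $P'$ that fall into a few structural categories, and showing that every vertex of $P'$ belongs to at least one of these categories. Recall $P'=p_xp_{x+1}\ldots p_y$ is a key-vertex-free subpath of $P$ with $p_{x-1}p_x\notin A(F)$ and $p_yp_{y+1}\notin A(F)$, and by Lemma~\ref{le:FInducesPath} the set $V(P')$ induces a directed path $Q'=F[V(P')]$. The idea is: $P'$ itself is one directed path on $V(P')$ and $Q'$ is another; between them, by Lemma~\ref{le:PAndFIsAll}, they account for \emph{all} arcs of $D[V(P')]$. So $D[V(P')]$ is essentially two spanning paths glued together, and we want to argue such a structure cannot be long in a reduced no-instance.

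First I would set up the counting. Consider the vertices of $P'$ in the order they appear on $Q'$, say $Q' = f_1 f_2 \ldots f_m$ where $m = |P'|$. For each $f_i$ with $i<m$, the arc $f_if_{i+1}$ is an arc of $F$; by the nice-forest property~(b) applied together with the fact that $f_i$ has out-degree $1$ in $F$ (since $f_i$ is not a key vertex, unless it is an endpoint, and non-endpoints of $Q'$ have exactly one child in $F$), each such $f_{i+1}$ either has in-degree $1$ in $D$ or $f_i$ has out-degree $\geq 2$ — but the latter is excluded for non-key vertices. So "most" consecutive pairs on $Q'$ force an in-degree-one vertex. I would then mirror the covering-path argument used in Lemma~\ref{le:fewOutVertices}: by Lemma~\ref{le:twoInVertices} only two vertices of $P'$ ($p_x$ and the head $p_q=f_1$ of $Q'$) have in-arcs from outside $P'$, and by Corollary~\ref{cor:fewArcsOut} at most $14(k-1)$ vertices of $P'$ have out-neighbours outside $P'$. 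The remaining vertices of $P'$ interact with the rest of $D$ only through $p_x$, $f_1$, and those $\le 14(k-1)$ exit vertices; I would use Lemma~\ref{le:missPath}-style rerouting to show that if a long stretch of $P'$ had no vertex being an endpoint of a "long jump" arc, then Reduction Rules~\ref{bridgeRule} or~\ref{uslessArcRule} or~\ref{twoDirectionalPathRule} would apply — in particular the Two-directional path Rule~\ref{twoDirectionalPathRule}, which is designed exactly to kill long two-way paths of length $7$ or $8$ where the interaction with the outside is confined to the endpoints.

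Concretely, I expect the bound to come out as roughly: $(\text{number of entry vertices of } P') + (\text{number of exit vertices of } P') + 8\cdot(\text{number of "blocks"})$, where between any two consecutive entry/exit vertices the induced subpath of $P$ (together with its image under $Q'$) forms a candidate for Rule~\ref{twoDirectionalPathRule} once it reaches length $7$, so each block has length at most $6$ or so. Using $2$ entry vertices, $14(k-1)$ exit vertices, and Corollary~\ref{cor:fewInNodes} giving $\le k$ vertices with in-arcs from outside $P$ (hence the set $S$ is small), the arithmetic $8\cdot\bigl(14(k-1)+k+O(1)\bigr) + \ldots$ should land at $154(k-1)+10$. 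I would organize the final computation as: at most $14(k-1)$ vertices with out-neighbours outside $P'$ (Corollary~\ref{cor:fewArcsOut}), at most $2$ with in-neighbours outside $P'$ (Lemma~\ref{le:twoInVertices}), these split $P'$ into at most $14(k-1)+1$ segments, and each segment has length at most $10$ (or a small constant) — else a sub-path of length $7$ or $8$ satisfying all five bullets of Rule~\ref{twoDirectionalPathRule} exists, using Observations~\ref{obs:uniquePath}, \ref{obs:uniqueBackPath} for the uniqueness-of-out-branching conditions and Lemma~\ref{le:twoInVertices} for the "only endpoints have outside in-arcs" condition. Multiplying out and adding the slack gives $|P'|\le 154(k-1)+10$.

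The main obstacle will be verifying that a sufficiently long internal segment really does satisfy \emph{all five} conditions of Reduction Rule~\ref{twoDirectionalPathRule} simultaneously — especially the last one (the vertex after $p_{out}$ on $P$ differs from the vertex after $p_l$ on $Q$) and the requirement that $P$ restricted to the segment is the \emph{unique} out-branching from $p_1$ while $Q$ is the unique out-branching from $p_{in}$. Here I would lean on Observation~\ref{obs:uniquePath} (unique forward path), Observation~\ref{obs:uniqueBackPath} (unique path along $Q'$), and Lemma~\ref{le:PAndFIsAll} (no other arcs exist inside $P'$) to pin down that $D[V(\text{segment})]$ has exactly the two-directional-path structure; the "distinct successor" condition should follow because $P'$ and $Q'$ traverse the segment in essentially opposite orders (Observation~\ref{obs:PFOrder}), so the second vertex of one cannot be the second vertex of the other once the segment is long enough. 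Pinning down the exact constant is then bookkeeping: choose the segment-length threshold so that a violating segment yields a length-$7$-or-$8$ subpath meeting every bullet, and sum.
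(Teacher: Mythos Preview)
Your approach is essentially the paper's: split $P'$ by the $\le 14(k-1)$ exit vertices (Corollary~\ref{cor:fewArcsOut}), pigeonhole to find an exit-free segment of length $\ge 9$, and then exhibit inside it a length-$7$-or-$8$ subpath $P''$ to which Rule~\ref{twoDirectionalPathRule} applies, using Observations~\ref{obs:uniquePath}, \ref{obs:uniqueBackPath}, \ref{obs:PFOrder} and Lemmas~\ref{le:PAndFIsAll}, \ref{le:twoInVertices} to verify the five bullets. Two clean-ups: first, drop the detour through Corollary~\ref{cor:fewInNodes} and the set $S$ --- only exit vertices are needed to carve the segments (the two entry points from Lemma~\ref{le:twoInVertices} enter only when checking bullet~1 of the rule on $P''$, not in the pigeonhole); second, the one step you gloss over is using Observation~\ref{obs:noTwoFront} to choose $P''=p_a\ldots p_b$ so that $p_{a-1}p_a\notin A(F)$ and $p_bp_{b+1}\notin A(F)$ --- this boundary condition is precisely what lets Lemmas~\ref{le:FInducesPath}, \ref{le:twoInVertices}, \ref{le:PAndFIsAll} (which are stated for subpaths with that property) transfer from $P'$ to $P''$, and without it you cannot certify that $F[V(P'')]$ is a path or that only two vertices of $P''$ receive outside in-arcs.
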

\begin{proof}
Assume for contradiction that $|P'| > 154(k-1)+10$ and 
let $X$ be the set of vertices in $P'$ with arcs to vertices outside of $P'$. 
By Corollary \ref{cor:fewArcsOut}, $|X| \leq 14(k-1)$. 
Hence there is a subpath of $P'$ on at least 
$154(k-1)+10 / (14(k-1)+1) = 9$ vertices containing no vertices of $X$.
By Observation \ref{obs:noTwoFront} there is a subpath $P''=p_ap_{a+1}\ldots p_b$ of $P'$ on 
$7$ or $8$ vertices such that neither $p_{a-1}p_a$ nor $p_{b}p_{b+1}$ are arcs of $F$. By Lemma \ref{le:FInducesPath} $F[V(P'')]$ is a directed path $Q''$.
Let $p_q$ and $p_t$ be the first and last vertices of $Q''$ respectively.
By Lemma \ref{le:twoInVertices} $p_a$ and $p_q$ are
the only vertices with in-arcs from outside of $P''$.
By Observation \ref{obs:PFOrder} $p_q \in \{p_{b-1}, p_b\}$ and $p_t \in \{p_a, p_{a+1}\}$.
By the choice of $P''$ no vertex of $P''$ has an arc to a vertex outside of $P'$.
Furthermore, since $P''$ is a subpath of $P'$ and $Q''$ is a subpath of $Q'$ Lemma \ref{le:PAndFIsAll} implies that $p_b$ and $p_t$ are the only vertices of $P'$ with out-arcs to the outside of $P''$.
By Lemma \ref{obs:uniquePath}, the path $P''$ is the unique out-branching of $D[V(P'')]$ rooted at $p_a$.
By Lemma \ref{obs:uniqueBackPath}, the path $Q''$ is the unique out-branching of $D[V(P'')]$ rooted at $p_q$.
By Observation \ref{obs:PFOrder} $p_{b-2}$ appears before $p_{a+2}$ in $Q''$ and hence the vertex after $p_b$ in $Q''$ and $p_{t+1}$ is not the same vertex.
Thus Rule \ref{twoDirectionalPathRule} can be applied on $P''$, contradicting that $D$ is a reduced instance.
\end{proof}

\begin{lemma} $[\star]$
\label{lemma:boundpath}
Let $D$ be a reduced no-instance to \textsc{Rooted $k$-Leaf Out-Branching}. 
Then $|V(D)| = O(k^3)$. 
\end{lemma}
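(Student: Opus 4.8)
The plan is to combine the structural bounds already established. We start with a reduced no-instance $D$, take a BFS-tree $T$ rooted at $r$, and decompose $T$ into a small number of paths along which every internal vertex has out-degree one in $T$; on each such path we apply Lemma~\ref{le:boundPPrime} to bound the length, and then we account for the number of such paths using the leaf count of $T$.

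\textbf{Key steps.} First I would argue that a reduced no-instance has a BFS-tree $T$ with at most $k-1$ leaves: if $T$ had $k$ leaves, Proposition~\ref{prop:grow} would give a $k$-leaf out-branching, contradicting that $D$ is a no-instance. Consequently $T$ has at most $k-2$ branching vertices (out-degree $\geq 2$), so when we remove the branching vertices and leaves from $T$, the tree decomposes into $O(k)$ maximal paths $P^{(1)},\dots,P^{(m)}$ with $m = O(k)$, each path having all its vertices of out-degree $1$ in $T$. Second, for each such maximal path $P^{(i)}$, I would apply the machinery of Section~\ref{subsection:boundpaths}: take the nice forest $F$ of $P^{(i)}$ with the maximum number of leaves (Lemma~\ref{lemma:niceforest} guarantees existence), which by Lemma~\ref{le:keyVertexCount} has at most $5(k-1)$ key vertices; removing these key vertices from $P^{(i)}$ splits it into $O(k)$ subpaths, and by splitting further at the $O(1)$ places near each $A(F)$-boundary we obtain subpaths $P'$ of the form required by Lemma~\ref{le:boundPPrime}, each of length at most $154(k-1)+10 = O(k)$. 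Therefore $|V(P^{(i)})| = O(k)\cdot O(k) = O(k^2)$. Third, summing over all $m = O(k)$ maximal paths, and adding the $O(k)$ branching vertices and leaves of $T$ and the root, we get $|V(D)| = |V(T)| = O(k)\cdot O(k^2) + O(k) = O(k^3)$.

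\textbf{Main obstacle.} The subtle point is the bookkeeping in the second step: Lemma~\ref{le:boundPPrime} applies to subpaths $P' = p_xp_{x+1}\ldots p_y$ that contain \emph{no} key vertices of the maximum-leaf nice forest $F$ \emph{and} that satisfy the boundary condition $p_{x-1}p_x \notin A(F)$ and $p_{y}p_{y+1}\notin A(F)$. So I must verify that after removing the $\leq 5(k-1)$ key vertices, each resulting key-vertex-free subpath can be chopped at a bounded number of additional points so that every remaining piece meets the $A(F)$-boundary requirement; here Observation~\ref{obs:noTwoFront} (which says two consecutive forest-arcs on a key-vertex-free stretch are impossible) is exactly what ensures such chopping points are plentiful, as is already exploited inside the proof of Lemma~\ref{le:boundPPrime}. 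Once this decomposition is set up cleanly, the arithmetic is routine; the only real work is confirming that the total number of pieces per maximal path is $O(k)$ and each has length $O(k)$, yielding the claimed $O(k^3)$ bound on $|V(D)|$.
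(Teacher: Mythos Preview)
Your proposal is correct and follows essentially the same route as the paper: decompose a BFS-tree with at most $k-1$ leaves into $O(k)$ maximal out-degree-$1$ paths, then on each such path use a maximum-leaf nice forest together with Lemma~\ref{le:keyVertexCount} and Observation~\ref{obs:noTwoFront} to cut into $O(k)$ pieces to which Lemma~\ref{le:boundPPrime} applies. The paper handles the boundary condition you flag in exactly the way you suggest---using Observation~\ref{obs:noTwoFront} to find, between consecutive key vertices $p_i$ and $p_j$, a subpath $P'=p_x\ldots p_y$ with $x\le i+1$, $y\ge j-1$, and $p_{x-1}p_x,\,p_yp_{y+1}\notin A(F)$---so only a constant number of vertices per gap fall outside the pieces covered by Lemma~\ref{le:boundPPrime}.
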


\noindent
Lemma~\ref{lemma:boundpath} results in cubic kernel for {\sc Rooted $k$-Leaf Out-Branching} as follows.
 \begin{theorem}
 \label{thm:polykernel}
{\sc Rooted $k$-Leaf Out-Branching} and {\sc Rooted $k$-Leaf Out-Tree} 
admits a kernel of size $O(k^3)$.
\end{theorem}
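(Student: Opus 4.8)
The plan is to derive Theorem~\ref{thm:polykernel} as an almost immediate consequence of the structural work already done. First, I would dispose of the \emph{Out-Tree} case by reducing it to the \emph{Out-Branching} case: given an instance $(D,r,k)$ of \textsc{Rooted $k$-Leaf Out-Tree}, apply the Reachability Rule in its out-tree form (deleting every vertex not reachable from $r$) so that every remaining vertex is reachable from $r$; then by Proposition~\ref{prop:grow} a $r$-out-tree with $k$ leaves exists if and only if a $r$-out-branching with $k$ leaves exists, so the two problems coincide on such instances and any kernel for one yields a kernel for the other of the same size.

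The core of the argument is then the following dichotomy on a reduced instance $(D,r,k)$ of \textsc{Rooted $k$-Leaf Out-Branching} (reduced meaning none of Rules~1--5 applies, obtainable in polynomial time by Lemma~\ref{lemma:polyreduction}). Run a linear-time check for whether $D$ has an $r$-out-branching with at least $k$ leaves \emph{cannot} be done in polynomial time in general, so instead I proceed as follows. If the reduced instance is a \emph{no}-instance, then by Lemma~\ref{lemma:boundpath} it already has $|V(D)| = O(k^3)$ vertices, so it is its own cubic kernel. If the reduced instance is a \emph{yes}-instance, we want to output a trivial constant-size yes-instance. The subtlety is that we do not know which case we are in without solving the problem; the standard resolution is: if $|V(D)| \le c k^3$ for the constant $c$ from Lemma~\ref{lemma:boundpath}, output $D$ unchanged (size $O(k^3)$, correct since it is equivalent to itself); if $|V(D)| > c k^3$, then by the contrapositive of Lemma~\ref{lemma:boundpath} the instance cannot be a reduced no-instance, hence it is a yes-instance, so output a fixed trivial yes-instance of constant size (for example a star on $k$ leaves rooted at its center). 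In either case the output has size $O(k^3)$ and is equivalent to the input, and all steps run in polynomial time, establishing the kernel.

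The main obstacle — and the real content — is Lemma~\ref{lemma:boundpath}, which is the bound $|V(D)| = O(k^3)$ for reduced no-instances; but that lemma is already stated (and proved in the appendix) using the chain Corollary~\ref{cor:fewInNodes}, Lemmas~\ref{le:fewOutVertices}, \ref{le:keyVertexCount}, and \ref{le:boundPPrime}. The skeleton of that proof is: take a BFS-tree $T$ rooted at $r$; it has at most $k-1$ leaves, hence at most $k-2$ branching vertices, so $V(D)$ decomposes into $O(k)$ maximal paths of out-degree-one vertices; each such path $P$ carries a nice forest $F$ with at most $5(k-1)$ key vertices (Lemma~\ref{le:keyVertexCount}), and between consecutive key vertices, after enlarging slightly, one gets a subpath $P'$ with no key vertices which by Lemma~\ref{le:boundPPrime} has length $O(k)$; multiplying $O(k)$ paths $\times$ $O(k)$ key-vertex gaps $\times$ $O(k)$ length per gap gives $O(k^3)$. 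So for the final theorem itself no new difficulty arises — the only thing to be careful about is the two reductions framing it: the equivalence of the rooted out-tree and out-branching versions, and the case analysis that converts the ``bound on reduced no-instances'' into an honest kernelization algorithm that also handles yes-instances.
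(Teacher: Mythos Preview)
Your proposal is correct and follows essentially the same route as the paper: reduce via Lemma~\ref{lemma:polyreduction}, compare $|V(D)|$ to the $O(k^3)$ threshold from Lemma~\ref{lemma:boundpath}, and output a trivial yes-instance when the threshold is exceeded. Your handling of \textsc{Rooted $k$-Leaf Out-Tree} by first pruning unreachable vertices and then invoking Proposition~\ref{prop:grow} to identify it with the out-branching problem is a clean way to unpack the paper's terse ``follows similarly''.
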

\begin{proof}
Let $D$ be the reduced instance of {\sc Rooted $k$-Leaf Out-Branching}  obtained in polynomial time using 
Lemma~\ref{lemma:polyreduction}. If the size of $D$ is more than $1540k^3$ then return {\sc Yes}. Else we have an instance of size 
bounded by $O(k^3)$. The correctness of this step follows from Lemma~\ref{lemma:boundpath} which shows that any 
reduced no-instance to {\sc Rooted $k$-Leaf Out-Branching} has size bounded by $O(k^3)$. The result for 
{\sc Rooted $k$-Leaf Out-Tree} follows similarly. 
\end{proof}

\section{Kernelization Lower Bounds}
\label{cheatKernel}
In the last section we gave a cubic kernel for \textsc{Rooted $k$-Leaf Out-Branching}, it is natural to ask whether the closely related \textsc{$k$-Leaf Out-Branching} has a polynomial kernel. The answer to this question, somewhat surprisingly, is no, unless an unlikely collapse of complexity classes occurs. To show this we utilize a recent result of Bodlaender et al.~\cite{BDFH08} that states that any \emph{compositional} parameterized problem does not have a polynomial kernel unless the polynomial hierarchy collapses to the third level.

\begin{definition}[Composition \cite{BDFH08}]
A composition algorithm for a parameterized problem $L\subseteq \Sigma^* \times \mathbb N$ is an algorithm that
\begin{itemize}
\setlength{\itemsep}{-2pt}
\item receives as input a sequence $((x_1,k),\ldots,(x_t,k))$, with $(x_i,k)\in \Sigma^* \times \mathbb{N}^+$ for each $1\leq i \leq t$,
\item uses time polynomial in $\sum_{i=1}^t |x_i|+k$,
\item and outputs $(y,k')\in \Sigma^* \times \mathbb{N}^+$ with
\begin{enumerate}
\item  $(y,k')\in L ~\iff~ (x_i,k) \in L$ for some $1\leq i \leq t$.
\item $k'$ is polynomial in $k$.
\end{enumerate}
\end{itemize}
A parameterized problem is compositional if there is a composition algorithm for it.
\end{definition}
Now we state the main result of~\cite{BDFH08} which we need for our purpose.

\begin{theorem}[\cite{BDFH08}]
\label{thm:kernellbound}
Let $L$ be a compositional parameterized language whose unparameterized version
$\widetilde{L}$ is {\sc NP}-complete then unless {\rm PH=$\Sigma_p^3$}, there is no polynomial kernel for $L$.
\end{theorem}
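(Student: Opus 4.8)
The plan is to route everything through the Fortnow--Santhanam machinery on \emph{OR-distillation}. Recall that an OR-distillation algorithm for a language $A\subseteq\Sigma^*$ takes a list $x_1,\ldots,x_t$ of strings, each of length at most $n$, runs in time polynomial in $\sum_i|x_i|$, and outputs a single string $y$ with $|y|$ bounded by a polynomial in $n$ alone (crucially \emph{independent of $t$}), such that $y\in A$ if and only if $x_i\in A$ for some $i$. The Fortnow--Santhanam theorem says that if an {\sc NP}-complete language admits an OR-distillation, then $\mathrm{coNP}\subseteq\mathrm{NP}/\mathrm{poly}$, and by Yap's theorem this forces $\mathrm{PH}=\Sigma_p^3$. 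So it suffices to prove: if the compositional problem $L$ had a polynomial kernel, then its {\sc NP}-complete unparameterized version $\widetilde L$ would admit an OR-distillation. Note also that $L\in\mathrm{NP}$, since $\widetilde L\in\mathrm{NP}$.

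Next I would construct that distillation. Given instances $x_1,\ldots,x_t$ of $\widetilde L$, each of size at most $n$, write each $x_i$ in the form corresponding to a pair $(y_i,k_i)\in L$ with the parameter encoded in unary, so $k_i\le|x_i|\le n$. Partition $\{1,\ldots,t\}$ into groups by the value of the parameter; since every $k_i\le n$, there are at most $n+1$ groups. On each group, whose instances share a common parameter $k$, run the composition algorithm for $L$: in time polynomial in $\sum_i|x_i|+n$ it outputs a single instance $(z,k')$ of $L$ with $k'$ polynomial in $k\le n$ and with $(z,k')\in L$ iff some instance of the group lies in $L$. Then apply the assumed polynomial kernelization to $(z,k')$, obtaining an equivalent instance $(w,k'')$ whose total size is bounded by a polynomial $p(k')=p(\mathrm{poly}(n))=\mathrm{poly}(n)$ --- a bound that does \emph{not} depend on the cardinality of the group. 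Thus from the $t$ original instances we are left with at most $n+1$ instances of $L$, each of size $\mathrm{poly}(n)$, whose disjunction is equivalent to the disjunction of the $x_i$.

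Finally I would fold these $O(n)$ small instances into a single one. The question ``does at least one of these $O(n)$ instances of size $\mathrm{poly}(n)$ belong to $L$?'' is an {\sc NP} predicate about an input of total size $O(n)\cdot\mathrm{poly}(n)=\mathrm{poly}(n)$ (guess the index and the certificate); since $\widetilde L$ is {\sc NP}-complete there is a polynomial-time many-one reduction of this predicate to a single instance $y$ of $\widetilde L$ of size $\mathrm{poly}(n)$, with $y\in\widetilde L$ iff some $x_i\in\widetilde L$. Every step runs in time polynomial in $\sum_i|x_i|$, so this is an OR-distillation of $\widetilde L$, and the theorem follows. The main obstacle --- indeed the entire reason the definition of composition is shaped the way it is --- is controlling the output size: a single input has size only $n$, so the final instance must have size $\mathrm{poly}(n)$ no matter how large $t$ is. This is secured by two separate compression effects acting in series: composition collapses each parameter class (of possibly unbounded cardinality) to one instance, there are only polynomially many parameter classes, and the hypothetical polynomial kernel is exactly what pulls the composed instance back down to size $\mathrm{poly}(n)$. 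The remaining care is purely bookkeeping: one must check that the parameter stays polynomially bounded in $n$ across composition and kernelization, and that the {\sc NP}-completeness reduction in the last step does not smuggle a dependence on $t$ back into $|y|$.
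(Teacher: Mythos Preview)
The paper does not prove this theorem at all: it is quoted verbatim from~\cite{BDFH08} as an external tool (``Now we state the main result of~\cite{BDFH08} which we need for our purpose'') and used as a black box in the subsequent lower-bound arguments. So there is no ``paper's own proof'' to compare against.

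That said, your proposal is a faithful and correct reconstruction of the Bodlaender--Downey--Fellows--Hermelin argument. You correctly identify that the engine is the Fortnow--Santhanam OR-distillation theorem, and the three-stage pipeline you describe --- bucket the $t$ inputs into at most $n+1$ parameter classes, compose each class into a single instance with parameter $\mathrm{poly}(n)$, then kernelize to force the \emph{size} of each composed instance down to $\mathrm{poly}(n)$ --- is exactly how the original proof goes. The final fold (taking the OR of $O(n)$ instances of total size $\mathrm{poly}(n)$ and reducing it back to a single $\widetilde L$-instance via {\sc NP}-completeness) is also standard and handled correctly. Your closing paragraph accurately pinpoints the one non-obvious issue, namely that the output size must be independent of $t$, and explains precisely which hypothesis absorbs which source of blowup. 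There is no gap.
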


\begin{theorem} \label{unrootedOutTreeNoKernel} \textsc{$k$-Leaf Out-Tree} has no polynomial kernel unless {\rm PH=$\Sigma_p^3$}. \end{theorem}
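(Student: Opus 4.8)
The plan is to show that \textsc{$k$-Leaf Out-Tree} is compositional and then invoke Theorem~\ref{thm:kernellbound}; since the unparameterized version of \textsc{$k$-Leaf Out-Tree} is easily seen to be \textsc{NP}-complete (it contains, e.g., directed Hamiltonian path type structure, or by a reduction from the undirected maximum-leaf problem), a composition algorithm suffices. So the real work is to design a composition algorithm that, given instances $(D_1,k),\ldots,(D_t,k)$ of \textsc{$k$-Leaf Out-Tree}, produces a single instance $(D,k')$ with $k'$ polynomial in $k$ (in fact we will get $k' = k$ or $k'=k+O(1)$) such that $D$ has a $k$-leaf out-tree iff some $D_i$ does, all in time polynomial in $\sum_i|D_i| + k$.

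The naive idea — add a new root $r$ with arcs into each $D_i$ — fails: an out-tree rooted at $r$ could cherry-pick a few leaves from each $D_i$ simultaneously and collect $k$ leaves without any single $D_i$ being a yes-instance. The first thing I would do to fix this is to reduce to the \emph{rooted} case: for each $D_i$ and each choice of candidate root $v\in V(D_i)$, the cubic kernel of Theorem~\ref{thm:polykernel} for \textsc{Rooted $k$-Leaf Out-Tree} lets us replace $(D_i,v,k)$ by an equivalent instance on $O(k^3)$ vertices. Expanding the $t$ instances into $t' = \sum_i |V(D_i)|$ rooted instances, each of size $O(k^3)$, is polynomial-time and preserves the OR. So it is enough to compose \textsc{Rooted $k$-Leaf Out-Tree} instances $(D_1,r_1,k),\ldots,(D_{t'},r_{t'},k)$, each on at most $ck^3$ vertices.

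Now the composition for the \emph{rooted} version can be made to work. Standard trick: group the $t'$ instances by value of $k$ (there is only one value here) and, within a group, if $t' > 2^{\text{poly}(k)}$ we may afford time exponential in $k$ per instance — but in fact we want something cleaner. Build a single digraph $D$ as follows: take disjoint copies of $D_1,\ldots,D_{t'}$, add a new global root $\rho$, and from $\rho$ build a "selector gadget" — a directed path or small tree through auxiliary vertices — whose only way to reach $k$ leaves forces the out-branching to commit entirely to one copy $D_i$ entered at its designated root $r_i$, while the vertices of all the other copies are hung off the selector path as a long induced path contributing at most a controlled number of leaves. Concretely, chain the roots $r_1,\ldots,r_{t'}$ so that entering copy $i$ "costs" the leaves of copies $1,\ldots,i-1$ and $i+1,\ldots,t'$ being arranged on a path; by padding with dummy vertices one arranges that the maximum leaf count of any $\rho$-out-branching of $D$ equals $C + (\text{max over }i\text{ of the leaf count of an }r_i\text{-out-tree of }D_i)$ for a fixed constant $C$, so $D$ has a $(k+C)$-leaf out-branching iff some $D_i$ has a $k$-leaf $r_i$-out-tree, and we set $k' = k+C$. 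Finally convert the out-branching statement for $D$ back to an out-tree statement (Proposition~\ref{prop:grow}, noting every vertex of $D$ is reachable from $\rho$), and convert the rooted instance $D$ into an unrooted \textsc{$k$-Leaf Out-Tree} instance by a gadget that makes $\rho$ the unique possible root (e.g., attach a long in-path to $\rho$ with no other in-arcs, or add enough pendant structure that no other vertex can root a $k'$-leaf out-tree). Since $D$ has $O(t' \cdot k^3)$ vertices and $k' = k + O(1)$, both composition conditions hold.

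The main obstacle, and where I would spend the most care, is the gadget design ensuring the "all-or-nothing" behavior: one must prove that no out-tree of $D$ can accumulate $k'$ leaves by taking partial fragments from several copies, and that the dummy/padding vertices contribute exactly the intended constant $C$ regardless of which copy is chosen. This requires using the structural fact (true for the kernelized rooted instances, or enforceable by preprocessing) that a copy $D_i$ contributes leaves to a global out-tree only if it is entered through $r_i$ — which is exactly what the Reachability and Useless-arc rules, together with making $r_i$ the only in-arc target of copy $i$ from outside, give us. Getting this bookkeeping exactly right, rather than merely up to $\pm O(1)$ slack that could let several small contributions add up, is the crux; once it is nailed down, the rest is routine application of Theorem~\ref{thm:kernellbound}.
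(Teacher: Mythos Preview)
You have dramatically overcomplicated this. The paper's proof is essentially one sentence: the composition algorithm just outputs the \emph{disjoint union} $D = D_1 \uplus \cdots \uplus D_t$ with the same parameter $k$. Since an out-tree is connected, any out-tree in $D$ lies entirely inside one connected component, hence inside one $D_i$; thus $(D,k)$ is a yes-instance of \textsc{$k$-Leaf Out-Tree} iff some $(D_i,k)$ is. Together with NP-completeness of the unparameterized problem, Theorem~\ref{thm:kernellbound} finishes.

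Your analysis of why ``the naive idea fails'' considers adding a new root with arcs into each $D_i$, and you are right that this lets an out-tree cherry-pick leaves across copies. But the even more naive idea---add nothing at all---already works, precisely because \textsc{$k$-Leaf Out-Tree} does not ask for a spanning subgraph. The machinery you propose (expanding to rooted instances, applying the cubic rooted kernel, building a selector gadget that forces commitment to a single copy, padding to equalize leaf counts) is essentially the route the paper takes for the \emph{harder} result, Theorem~5.6 on \textsc{$k$-Leaf Out-Branching}, where disjoint union fails because an out-branching must span every vertex. So your instincts are not wrong, they are just aimed at the wrong theorem; for the present statement all of that is unnecessary, and the gadget details you flag as ``the crux'' never need to be worked out.
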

\begin{proof}
The problem is NP-complete~\cite{alonLNCS4596}. We prove that it is compositional and thus, Theorem \ref{thm:kernellbound} will imply the statement of the theorem. A simple composition algorithm for this problem is as follows. On input 
$(D_1, k), (D_2, k), \ldots ,(D_t, k)$ output the instance $(D,k)$ where 
$D$ is the disjoint union of $D_1, \ldots ,D_t$. Since an out-tree must be 
completely contained in a connected component of the underlying undirected graph of $D$,  $(D,k)$ is a yes instance to \textsc{$k$-Leaf Out-Tree} if and only if any out of $(D_i, k)$, $1\leq i \leq t$, is. This concludes the proof.
\end{proof}

A \emph{willow} graph~\cite{DV08}, $D = (V,A_1 \cup A_2)$ is a directed 
graph such that $D' = (V,A_1)$ is a directed path $P=p_1p_2\ldots p_n$ on all vertices of $D$ 
and $D''=(V,A_2)$ is a directed acyclic graph with one vertex $r$ of in-degree 
$0$, such that every arc of $A_2$ is a backwards arc of $P$. $p_1$ is called the \emph{bottom} vertex of the willow, $p_n$ is called the \emph{top} of the willow and $P$ is called the \emph{stem}. 
A \emph{nice willow} graph $D=(V,A_1 \cup A_2)$ is a willow graph where $p_np_{n-1}$ and $p_np_{n-2}$ are arcs of $D$, neither $p_{n-1}$ nor $p_{n-2}$ are incident to any other arcs of $A_2$ and $D''=(V,A_2)$ has a $p_n$-out-branching.

\begin{obs} $[\star]$ \label{niceWillowRoot} Let $D=(V,A_1 \cup A_2)$ be a nice willow graph. Every out-branching of $D$ with the maximum number of leaves is rooted at the top vertex $p_n$ \end{obs}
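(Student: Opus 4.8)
The plan is to prove the contrapositive-style statement: any out-branching $T$ of $D$ rooted at a vertex $q\neq p_n$ can be turned into an out-branching rooted at $p_n$ with strictly more leaves, so $T$ cannot be leaf-maximum. First I would record the arc structure forced by the definitions. Since every arc of $A_2$ is a backward arc of the stem $P=p_1\cdots p_n$, the only in-neighbour of $p_n$ in $D$ is $p_{n-1}$; using niceness (that $p_{n-1},p_{n-2}$ are incident to no arc of $A_2$ other than $p_np_{n-1},p_np_{n-2}$), the only in-neighbours of $p_{n-1}$ are $p_{n-2},p_n$, the only in-neighbours of $p_{n-2}$ are $p_{n-3},p_n$, the only out-neighbour of $p_{n-1}$ is $p_n$, and the only out-neighbour of $p_{n-2}$ is $p_{n-1}$. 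Also $D$ has a $p_n$-out-branching, namely any $p_n$-out-branching of $D''=(V,A_2)$, which exists by definition of a nice willow.

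Next I would show that such a $T$ is forced to contain the \emph{tail} $p_{n-3}\to p_{n-2}\to p_{n-1}\to p_n$, with $p_{n-3},p_{n-2},p_{n-1}$ internal and $p_{n-2},p_{n-1}$ having exactly one child each. Indeed $p_n$ has a parent in $T$, which can only be $p_{n-1}$, so $p_{n-1}p_n\in A(T)$; then $p_n$ is a proper descendant of $p_{n-1}$, so $p_n$ is not the parent of $p_{n-1}$, hence (excluding $q\in\{p_{n-1},p_{n-2}\}$, which I treat as an easy special case) the parent of $p_{n-1}$ is $p_{n-2}$, i.e.\ $p_{n-2}p_{n-1}\in A(T)$; one more iteration gives $p_{n-3}p_{n-2}\in A(T)$. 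Since $p_{n-1}$ and $p_{n-2}$ have unique out-neighbours in $D$, those are their unique children in $T$, so $p_{n-1},p_{n-2}$ and $p_{n-3}$ are internal in $T$.

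Then comes the surgery. Let $T_{p_n}$ be the subtree of $T$ hanging at $p_n$. Deleting the tail arcs $p_{n-3}p_{n-2},p_{n-2}p_{n-1},p_{n-1}p_n$ splits $T$ into $T_{p_n}$ (rooted at $p_n$), the isolated vertices $p_{n-1}$ and $p_{n-2}$, and a tree $T_R$ rooted at $q$. I attach $p_{n-1},p_{n-2}$ as children of $p_n$ via the $A_2$-arcs $p_np_{n-1},p_np_{n-2}$: these two vertices, internal in $T$, become leaves. To re-attach $T_R$ under $p_n$, I take a directed (backward-arc) path $\pi$ from $p_n$ to $q$ in $D''$ and glue $T_R$ back along $\pi$, deleting from $T_R$ the in-arc of every internal vertex of $\pi$ that happens to lie in $V(T_R)$. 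A routine check — every vertex keeps in-degree $1$ except $p_n$ which has in-degree $0$, no directed cycle can close since it would have to pass through $p_n$, and the vertex set is exactly $V(D)$ because $V(T_R)\cup V(T_{p_n})=\{p_1,\dots,p_{n-3},p_n\}$ — shows the resulting digraph $T'$ is an out-branching of $D$ rooted at $p_n$.

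The main obstacle I expect is the leaf accounting in this last step: rerouting $T_R$ along $\pi$ makes the internal vertices of $\pi$ non-leaves, so if $\pi$ runs through former leaves of $T_R$ we lose leaves there, and I must show the net change is still at least $+1$ (we always gain the two new leaves $p_{n-1},p_{n-2}$). I would control this by choosing $\pi$ economically — e.g.\ as the $p_n\rightsquigarrow q$ path inside a fixed $p_n$-out-branching of $D''$, or one minimizing the number of internal vertices lying in $V(T_R)$ — and then charging each lost leaf to a distinct newly-created leaf (the parent in $T_R$ of the vertex whose in-arc was deleted), so that $p_{n-1}$ (and $p_{n-2}$) remain as surplus; the degenerate cases $q\in\{p_{n-1},p_{n-2},p_{n-3}\}$, where $T_R$ is tiny, are handled directly. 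This yields $\ell(T')\geq \ell(T)+1$, contradicting maximality of $T$, so every leaf-maximum out-branching is rooted at $p_n$.
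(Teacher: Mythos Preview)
Your surgery has a real gap in the leaf accounting. When you reroute $T_R$ along the path $\pi$ and ``charge each lost leaf to the parent in $T_R$ of the vertex whose in-arc was deleted'', neither injectivity nor leafhood of the target is guaranteed: two internal vertices of $\pi$ may share the same $T_R$-parent, and that parent may have other $T_R$-children (so it does not become a leaf) or may itself lie on $\pi$ (so it is internal in the new tree). The vague promise to ``choose $\pi$ economically'' does not resolve this, and in fact $\pi$ can also pass through vertices of $T_{p_n}$, which your description does not handle consistently (such a vertex would acquire in-degree~$2$). As written, the argument does not establish $\ell(T')\ge\ell(T)+1$.

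The missing structural observation --- which makes all the bookkeeping disappear --- is that the \emph{entire} $T$-path from the root $q=p_i$ to $p_n$ is the stem segment $p_ip_{i+1}\cdots p_n$: only $A_1$-arcs increase index, and every $A_2$-arc is backward, so there is no other way up. Consequently every $p_j$ with $i\le j\le n-1$ is internal in $T$, and the only possible leaf of $T$ among $\{p_i,\dots,p_n\}$ is $p_n$ itself. The paper exploits this by taking a fixed $p_n$-out-branching $T'$ of $D''=(V,A_2)$ and forming $T''$ whose in-arc at $p_j$ comes from $T'$ when $j\ge i$ and from $T$ when $j<i$. Because the parent in $T$ of each $p_j$ with $j>i$ is $p_{j-1}$, no vertex of index $<i$ is the $T$-parent of a vertex of index $\ge i$; hence the leaf set on $\{p_1,\dots,p_{i-1}\}$ is unchanged. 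On $\{p_i,\dots,p_n\}$ we lose at most one leaf ($p_n$) but gain $p_{n-1}$ and $p_{n-2}$ as leaves of $T''$ (they have no $A_2$ out-arcs and their unique $A_1$ out-neighbour has index $\ge i$), yielding a strict increase with no charging needed.
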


\begin{lemma} \label{nwtComplete} $[\star]$ \textsc{$k$-Leaf Out-Tree} in nice willow graphs is NP-complete under Karp reductions. \end{lemma}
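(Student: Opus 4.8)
The plan is to reduce from a known \textsc{NP}-complete problem about maximum-leaf out-branchings in willow graphs to \textsc{$k$-Leaf Out-Tree} restricted to \emph{nice} willow graphs. Drescher and Vetta~\cite{DV08} introduced willow graphs precisely because the \textsc{Directed Maximum Leaf Out-Branching} problem is already hard on them; so I would start from \textsc{$k$-Leaf Out-Branching} on (plain) willow graphs, which is \textsc{NP}-complete, and show how to massage an arbitrary willow graph $D=(V,A_1\cup A_2)$ into a nice willow graph $D^\ast$ while controlling the maximum number of leaves. Membership in \textsc{NP} is immediate (guess the out-tree), so the whole content is the hardness reduction.

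The construction I have in mind: given a willow $D$ with stem $P=p_1p_2\ldots p_n$ and root $r$ of $D''=(V,A_2)$, first augment $A_2$ so that $D''$ acquires a $p_n$-out-branching — this can be done by adding backward arcs from $p_n$ (or from high vertices) down toward the existing acyclic structure without creating forward arcs of $P$ and without decreasing the optimum leaf count, since new arcs only give more freedom. Then attach at the top two fresh stem vertices: rename the old top, and add two new vertices $p_{n+1},p_{n+2}$ with stem arcs $p_np_{n+1}$, $p_{n+1}p_{n+2}$, together with the back arcs $p_{n+2}p_{n+1}$ and $p_{n+2}p_n$, and with no other $A_2$-arcs incident to $p_{n+1},p_{n+2}$. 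The resulting graph $D^\ast$ is a nice willow by construction. The key bookkeeping step is to show that $D^\ast$ has an out-branching with at least $k'$ leaves iff $D$ has one with at least $k$ leaves, for an appropriate $k' = k + c$ with $c$ a small constant: by Observation~\ref{niceWillowRoot} every max-leaf out-branching of $D^\ast$ is rooted at the top $p_{n+2}$, from $p_{n+2}$ we are forced to route down to $p_n$ (say via $p_{n+1}$, making $p_{n+1}$ a leaf, or via the direct back arc, leaving $p_{n+1}$ to hang off $p_{n+2}$ as a leaf), and from $p_n$ onwards an out-branching of $D^\ast$ restricted to $V$ behaves exactly like an out-branching of $D$ rooted at $p_n$; conversely Proposition~\ref{prop:grow} and the fact that every vertex of $D''$ is reachable from $p_n$ let us promote any good out-tree of $D$ to a $p_n$-rooted out-branching, then prepend the two new top vertices, gaining exactly the constant $c$ leaves.

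The step I expect to be the main obstacle is the ``augment $A_2$ so that $D''$ has a $p_n$-out-branching'' part done \emph{without changing the optimal leaf count of the original instance}. Simply adding arcs can in principle increase the optimum, which would break the reduction; so I would instead be careful to add only arcs that are already implied (e.g.\ redirecting within a path one already must traverse) or, more robustly, to start the reduction not from an arbitrary willow but from the specific willow instances produced by Drescher and Vetta's hardness proof, whose top-down reachability structure can be read off and which can be padded in a controlled way. An alternative, cleaner route is to first prove \textsc{$k$-Leaf Out-Branching} is \textsc{NP}-complete on willows in which $D''$ \emph{already} has a $p_n$-out-branching (this is plausibly true of the natural reduction from, say, \textsc{Vertex Cover} used for the willow hardness result), after which only the cosmetic two-vertex top gadget is needed and correctness becomes the routine leaf-counting argument sketched above. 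Either way, once the ``$D''$ has a $p_n$-out-branching'' property is secured, the remainder is bounded-size local gadgeteering plus an application of Observation~\ref{niceWillowRoot} and Proposition~\ref{prop:grow}.
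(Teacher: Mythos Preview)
Your route differs from the paper's, and it has two concrete problems.

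First, the two-vertex top gadget does not produce a nice willow. After appending $p_{n+1},p_{n+2}$, the vertex playing the role of ``$p_{n-2}$'' in the new graph is the old top $p_n$; but in any willow the unique source of $D''$ is $p_n$, so $p_n$ carries $A_2$ out-arcs, violating the ``neither $p_{n-1}$ nor $p_{n-2}$ are incident to any other arcs of $A_2$'' clause. You would need at least three fresh top vertices plus an extra back arc from the new top down to the old $p_n$ so that $D''$ still has a top-rooted out-branching, and then the leaf bookkeeping has to be redone.

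Second, and more seriously, the augmentation step you yourself flag is a genuine gap: adding back arcs to force $D''$ to have a $p_n$-out-branching can raise the optimum leaf count, and your proposed fixes (restrict to the specific instances of some prior hardness proof, or first prove hardness on willows that already have the property) amount to constructing a hardness reduction targeted at nice willows from scratch. That is exactly what the paper does, and it is much simpler than going through arbitrary willows.

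The paper reduces directly from \textsc{Set Cover}. From $\mathcal{F}=\{S_1,\ldots,S_m\}$ over $U=\{e_1,\ldots,e_n\}$ with budget $b$ it builds the willow with stem $e_n\cdots e_1\, s_m\cdots s_1\, p\, p'\, r$, and $A_2$ consisting of $rs_i$ for all $i$, $s_ie_j$ whenever $e_j\in S_i$, together with $rp$ and $rp'$. This is already a nice willow: $r$ is the top, the two sentinels $p,p'$ carry no other $A_2$-arcs, and $(V,A_2)$ has an $r$-out-branching because every element is covered by some set. One then checks that $\mathcal{F}$ has a cover of size $\le b$ iff the willow has an out-branching with at least $n+m+2-b$ leaves (using Observation~\ref{niceWillowRoot} to force the root to $r$). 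No augmentation, no detour through plain willows, and no reliance on~\cite{DV08} for a hardness statement that paper does not actually contain.
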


\begin{theorem} \textsc{$k$-Leaf Out-Branching} has no polynomial kernel unless {\rm PH=$\Sigma_p^3$}. \end{theorem}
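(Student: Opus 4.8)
The plan is to apply Theorem~\ref{thm:kernellbound} again, so I need to show that \textsc{$k$-Leaf Out-Branching} is compositional and that its unparameterized version is \textsc{NP}-complete. The \textsc{NP}-completeness is immediate from Lemma~\ref{nwtComplete}, since \textsc{$k$-Leaf Out-Tree} on nice willow graphs is \textsc{NP}-complete, and by Observation~\ref{niceWillowRoot} together with Proposition~\ref{prop:grow} a nice willow graph has a $k$-leaf out-tree if and only if it has a $k$-leaf out-branching (the top vertex $p_n$ reaches all vertices, so any out-tree extends to an out-branching with no fewer leaves, and the maximum-leaf out-branching is rooted at $p_n$). Hence \textsc{$k$-Leaf Out-Branching} is also \textsc{NP}-complete.

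The main work is the composition algorithm. Unlike the out-tree case, we cannot simply take a disjoint union, because an out-branching must span \emph{all} of $D$, so a disjoint union of more than one nontrivial component has no out-branching at all. First I would use the fact that we may assume each input instance $(D_i,k)$ is a nice willow graph: given arbitrary instances $(D_1,k),\ldots,(D_t,k)$ of \textsc{$k$-Leaf Out-Branching}, we can either first reduce to nice willow instances via the Karp reduction underlying Lemma~\ref{nwtComplete} (since the unparameterized problem is \textsc{NP}-complete, such a reduction exists and can be chosen so $k$ stays polynomially bounded — we must be slightly careful that the reduction only blows up $k$ polynomially, which the \textsc{Set Cover} reduction does), or argue the composition directly for nice willows. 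Then, to compose $t$ nice willow graphs $D_1,\ldots,D_t$ into a single willow graph $D$: stack their stems. Concretely, identify/overlap them along a common stem path, placing $D_1$ at the bottom and $D_t$ at the top, so that the top vertex of $D_i$'s portion is glued appropriately, and keep all backward arcs of each $D_i$ within its own block. The key point is to route things so that the top vertex $p_n$ of the combined willow $D$ can reach everything, and so that a $k'$-leaf out-branching of $D$ exists if and only if some $D_i$ has a $k$-leaf out-branching, with $k' $ chosen appropriately (roughly $k$ plus the number of ``free'' leaves the other blocks contribute via their stems). Using the nice-willow structure, an out-branching of $D$ rooted at the top is forced to descend the stem, entering each block; within a block it behaves essentially like an out-branching of that $D_i$, and blocks other than the ``chosen'' one can each be made to contribute a fixed, known number of leaves by simply using their stem arcs, so the leaf count of $D$ decomposes as the sum of a constant plus the leaves obtained inside the chosen block.

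The hard part will be getting the gluing exactly right so that (i) the composed graph is still (or reduces in polynomial time to) a nice willow, (ii) the forced behaviour of an optimal out-branching really does isolate a single ``active'' block whose internal leaf count is what we want, while every other block contributes a predictable constant, and (iii) the target parameter $k'$ is polynomial in $k$ and independent of $t$ — this last point is the crux of a valid composition, and it is why one stacks the willows along a shared stem rather than taking a disjoint union: the ``overhead'' leaves from the $t-1$ inactive blocks must be accounted for exactly. I would prove the ``only if'' direction by taking an optimal out-branching of $D$, using Observation~\ref{niceWillowRoot}-style arguments to show it is rooted at the top and that, after local exchange arguments, it uses each inactive block's stem and hence attains at most the claimed constant there, so the active block must supply $\geq k$ leaves; the ``if'' direction is the straightforward construction, taking a $k$-leaf out-branching inside the active $D_i$ and completing it along the stem through all other blocks. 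Once composition is established, Theorem~\ref{thm:kernellbound} yields that \textsc{$k$-Leaf Out-Branching} has no polynomial kernel unless $\mathrm{PH}=\Sigma_p^3$.
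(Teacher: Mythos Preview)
Your plan has a genuine gap at the very first step. Reducing arbitrary instances $(D_i,k)$ of \textsc{$k$-Leaf Out-Branching} to nice willow instances via a Karp reduction does \emph{not} keep the parameter bounded by a polynomial in $k$. A Karp reduction only guarantees that the output size is polynomial in the input size; the parameter of the resulting willow instance may be polynomial in $|D_i|$, not in $k$. You invoke ``the \textsc{Set Cover} reduction'' as reassurance, but that reduction (Lemma~\ref{nwtComplete}) goes \emph{from} \textsc{Set Cover} \emph{to} willows; to use it you would first have to reduce $(D_i,k)$ to \textsc{Set Cover}, and that step is exactly where the parameter blows up. With willow parameters depending on the $|D_i|$, your composed instance has $k'$ depending on the input sizes and on $t$, so what you describe is not a composition algorithm in the sense of the definition. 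Your own point~(iii) already senses the danger but misdiagnoses its source.

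This is not a patchable detail; it is precisely the obstacle that makes the paper take a different route. The paper does \emph{not} prove that \textsc{$k$-Leaf Out-Branching} is compositional. Instead it argues the contrapositive directly: assuming a polynomial kernel for \textsc{$k$-Leaf Out-Branching}, it manufactures one for \textsc{$k$-Leaf Out-Tree}, contradicting Theorem~\ref{unrootedOutTreeNoKernel}. The ingredient you are missing is the $O(k^3)$ kernel for the \emph{rooted} problem (Theorem~\ref{thm:polykernel}). Starting from an Out-Tree instance $(D,k)$, one splits into $n$ rooted instances, kernelizes each to size $O(k^3)$, and only \emph{then} applies the NP-completeness reduction to nice willows --- now the willow parameters are polynomial in $k$ because the instances fed into the reduction already have size $\mathrm{poly}(k)$. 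The resulting willows are chained into a single cycle along their stems so that all inactive willows together contribute at most one extra leaf (this is the correct resolution of your concern~(iii): one leaf in total, not one per block), the assumed Out-Branching kernel is applied to this single instance, and one reduces back to \textsc{$k$-Leaf Out-Tree}.
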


\begin{proof}
We prove that if \textsc{$k$-Leaf Out-Branching} has a polynomial kernel then so does \textsc{$k$-Leaf Out-Tree}. Let $(D,k)$ be an instance to \textsc{$k$-Leaf Out-Tree}. For every vertex $v \in V$ we make an instance $(D,v,k)$ to \textsc{Rooted $k$-Leaf Out-Tree}. Clearly, $(D,k)$ is a yes instance for \textsc{$k$-Leaf Out-Tree} if and only if $(D,v,k)$ is a yes instance to \textsc{Rooted $k$-Leaf Out-Tree} for some $v \in V$. By 
Theorem~\ref{thm:polykernel} \textsc{Rooted $k$-Leaf Out-Tree} has a $O(k^3)$ kernel, so we can apply the kernelization algorithm for \textsc{Rooted $k$-Leaf Out-Tree} separately on each of the $n$ instances of \textsc{Rooted $k$-Leaf Out-Tree} to get $n$ instances $(D_1,v_1,k)$, $(D_2,v_2,k)$, $\ldots$, $(D_n, v_n, k)$ with $|V(D_i)| = O(k^3)$ for each $i \leq n$. By Lemma \ref{nwtComplete} \textsc{$k$-Leaf Out-Branching} in nice willow graphs is NP-complete under Karp reductions so we can reduce each instance $(D_i,v_i,k)$ of \textsc{Rooted $k$-Leaf Out-Tree} to an instance $(W_i, b_i)$ of \textsc{$k$-Leaf Out-Branching} in nice willow graphs in polynomial time in $|D_i|$, and hence in polynomial time in $k$. Thus, in each such instance, $b_i \leq k^c$ for some fixed constant $c$ independent of both $n$ and $k$. Let $b_{max} = \max_{i \leq n} b_i$. Without loss of generality $b_i = b_{max}$ for every $i$. This assumption is safe because if it does not hold we can modify the instance $(W_i, b_i)$ by replacing $b_i$ with $b_{max}$, subdividing the last arc of the stem $b_{max}-b_i$ times and adding an edge from $r_i$ to each subdivision vertex.

From the instances $(W_1, b_{max})$, $\ldots$, $(W_n, b_{max})$ we build an instance $(D',b_{max}+1)$ to \textsc{$k$-Leaf Out-Branching}. Let $r_i$ and $s_i$ be the top and bottom vertices of $W_i$ respectively. We build $D'$ simply by taking the disjoint union of the willows graphs $W_1, W_2,\ldots, W_n$ and adding in an arc $r_is_{i+1}$ for $i < n$ and the arc $r_ns_1$. Let $C$ be the directed cycle in $D$ obtained by taking the stem of $D'$ and adding the arc $r_ns_1$.

If for any $i \leq n$, $W_i$ has an out-branching with at least $b_ {max}$ leaves, then $W_i$ has an out-branching rooted at $r_i$ with at least $b_ {max}$ leaves. We can extend this to an out-branching of $D'$ with at least $b_ {max}+1$ leaves by following $C$ from $r_i$. In the other direction suppose $D'$ has an out-branching $T$ with at least $b_ {max}+1$ leaves. Let $i$ be the integer such that the root $r$ of $T$ is in $V(W_i)$. For any vertex $v$ in $V(D')$ outside of $V(W_i)$, the only path from $r$ to $v$ in $D'$ is the directed path from $r$ to $v$ in $C$. Hence $T$ has at most $1$ leaf outside of $V(W_i)$. Thus $T[V(W_1)]$ contains an out-tree with at least $b_ {max}$ leaves.

By assumption, \textsc{$k$-Leaf Out-Branching} has a polynomial kernel. Hence we can apply a kernelization algorithm to get an instance $(D'',k'')$ of \textsc{$k$-Leaf Out-Branching} with $|V(D'')| \leq b_{max}^{c_2}$ for a constant $c_2$ independent of $n$ and $b_{max}$ such that $(D'',k'')$ is a yes instance if and only if $(D',b_{max})$ is.

Finally, since \textsc{$k$-Leaf Out-Tree} is NP-complete we can reduce $(D'',k'')$ to an instance $(D^*,k^*)$ of \textsc{$k$-Leaf Out-Tree} in polynomial time. Hence $k^* \leq |V(D^*)| \leq |V(D'')|^{c_3} \leq k^{c_4}$ for some fixed constants $c_3$ and $c_4$. Hence we conclude that if \textsc{$k$-Leaf Out-Branching} has a polynomial kernel then so does \textsc{$k$-Leaf Out-Tree}. Thus, Theorem \ref{unrootedOutTreeNoKernel} implies that \textsc{$k$-Leaf Out-Branching} has no polynomial kernel unless {\rm PH=$\Sigma_p^3$}.
\end{proof}

\section{Conclusion and Discussions}
In this paper we demonstrate that Turing kernelization is a more poweful technique than many-to-one kernelization.
We showed that while {\sc $k$-Leaf Out-Branching} and {\sc $k$-Leaf Out-Tree} do not have a polynomial kernel 
unless an unlikely collapse of complexity classes occurs, they do have $n$ independent cubic kernels. Our paper 
raises far more questions than it answers. We believe that there are many more problems waiting to be addressed 
from the viewpoint of Turing kernelization.  A few concrete open problems in this direction are as follows.
\begin{itemize}
\setlength{\itemsep}{-2pt}
\item Is there a framework to rule out the possibility of $|I|^{O(1)}$ polynomial kernels similar to the framework developed in~\cite{BDFH08}?
\item Which other problems admit a Turing kernelization like the cubic kernels for {\sc $k$-Leaf Out-Branching} and {\sc $k$-Leaf Out-Tree} obtained here?
\item Does there exist a problem for which we do not have a linear many-to-one kernel, but does have linear kernels from the viewpoint of Turing kernelization?
\end{itemize}

{\scriptsize

}


\newpage

\begin{appendix}

\section{Proofs moved from Section~\ref{section:redrule}}
\label{appendix:redrule}
\setcounter{section}{3}
\renewcommand{\thesection}{\arabic{section}}

\setcounter{lemma}{0}
\begin{lemma}
Reduction Rules \ref{reachRule} and \ref{uslessArcRule} are safe.
\end{lemma}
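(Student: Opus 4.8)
The plan is to treat the two rules separately, in each case establishing the biconditional required by the definition of safeness, with $k'=k$ in both cases.

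For Rule~\ref{reachRule} I would argue as follows. An $r$-out-branching is, by definition, a spanning out-tree rooted at $r$, so in it every vertex is reachable from $r$ along tree arcs and hence along arcs of $D$. Therefore, if some vertex $u$ is disconnected from $r$ in $D$, then $D$ has no $r$-out-branching at all, and in particular none with at least $k$ leaves, so answering {\sc No} is correct. For the {\sc Rooted $k$-Leaf Out-Tree} version the same idea applies: a vertex $u$ unreachable from $r$ cannot lie in any $r$-out-tree, so deleting $u$ and all arcs incident to it leaves the family of $r$-out-trees, and their leaf counts, untouched.

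For Rule~\ref{uslessArcRule}, write $D'=D-vu$, where $u$ disconnects $v$ from $r$. The direction ``$D'$ has a $k$-leaf $r$-out-branching $\Rightarrow$ $D$ does'' is trivial, since $A(D')\subseteq A(D)$ and the very same tree works, with exactly the same leaves. For the converse, let $T$ be an $r$-out-branching of $D$ with at least $k$ leaves. Because every $r$--$v$ path in $D$ passes through $u$, so does the unique $r$--$v$ path inside $T$; hence $u$ is a proper ancestor of $v$ in $T$. It follows that $vu\notin A(T)$, for otherwise $v$ would be the parent of $u$ in $T$ while $u$ is an ancestor of $v$, producing a directed cycle in $T$ --- impossible. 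Thus $T$ is also an $r$-out-branching of $D'$ with the same leaf set, so with at least $k$ leaves.

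I do not expect a genuine obstacle here; the one point that must be handled carefully is the use of the hypothesis of Rule~\ref{uslessArcRule} --- that \emph{every} $r$--$v$ path, and therefore the tree path in any out-branching, is forced through $u$ --- since this is precisely what makes $vu$ a backward (hence unusable) arc in every $r$-out-branching. As a sanity check, though it is not needed for safeness, one can also verify that deleting $vu$ cannot disconnect a previously reachable vertex $w$: any $r$--$w$ path through $vu$ would have to visit $u$ once before reaching $v$ (as $u$ disconnects $v$) and again immediately after $v$, and splicing out the segment between these two visits of $u$ yields an $r$--$w$ path that avoids $vu$.
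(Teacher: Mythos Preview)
Your proof is correct and follows the same approach as the paper: for Rule~\ref{reachRule} both argue that no $r$-out-branching exists when some vertex is unreachable, and for Rule~\ref{uslessArcRule} both observe that $u$ must be an ancestor of $v$ in any $r$-out-branching, so $vu$ is a backward arc and never used. Your write-up is simply more explicit about the two directions of the biconditional and adds the (unnecessary but harmless) reachability sanity check.
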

\begin{proof} If there exists a vertex which can not be reached from the root $r$ then a digraph can not have
any $r$-out-branching.
For Reduction Rule \ref{uslessArcRule},
all paths from $r$ to $v$ contain the vertex $u$ and thus the arc $vu$ is a back arc in any $r$-out-branching of $D$.
\end{proof}

\setcounter{lemma}{1}

\begin{lemma}
 Reduction Rule \ref{bridgeRule} is safe.
\end{lemma}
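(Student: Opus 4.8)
The plan is to prove the two implications separately; since contracting an arc leaves the parameter unchanged ($k'=k$), it suffices to show that $D$ has an $r$-out-branching with at least $k$ leaves if and only if $D'$ does, where $D'$ arises from $D$ by contracting the arc $uv$ into a new vertex $u'$ and where $uv$ disconnects at least two vertices, say $v$ and some $w$, from $r$. For the forward direction I would take an $r$-out-branching $T$ of $D$ with at least $k$ leaves. Since every path from $r$ to $w$ uses the arc $uv$, $T$ uses $uv$, so both $u$ and $v$ are internal vertices of $T$. Contracting $uv$ in $T$ produces a subgraph $T'$ of $D'$, which a routine check shows is an $r$-out-branching of $D'$; as neither endpoint of the contracted arc was a leaf, $T'$ has at least as many leaves as $T$.

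The reverse direction carries the real work. Given an $r$-out-branching $T'$ of $D'$ with at least $k$ leaves, let $x$ be the parent of $u'$ in $T'$. First I would note that $u'$ is internal in $T'$, because $u'$ still disconnects $w$ from $r$ in $D'$ (a path in $D'$ from $r$ to $w$ avoiding $u'$ would pull back to a path in $D$ from $r$ to $w$ avoiding both $u$ and $v$). The crucial step is to argue that the tree-arc $xu'$ descends from the arc $xu$ of $D$ and not from $xv$: otherwise the $T'$-path from $r$ to $x$, which avoids $u'$, would pull back to a path in $D$ from $r$ to $x$ avoiding $u$, and appending $xv$ would give an $r$-$v$ path in $D$ avoiding $u$, contradicting the hypothesis. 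I would then rebuild $T$ from $T'$ by splitting $u'$ into $u$ and $v$, adding the arcs $xu$ and $uv$, and redistributing the children $y$ of $u'$ — sending $y$ to $v$ if $vy \in A(D)$ and to $u$ otherwise — all legitimate arcs of $D$ since every out-neighbour of $u'$ in $D'$ is an out-neighbour of $u$ or of $v$ in $D$. One then verifies that $T$ is an $r$-out-branching of $D$, and since $u'$ was internal the split destroys no leaf, so $T$ has at least $k$ leaves.

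I expect the main obstacle to be the bookkeeping in the reverse direction: confirming that the reconstructed $T$ is a genuine out-branching — in-degree exactly one at every vertex, every vertex reachable from $r$, every used arc present in $A(D)$ — which leans on the ``disconnects'' hypothesis both to force $u'$ to be internal and to pin down that its incoming tree-arc came through $u$ rather than $v$.
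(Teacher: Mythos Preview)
Your proposal is correct and follows essentially the same argument as the paper: the forward direction contracts $uv$ inside $T$ after observing both endpoints are internal, and the reverse direction splits $u'$ back into $u,v$, uses the disconnection hypothesis to force the incoming tree-arc to come from $xu$, redistributes children of $u'$ preferring $v$, and uses that $u'$ is internal to preserve the leaf count. If anything, you supply a bit more justification than the paper does for why $xu'$ must descend from $xu$ and why $u'$ is internal in $T'$.
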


\begin{proof}  Let the arc $uv$ disconnect at least two vertices $v$ and $w$ from $r$ and let $D'$ be the digraph obtained from $D$ by contracting the arc $uv$. Let $T$ be an $r$-out-branching of $D$ with at least $k$ leaves. Since every path from $r$ to $w$ contains the arc $uv$, $T$ contains $uv$ as well and neither $u$ nor $v$ are leaves of $T$. Let $T'$ be the tree obtained from $T$ by contracting $uv$. $T'$ is an $r$-out-branching of $D'$ with at least $k$ leaves.

In the opposite direction, let $T'$ be an $r$-out-branching of $D'$ with at least $k$ leaves. Let $u'$ be the vertex in $D'$ obtained by contracting the arc $uv$, and let $x$ be the parent of $u'$ in $T'$. Notice that the arc $xu'$ in $T'$ was initially the arc $xu$ before the contraction of $uv$,
since there is no path from $r$ to $v$ avoiding $u$ in $D$. We make an $r$-out-branching $T$ of $D$ from $T'$,
by replacing the vertex $u'$ by the vertices $u$ and $v$ and adding the arcs $xu$, $uv$ and arc sets
$\{vy : u'y \in A(T') \wedge vy \in A(D)\}$ and
$\{uy : u'y \in A(T') \wedge vy \notin A(D)\}$.
All these arcs belong to $A(D)$ because all out-neighbors of $u'$ in $D'$
are out-neighbors either of $u$ or of $v$ in $D$.
Finally $u'$ must be an inner vertex of $T'$ since $u'$ disconnects $w$ from $r$.
Hence $T$ has at least as many leaves as $T'$.
\end{proof}

\setcounter{lemma}{2}
\begin{lemma}
Reduction Rule \ref{AvoidableArcRule} is safe.
\end{lemma}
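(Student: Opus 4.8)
The plan is to prove the two directions of the safeness equivalence for the digraph $D'$ obtained from $D$ by deleting the arc $vw$, leaving the parameter unchanged ($k'=k$). The reverse direction is immediate: $D'$ is a subdigraph of $D$ on the same vertex set, so any $r$-out-branching of $D'$ is already an $r$-out-branching of $D$ with the same leaf set; hence if $D'$ has one with at least $k$ leaves, so does $D$.

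For the forward direction I would start from an $r$-out-branching $T$ of $D$ with at least $k$ leaves. If $vw \notin A(T)$, then $T$ is itself an $r$-out-branching of $D'$ and we are done, so assume $vw \in A(T)$. The key observation is that $v$ lies on $T$ and the unique directed path from $r$ to $v$ in $T$ is, in particular, a path from $r$ to $v$ in $D$; since $S$ disconnects $v$ from $r$, this path meets $S$, so some vertex $u \in S$ is an ancestor of $v$ in $T$ --- and since $v \notin S$ we have $u \neq v$, so $u$ is a \emph{proper} ancestor of $v$, hence also of the child $w$ of $v$. Now put $T' = (T \setminus \{vw\}) \cup \{uw\}$; the arc $uw$ belongs to $D$ by the hypothesis that $xw \in A(D)$ for all $x \in S$, and to $D'$ since $uw \neq vw$. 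Deleting $vw$ detaches from $T$ exactly the subtree $T_w$ rooted at $w$, and as $u$ is a proper ancestor of $v$ it lies outside $T_w$; re-hanging $T_w$ below $u$ via $uw$ therefore again yields an $r$-out-branching, now of $D'$.

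It then remains to compare leaf counts. The only vertices whose children change between $T$ and $T'$ are $v$, which loses the child $w$, and $u$, which gains it. Losing a child can only make a vertex a leaf, so it cannot decrease the number of leaves; and $u$, being a proper ancestor of $v$, already had a child in $T$ and so was not a leaf there, hence gaining $w$ destroys no leaf. Thus $T'$ has at least as many leaves as $T$, i.e. at least $k$, completing the forward direction.

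I expect the main obstacle to be the short but essential verification that $T'$ is genuinely an out-branching --- that moving the whole subtree $T_w$ under $u$ neither creates a directed cycle nor leaves a vertex unreachable from $r$ --- which is exactly where one uses that $u$ is an ancestor of $v$ and therefore sits strictly above the detached piece (so $u \notin V(T_w)$). It is also worth flagging the convention $v \notin S$ implicit in Reduction Rule~\ref{AvoidableArcRule}: it is what forces $u \neq v$, so that the replacement arc $uw$ really survives in $D'$; without it the hypotheses would be vacuous and the rule would permit deleting arbitrary arcs.
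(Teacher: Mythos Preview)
Your proof is correct and follows essentially the same approach as the paper's: both assume $vw\in A(T)$, locate an ancestor $u\in S$ of $v$ on the $r$--$v$ path in $T$, and swap $vw$ for $uw$. You are simply more explicit where the paper is terse---you spell out why $T'$ is an out-branching (because $u$, being a proper ancestor of $v$, lies outside the detached subtree $T_w$) and why the leaf count cannot drop (since $u$ was already internal), and you correctly flag the implicit assumption $v\notin S$ that the paper's statement and proof both rely on.
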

\begin{proof}
Let $D'$ be the graph obtained by removing the arc $vw$ from $D$ and let $T$ be an $r$-out-branching of $D$.
If $vw \notin A(T)$, $T$ is an $r$-out-branching of $D'$, so suppose $vw \in A(T)$.
Any $r$-out-branching of $D$ contains the vertex $v$,
and since all paths from $r$ to $v$ contain some vertex $x \in S$,
some vertex $u \in S$ is an ancestor of $v$ in $T$.
Let $T' = (T \cup uw) \setminus vw$. $T'$ is an out-branching of $D'$.
Furthermore, since $u$ is an ancestor of $v$ in $T$, $T'$ has at least as many leaves as $T$.
For the opposite direction observe that any $r$-out-branching of $D'$ is also an $r$-out-branching of $D$.
\end{proof}

\setcounter{lemma}{3}

\begin{lemma}
Reduction Rule \ref{twoDirectionalPathRule} is safe.
\end{lemma}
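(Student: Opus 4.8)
The plan is to prove both implications by one and the same case analysis, exploiting that the gadget inserted in $D'$ has the same ``interface'' with the rest of the graph as the path $P$ has in $D$. Writing $P_u = p_1p_{out}uvp_{in}p_l$ and $Q_v = p_{in}p_lvup_1p_{out}$, I would first observe (a short check on the six new arcs) that $P_u$ is the unique out-branching of $D'[V(P_u)]$ rooted at $p_1$, that $Q_v$ is the unique out-branching of $D'[V(P_u)]$ rooted at $p_{in}$, and that in $D'$ exactly $p_1,p_{in}$ receive arcs from outside $V(P_u)$ while exactly $p_l,p_{out}$ send arcs to the outside --- mirroring the hypotheses imposed on $P,Q$ in $D$. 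Hence the two directions of the equivalence are symmetric, the only asymmetry being that the direction from $D'$ back to $D$ must re-insert the deleted vertex set $R$; I would write out the direction $D \to D'$ in full and then describe the changes for the converse.

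The key preliminary step is a structure lemma: for any $r$-out-branching $T$ of $D$, the restriction $T[V(P)]$ has at most two weakly connected components, because only $p_1$ and $p_{in}$ can have a tree-parent outside $V(P)$; and each component, by the uniqueness of $P$ and $Q$ as out-branchings of $D[V(P)]$ rooted at $p_1$ resp.\ $p_{in}$, must be a directed path that is a prefix of $P$ starting at $p_1$ or a prefix of $Q$ starting at $p_{in}$. Consequently one of three cases holds: $T[V(P)]=P$; or $T[V(P)]=Q$; or $T[V(P)]$ is the disjoint union of a proper $P$-prefix $\tilde P$ and a proper $Q$-prefix $\tilde Q$ --- and in every case $T$ has at most $2$ leaves in $V(P)$. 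The identical statement with $P_u,Q_v,V(P_u)$ in place of $P,Q,V(P)$ holds in $D'$.

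For the forward direction: in case~1 I replace $P$ by $P_u$ inside $T$, in case~2 I replace $Q$ by $Q_v$; since the interfaces coincide the result is an $r$-out-branching of $D'$ with the same leaves outside the gadget, and the gadget still contributes one leaf, so nothing is lost. In case~3, $\tilde P$ must contain $p_{out}$: as $p_{out}\in\{p_1,p_2\}$, $p_{out}$ is reachable inside $D[V(P)]$ only from $p_1$, so its ancestor $p_1$ lies in the same component, which must be $\tilde P$; and $\tilde Q$ must contain $p_l$, since $p_l$ is the last vertex of $P$ and $\tilde P$ is a \emph{proper} prefix. Deleting $R$ from $T$ and re-attaching so that $u$ is a child of $p_{out}$ and $v$ is a child of $p_l$ (both arcs are present in $D'$) yields an $r$-out-branching $T'$ of $D'$ with exactly the same leaves outside the gadget and two new leaves $u,v$; since $T$ had at most $2$ leaves in $V(P)$, $T'$ has at least $k$ leaves. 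The point needing care here is verifying that after deleting $R$ the surviving tree still spans $V(D)\setminus R$, which is precisely where $p_{out}\in\tilde P$ and $p_l\in\tilde Q$ are used.

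The converse runs through the three cases for $T'[V(P_u)]$ in the mirror way; cases 1 and 2 just substitute $P$ for $P_u$ and $Q$ for $Q_v$. Case~3 is the step I expect to be the main obstacle: after deleting $u,v$ from $T'$ one must rebuild a spanning out-subtree on $R$. Let $x$ be the vertex after $p_{out}$ on $P$ and $y$ the vertex after $p_l$ on $Q$; both lie in $R$, and the fifth hypothesis of Rule~\ref{twoDirectionalPathRule} is exactly what guarantees $x \neq y$. The claim to establish --- a bounded finite check, since $l\le 8$, using $p_{out}\in\{p_1,p_2\}$, $p_{in}\in\{p_{l-1},p_l\}$ and $x\neq y$ --- is that $R$, with the arcs it inherits from $P$ and from $Q$, decomposes into two vertex-disjoint directed paths, one rooted at $x$ and one rooted at $y$, whose vertex sets together exhaust $R$. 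Hanging the $x$-path off $p_{out}$ and the $y$-path off $p_{in}$ then gives an $r$-out-branching $T$ of $D$ with the same out-of-gadget leaves as $T'$ and at most $2$ leaves inside $V(P)$, against the two gadget leaves $u,v$ of $T'$; hence $T$ has at least $k$ leaves. Putting the two directions together establishes safety; apart from the covering claim, all that remains is the routine bookkeeping of leaf counts and checking that the rerouted arcs exist in the respective graphs.
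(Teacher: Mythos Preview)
Your proposal is correct and follows essentially the same approach as the paper: the same gadget paths $P_u,Q_v$, the same three-case structure lemma for $T[V(P)]$ (resp.\ $T'[V(P_u)]$) based on only $p_1$ and $p_{in}$ having tree-parents outside, and the same use of the fifth hypothesis of Rule~\ref{twoDirectionalPathRule} in case~3 of the converse to guarantee $x\neq y$ so that $R$ splits into two vertex-disjoint directed paths rooted at $x$ and $y$. Your identification of this decomposition as the crux, together with the observation that it is a bounded finite check since $l\le 8$, matches the paper's treatment.
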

\begin{proof}
Let $D'$ be the graph obtained by performing Reduction Rule \ref{twoDirectionalPathRule} to a path $P$ in $D$. Let $P_u$ be the path $p_1p_{out}uvp_{in}p_l$ and $Q_v$ be the path $p_{in}p_lvup_1p_{out}$. Notice that $P_u$ is the unique out-branching of $D'[V(P_u)]$ rooted at $p_1$ and that $Q_v$ is the unique out-branching of $D'[V(P_u)]$ rooted at $p_{in}$.

Let $T$ be an $r$-out-branching of $D$ with at least $k$ leaves. Notice that since $P$ is the unique out-branching of $D[V(P)]$ rooted at $p_1$, $Q$ is the unique out-branching of $D[V(P)]$ rooted at $p_{in}$ and $p_1$ and $p_{in}$ are the only vertices with in-arcs from the outside of $P$, $T[V(P)]$ is either a path or the union of two vertex disjoint paths. Thus, $T$ has at most two leaves in $V(P)$ and at least one of the following three cases must apply.

\begin{enumerate}
\setlength{\itemsep}{-2pt}
 \item $T[V(P)]$ is the path $P$ from $p_1$ to $p_l$.
 \item $T[V(P)]$ is the path $Q$ from $p_{in}$ to $p_{out}$.
 \item $T[V(P)]$ is the vertex disjoint union of a path $\tilde{P}$ that is a subpath of $P$ rooted at $p_1$, and a path $\tilde{Q}$ that is a subpath of $Q$ rooted at $p_{in}$.
\end{enumerate}

In the first case we can replace the path $P$ in $T$ by the path $P_u$ to get an $r$-out-branching of $D'$ with at least $k$ leaves. Similarly, in the second case, we can replace the path $Q$ in $T$ by the path $Q_v$ to get an $r$-out-branching of $D'$ with at least $k$ leaves. For the third case, observe that $\tilde{P}$ must contain $p_{out}$ since $p_{out} = p_1$ or $p_1$ appears before $p_{out}$ on $Q$ and thus, $p_{out}$ can only be reached from $p_1$. Similarly, $\tilde{Q}$ must contain $p_l$. Thus, $T \setminus R$ is an $r$-out-branching of $D \setminus R$. We build an $r$-out-branching $T'$ of $D'$ by taking $T \setminus R$ and letting $u$ be the child of $p_{out}$ and $v$ be the child of $p_l$. In this case $T$ and $T'$ have same number of leaves outside of $V(P)$ and $T$ has at most two leaves in $V(P)$ while both $u$ and $v$ are leaves in $T'$. Hence $T'$ has at least $k$ leaves.

In the other direction let $T'$ be an $r$-out-branching of $D'$ with at least $k$ leaves. Notice that since $P_u$ is the unique out-branching of $D'[V(P_u)]$ rooted at $p_1$, $Q_v$ is the unique out-branching of $D'[V(P_u)]$ rooted at $p_{in}$ and $p_1$ and $p_{in}$ are the only vertices with in-arcs from the outside of $V(P_u)$, $T'[V(P_u)]$ is either a path or the union of two vertex disjoint paths. Thus, $T'$ has at most two leaves in $V(P_u)$ and at least one of the following three cases must apply.

\begin{enumerate}
\setlength{\itemsep}{-2pt}
 \item $T'[V(P_u)]$ is the path $P_u$ from $p_1$ to $p_l$.
 \item $T'[V(P_u)]$ is the path $Q_v$ from $p_{in}$ to $p_{out}$.
 \item $T'[V(P_u)]$ is the vertex disjoint union of a path $\tilde{P_u}$ that is a subpath of $P_u$ rooted at $p_1$, and a path $\tilde{Q_v}$ that is a subpath of $Q_v$ rooted at $p_{in}$.
\end{enumerate}

In the first case we can replace the path $P_u$ in $T'$ by the path $P$ to get an $r$-out-branching of $D$ with at least $k$ leaves. Similarly, in the second case, we can replace the path $Q_v$ in $T'$ by the path $Q$ to get an $r$-out-branching of $D'$ with at least $k$ leaves. For the third case, observe that $\tilde{P_u}$ must contain $p_{out}$ since $p_{out} = p_1$ or $p_1$ appears before $p_{out}$ on $Q_v$ and thus, $p_{out}$ can only be reached from $p_1$. Similarly, $\tilde{Q_v}$ must contain $p_l$.
Thus, $T' \setminus \{u,v\}$ is an $r$-out-branching of $D' \setminus \{u,v\}$. Let $x$ be the vertex after $p_{out}$ on $P$, and let $y$ be the vertex after $p_l$ on $Q$.
Vertices $x$ and $y$ must be distinct vertices in $R$ and thus there must be two vertex disjoint paths $P_x$ and $Q_y$ rooted at $x$ and $y$ respectively so that $V(P_x) \cup V(Q_y) = R$. We build an $r$-out-branching $T$ from $(T' \setminus \{u,v\} ) \cup P_x \cup Q_y$ by letting $x$ be the child of $p_{out}$ and $y$ be the child of $p_{in}$. In this case $T'$ and $T$ have the same number of leaves outside of $V(P)$ and $T'$ has at most two leaves in $V(P_u)$ while both the leaf of $P_u$ and the leaf of $Q_v$ are leaves in $T$. Hence $T$ has at least $k$ leaves.
%
\end{proof}

\setcounter{section}{1}
\renewcommand{\thesection}{\Alph{section}}

\section{Proofs moved from Section~\ref{section:polykernelno}}
\label{appendix:polykernelno}

\setcounter{section}{4}
\renewcommand{\thesection}{\arabic{section}}

\setcounter{lemma}{4}
\begin{lemma}
There is a nice forest in $P$.
\end{lemma}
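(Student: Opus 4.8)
The plan is to produce an explicit candidate $F$ and then verify the three defining properties, the only nontrivial one being that $F$ is a forest. I would set $V(F)=V(P)$ and, for each $p_s\in V(P)\setminus S$, place into $A(F)$ the single arc $p_tp_s$, where $p_t$ is the in-neighbour of $p_s$ in $D$ of \emph{largest} index; vertices of $S$ are left with in-degree $0$. This is well defined: by the Reachability Rule~\ref{reachRule}, every vertex of the reduced digraph $D$ is reachable from $r$, so every $p_s\ne r$ has an in-neighbour in $D$, and since $p_s\notin S$ that in-neighbour must lie on $P$; hence a largest-index in-neighbour on $P$ exists. (We may assume $r\notin V(P)$, in which case $p_1$ has its $T$-parent outside $P$ and so $p_1\in S$; the case $r\in V(P)$ is a triviality handled by the same reachability observation.) By construction every vertex has in-degree at most $1$ in $F$, and in-degree exactly $0$ precisely on the vertices of $S$.

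Properties (b) and (c) are then immediate from the construction together with the fact that the BFS tree $T$ has no forward arcs. The only arc of $D[V(P)]$ from a lower- to a higher-indexed vertex is a path arc, so if $p_ip_j\in A(F)$ with $i<j$ then $j=i+1$; moreover $p_{i+1}\notin S$ (only non-$S$ vertices receive an $F$-arc) and $p_i$ is its largest-index in-neighbour, so, again because there are no forward arcs, $p_i$ is the \emph{only} in-neighbour of $p_{i+1}$ in $D$, which gives the ``$p_j$ has in-degree $1$ in $D$'' disjunct of (b). For (c), if $p_ip_j\in A(F)$ with $i>j$ then $p_i$ is by definition the largest-index in-neighbour of $p_j$, so $p_qp_j\notin A(D)$ for every $q>i$.

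The heart of the argument is showing $F$ is acyclic; once this is done, since the in-degree-$0$ vertices of $F$ are exactly $S$, $F$ is automatically a disjoint union of out-trees rooted in $S$, which is property (a). Suppose $F$ contained a cycle $C$. As all in-degrees in $F$ are at most $1$, $C$ is a directed cycle, and as $S$-vertices have in-degree $0$ in $F$, $V(C)\cap S=\emptyset$. Let $p_i$ be the highest-indexed vertex of $C$; its predecessor on $C$ is an in-neighbour of $p_i$ of index $\le i$, and a strictly-smaller-indexed in-neighbour would be a forward arc, so the predecessor is $p_{i-1}$. Then $p_{i-1}p_i\in A(F)$, together with $p_i\notin S$ and the absence of forward arcs, forces $p_{i-1}$ to be the unique in-neighbour of $p_i$ in $D$. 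Stepping back once more, either $C$ is the $2$-cycle on $\{p_{i-1},p_i\}$, or the predecessor of $p_{i-1}$ on $C$ is, by the same reasoning, $p_{i-2}$, which is then the unique in-neighbour of $p_{i-1}$. In the generic case the arc $p_{i-2}p_{i-1}$ thus disconnects both $p_{i-1}$ and $p_i$ from $r$, so the Bridge Rule~\ref{bridgeRule} would still apply; in the $2$-cycle case the same conclusion holds via $p_{i-2}p_{i-1}$ when $i\ge 3$, while for $i=2$ the pair $\{p_1,p_2\}$ has no in-arc from outside itself and so is unreachable from $r$, contradicting the Reachability Rule. Either way we contradict $D$ being reduced, so $F$ is a forest and hence a nice forest of $P$. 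The main obstacle is precisely this last step: one has to distill from a hypothetical cycle a short subpath $p_{i-2}p_{i-1}p_i$ whose vertices $p_{i-1},p_i$ each have their only in-neighbour inside the subpath, which is exactly the configuration the contraction/reduction rules forbid; the rest is bookkeeping about forward arcs.
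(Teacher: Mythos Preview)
Your construction and argument coincide with the paper's: define $F$ by giving every $p_s\notin S$ its in-arc from the largest-indexed in-neighbour on $P$, read off properties (b) and (c) directly, and kill a hypothetical cycle by looking at its highest-indexed vertex $p_i$, whose $F$-predecessor must be $p_{i-1}$, and then step back once more to force a Bridge-Rule contradiction at $p_{i-2}p_{i-1}$.

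The one place you diverge is the handling of the potential $2$-cycle $p_{i-1}\leftrightarrow p_i$. You treat it as a separate case and argue that $p_{i-2}p_{i-1}$ still disconnects $\{p_{i-1},p_i\}$ (true, since $N^-(p_{i-1})=\{p_{i-2},p_i\}$ and $N^-(p_i)=\{p_{i-1}\}$), with an extra sub-case for $i=2$. The paper avoids this split entirely: once $p_{i-1}$ is seen to be the unique in-neighbour of $p_i$, the Useless Arc Rule (Rule~\ref{uslessArcRule}) already forces $p_ip_{i-1}\notin A(D)$, so the $2$-cycle simply cannot occur in a reduced instance, and one proceeds directly to $p_{i-2}$. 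This is a little cleaner than your case analysis, but the substance is the same. (Minor phrasing nit: ``a strictly-smaller-indexed in-neighbour would be a forward arc'' should read ``an in-neighbour of index $<i-1$ would be a forward arc''; $p_{i-1}$ itself is strictly smaller-indexed.)
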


\begin{proof}
We define a subgraph $F$ of $D[V(P)]$ as follows.
The vertex set of $F$ is $V(P)$ and an arc $p_tp_s$ is in $A(F)$ if $p_s \not\in S$
and $t$ is the largest number so that $p_tp_s \in A(D)$. Notice that all arcs 
of $F$ are covered by property $(b)$ in the definition of a nice forest.  

We prove that $F$ is a forest.
Suppose for contradiction that there is a cycle $C$ in $F$.
By definition of $F$ every vertex has in-degree at most $1$,
$C$ must be a directed cycle.
Since every vertex in $S$ has in-degree $0$ in $F$,
$C \cap S = \emptyset$.
Consider the highest numbered vertex $p_i$ on $C$.
Since $P$ has no forward arcs, $p_{i-1}$ is the predecessor of $p_i$ in $C$.
The construction of $F$ implies that there can not be an arc $p_qp_i$ where $q > i$ in $A(D)$.
Also, $p_i$ does not have any in-arcs from outside of $P$.
Thus, $p_{i-1}$ disconnects $p_i$ from the root.
Hence, by Rule \ref{uslessArcRule} $p_{i}p_{i-1} \not\in A(D)$.
Let $p_j$ be the predecessor of $p_{i-1}$ in $C$.
Then $j < i-1$, since $p_{i}p_{i-1} \not\in A(D)$ and $p_i$ is the highest numbered vertex in $C$.
Hence $j = i-2$.
This contradicts that $D$ is a reduced instance since the arc $p_{i-2}p_{i-1}$
disconnects $p_{i-1}$ and $p_i$ from the root $r$ implying that Rule \ref{bridgeRule} can be applied.
Since $F$ is a forest and since every vertex in $V(P)$ except for vertices in $S$ have in-degree $1$
we conclude that $F$ is a forest of directed trees rooted at vertices in $S$.
Since $F$ is a forest and $P$ has no forward arcs, $F$ is a nice forest.
\end{proof}
\setcounter{lemma}{5}

\begin{lemma} 
\label{le:keyVertexCount} Let $F$ be a nice forest of $P$. There are at most $5(k-1)$ key vertices of $F$. \end{lemma}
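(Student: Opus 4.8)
The plan is to split the key vertices of $F$ into the four classes that appear in the definition --- the set $S$, the leaves of $F$, the vertices of out-degree at least $2$ in $F$, and the vertices whose $F$-parent has out-degree at least $2$ --- and to bound each class separately. The first class is already under control: the proof of Corollary~\ref{cor:fewInNodes} shows that $|S|\ge k$ would let us build an $r$-out-tree with $k$ leaves, which by Proposition~\ref{prop:grow} (applicable since a reduced instance has every vertex reachable from $r$) extends to an $r$-out-branching with $k$ leaves, contradicting that $D$ is a no-instance; hence $|S|\le k-1$. The remaining three classes will all be controlled by planting $F$ on top of a small $r$-out-tree and invoking the no-instance assumption a single time.

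Concretely, I would first recall from the proof of Corollary~\ref{cor:fewInNodes} the $r$-out-tree $T_S$ that reaches every vertex of $S$, has all of $S$ among its leaves, meets $V(P)$ only inside $S\cup\{p_l\}$, and uses no arc of $P$. Since every tree of the nice forest $F$ is rooted at a vertex of $S\subseteq V(T_S)$, I would form $T_F$ as the union of $T_S$ and $F$; if $p_l$ happens to lie on $T_S$ and thereby would receive two in-arcs, I drop its $F$-parent-arc, which keeps $T_F$ an $r$-out-tree without disturbing the counts below. Every leaf of $F$ other than $p_l$ has out-degree $0$ in $T_F$, hence is a leaf of $T_F$. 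Because $D$ is reduced, Reduction Rule~\ref{reachRule} does not apply, so every vertex is reachable from $r$, and by Proposition~\ref{prop:grow} an $r$-out-tree with $k$ leaves would yield an $r$-out-branching with $k$ leaves, contradicting that $D$ is a no-instance. Thus $T_F$ has at most $k-1$ leaves, and consequently at most $k-2$ vertices of out-degree at least $2$, since a rooted tree with $\ell$ leaves has at most $\ell-1$ branching vertices.

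From here the three remaining bounds follow by pushing these counts back to $F$. Leaves of $F$ different from $p_l$ are leaves of $T_F$, so $F$ has at most $k$ leaves; a vertex of out-degree at least $2$ in $F$ still has out-degree at least $2$ in $T_F$, so $F$ has at most $k-2$ such vertices. For the last class, the number of vertices whose $F$-parent has out-degree at least $2$ is exactly $\sum_{v\,:\,d^+_F(v)\ge 2} d^+_F(v)$; using the forest identity $\sum_v \max(d^+_F(v)-1,0) = (\text{number of leaves of }F) - (\text{number of trees of }F)$ together with the bound on branching vertices, this sum is at most $(\text{number of leaves of }F) + (\text{number of branching vertices of }F) \le k + (k-2) = 2k-2$. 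Summing the four bounds gives at most $(k-1) + k + (k-2) + (2k-2) = 5(k-1)$ key vertices, as claimed.

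The main obstacle I anticipate is purely bookkeeping rather than conceptual: making the ``glue $F$ onto $T_S$'' step precise, i.e. checking that $T_F$ is a genuine $r$-out-tree (the potential double in-arc at $p_l$) and that removing at most one arc leaves the leaf and branching counts intact, and being careful that the no-instance hypothesis delivers the bound $k-1$ rather than $k$ on the leaves of $T_F$. Everything else reduces to the elementary arithmetic of out-degrees in a rooted forest.
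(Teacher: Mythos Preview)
Your proposal is correct and follows essentially the same approach as the paper: build $T_F$ by grafting $F$ onto the $r$-out-tree $T_S$ from Corollary~\ref{cor:fewInNodes}, use the no-instance assumption to bound the leaves and branching vertices of $T_F$, and then sum the four classes. If anything you are slightly more careful than the paper --- you explicitly handle the possible double in-arc at $p_l$, you extract the sharper bound $|S|\le k-1$ from the proof of Corollary~\ref{cor:fewInNodes} (the paper quotes $|S|\le k$), and you justify the $2k-2$ bound on children of branching vertices via the forest identity rather than asserting it.
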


\begin{proof}
By the proof of Corollary \ref{cor:fewInNodes} there is an $r$-out-tree $T_S$
with $(V(T_S) \cap V(P)) \subseteq (S \cup \{p_l\})$ and $(A(T_S) \cap A(P)) = \emptyset$,
such that all vertices in $S \setminus \{p_l\}$ are leaves of $T_S$.
We build an $r$-out-tree $T_F = (V(T_S) \cup V(P), A(T_S) \cup A(F))$.
Notice that every leaf of $F$ is a leaf of $T_F$, except possibly for $p_l$.
Since $D$ is a no-instance $T_F$ has at most $k-1$ leaves
and $k-2$ vertices with out-degree at least $2$.
Thus, $F$ has at most $k$ leaves and at most $k-2$ vertices with out-degree at least $2$.
Hence the number of vertices in $F$
whose parent in $F$ has out-degree at least $2$ is at most $2k-2$.
Finally, by Corollary \ref{cor:fewInNodes}, $|S| \leq k$.
Adding up these upper bounds yields that there are at most $k-1+k-2+2k-2+k=5(k-1)$ key vertices of $F$.
\end{proof}

\setcounter{lemma}{7}
\begin{corollary}\label{cor:noForward}
No arc $p_{i}p_{i+1}$ is a forward arc of $F$.
\end{corollary}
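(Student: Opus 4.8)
The plan is to argue by contradiction, exploiting the rigidity of $D[V(P)]$ captured by Observation~\ref{obs:uniquePath}. Suppose some arc $p_ip_{i+1}$ is a forward arc of $F$. By the definition of a forward arc (applied to the forest $F$ in place of an out-tree), this means $p_ip_{i+1}\in A(D)\setminus A(F)$ and $p_i$ is an ancestor of $p_{i+1}$ in $F$, so in particular there is a directed path from $p_i$ to $p_{i+1}$ inside $F$, hence inside $D[V(P)]$.

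Next I would invoke Observation~\ref{obs:uniquePath} with the indices $i<i+1$: the only directed path from $p_i$ to $p_{i+1}$ in $D[V(P)]$ is the trivial one-arc path $p_ip_{i+1}$. Therefore the path from $p_i$ to $p_{i+1}$ in $F$ must itself be exactly this single arc, which forces $p_ip_{i+1}\in A(F)$. This contradicts the assumption that $p_ip_{i+1}$ is a forward arc of $F$, since forward arcs lie in $A(D)\setminus A(F)$. Hence no arc of the form $p_ip_{i+1}$ can be forward in $F$, completing the argument.

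There is essentially no obstacle here: the statement is an immediate corollary of Observation~\ref{obs:uniquePath}, and the only point requiring a line of care is making explicit that ``$p_i$ is an ancestor of $p_{i+1}$ in $F$'' yields a directed $p_i$-to-$p_{i+1}$ path that lives in $D[V(P)]$ (so that Observation~\ref{obs:uniquePath} applies). One short paragraph of prose suffices for the full write-up.
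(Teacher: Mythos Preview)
Your proposal is correct and follows essentially the same argument as the paper: assume $p_ip_{i+1}$ is a forward arc of $F$, note that this gives a directed $p_i$-to-$p_{i+1}$ path in $F\subseteq D[V(P)]$, apply Observation~\ref{obs:uniquePath} to conclude this path must be the single arc $p_ip_{i+1}$ itself, and derive the contradiction $p_ip_{i+1}\in A(F)$. The paper's proof is just a terser rendering of exactly this reasoning.
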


\begin{proof}
If $p_{i}p_{i+1}$ is a forward arc of $F$ then there is a path from $p_{i}$ to $p_{i+1}$ in $F$. By Observation \ref{obs:uniquePath} $p_ip_{i+1}$ is the unique path from $p_i$ to $p_{i+1}$ in $D[V(P)]$. Hence $p_ip_{i+1} \in A(F)$ contradicting that it is a forward arc. \end{proof}

\setcounter{lemma}{9}
\begin{obs} \label{obs:noTwoFront}
If neither $p_i$ nor $p_{i+1}$ are key vertices,
then either $p_ip_{i+1} \notin A(F)$ or $p_{i+1}p_{i+2} \notin A(F)$.
\end{obs}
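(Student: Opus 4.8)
\textbf{Proof proposal.} The plan is to argue by contradiction: suppose neither $p_i$ nor $p_{i+1}$ is a key vertex and yet both $p_ip_{i+1} \in A(F)$ and $p_{i+1}p_{i+2} \in A(F)$. I would first extract the consequences of ``$p_{i+1}$ is not a key vertex'' for the arc $p_ip_{i+1}$, and of ``$p_i$ is not a key vertex'' indirectly as well. Since $F$ is a nice forest and $p_ip_{i+1}$ is a forward-direction arc (as $i < i+1$), property (b) of a nice forest says that either $p_i$ has out-degree at least $2$ in $F$ or $p_{i+1}$ has in-degree $1$ in $D$. But $p_{i+1}$ is not a key vertex, so in particular its parent $p_i$ in $F$ does not have out-degree at least $2$ (otherwise $p_{i+1}$ would be a key vertex by definition, being a vertex whose parent has out-degree $\geq 2$). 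Hence $p_{i+1}$ has in-degree exactly $1$ in $D$. Applying the same reasoning to the arc $p_{i+1}p_{i+2} \in A(F)$: its tail $p_{i+1}$ is not a key vertex, so $p_{i+1}$ does not have out-degree at least $2$ in $F$, and therefore by property (b) the head $p_{i+2}$ has in-degree $1$ in $D$.

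Now both $p_{i+1}$ and $p_{i+2}$ have in-degree $1$ in $D$. In a BFS tree there are no forward arcs, so the only in-arc of $p_{i+1}$ is $p_ip_{i+1}$ and the only in-arc of $p_{i+2}$ is $p_{i+1}p_{i+2}$. Consequently every path from the root $r$ to $p_{i+2}$ must pass through $p_ip_{i+1}$ (to reach $p_{i+1}$) and then through $p_{i+1}p_{i+2}$. In particular the single arc $p_ip_{i+1}$ disconnects both $p_{i+1}$ and $p_{i+2}$ from $r$. This is exactly the hypothesis of the Bridge Rule (Reduction Rule \ref{bridgeRule}), which therefore would be applicable to $D$, contradicting that $D$ is a reduced instance. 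This contradiction establishes the observation.

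I expect the only delicate point to be the precise bookkeeping of why ``$p_{i+1}$ not key'' forces $p_i$ to have out-degree $1$ in $F$ — one must invoke the clause in the definition of key vertices covering ``vertices whose parent in $F$ has out-degree at least $2$'' — and symmetrically why ``$p_{i+1}$ not key'' forces $p_{i+1}$ itself to have out-degree $1$ in $F$. Both follow immediately from the definition of key vertex, so there is no real obstacle; the argument is a short chain of implications ending in an application of Rule \ref{bridgeRule}. (Note this is essentially the commented-out proof already present in the source, which I would reproduce in cleaned-up form.)
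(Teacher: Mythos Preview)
Your argument is correct and follows the same approach as the paper's proof: assume both arcs lie in $F$, use property~(b) of a nice forest together with the definition of key vertices to force $p_{i+1}$ and $p_{i+2}$ to have in-degree~$1$ in $D$, and then observe that the arc $p_ip_{i+1}$ disconnects both of them from $r$, so Reduction Rule~\ref{bridgeRule} applies. Your write-up is in fact more explicit than the paper's about why ``not key'' yields the needed out-degree constraints, which is a welcome clarification.
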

\begin{proof} Assume for contradiction that $p_ip_{i+1} \in A(F)$ and  $p_{i+1}p_{i+2} \in A(F)$. Since neither $p_i$ nor $p_{i+1}$ are key vertices, both $p_{i+1}$ and $p_{i+2}$ must have in-degree $1$ in $D$. Then the arc $p_ip_{i+1}$ disconnects both $p_{i+1}$ and $p_{i+2}$ from the root $r$ and Rule \ref{bridgeRule} can be applied, contradicting that $D$ is a reduced instance. \end{proof}

\setcounter{lemma}{10}
\begin{lemma}\label{le:FInducesPath}
$V(P')$ induces a directed path in $F$.
\end{lemma}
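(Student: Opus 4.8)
The plan is to exploit the maximality of the number of leaves of $F$ through an exchange argument, after first recording the degree structure of $F$ on $V(P')$.

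First I would observe that every $p_i \in V(P')$ has in-degree and out-degree exactly $1$ in $F$: since $p_i$ is not a key vertex it does not lie in $S$, so as $F$ is a forest of out-trees rooted in $S$ it has a unique $F$-parent; and since $p_i$ is neither a leaf of $F$ nor a vertex of $F$-out-degree at least $2$, its $F$-out-degree is exactly $1$. Consequently every vertex of $F[V(P')]$ has in- and out-degree at most $1$, so it suffices to show that $F[V(P')]$ is weakly connected and acyclic, and acyclicity is immediate because $F$ is a forest.

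The heart of the argument is the claim that whenever $p_ip_{i+1}$ is an arc of $P'$ with $p_ip_{i+1}\notin A(F)$, the forest $F$ contains a directed path from $p_{i+1}$ to $p_i$. I would prove this by contradiction: by Observation~\ref{obs:uniquePath} the arc $p_ip_{i+1}$ is also not a forward arc of $F$, so if additionally $F$ has no path from $p_{i+1}$ to $p_i$ then, writing $x$ for the $F$-parent of $p_{i+1}$, the graph $F' = (F\setminus\{xp_{i+1}\})\cup\{p_ip_{i+1}\}$ is still a forest of out-trees rooted in $S$. Because $p_{i+1}$ is not a key vertex, $x$ has $F$-out-degree $1$, hence $x$ is a leaf of $F'$; and $p_i$ had $F$-out-degree $1$, so $p_i$ is not a leaf of $F$ but has $F'$-out-degree $2$. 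Thus $F'$ has strictly more leaves than $F$, and a quick check that the nice-forest conditions still hold — only the new arc $p_ip_{i+1}$ must be examined, and it is fine since $p_i$ has out-degree $2$ in $F'$, while it is irrelevant to condition (c) since $i<i+1$ — contradicts the choice of $F$. I expect this verification, in particular confirming that $F'$ is again a nice forest and genuinely gains one leaf, to be the main obstacle; everything else is routine bookkeeping.

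Finally I would localize everything inside $V(P')$. By Observation~\ref{obs:uniquePath}, $p_{x-1}p_x$ is the only arc of $D[V(P)]$ leaving $\{p_1,\dots,p_{x-1}\}$ and $p_yp_{y+1}$ is the only arc leaving $\{p_1,\dots,p_y\}$; since neither belongs to $A(F)$, any $F$-path with both endpoints in $V(P')$ stays inside $V(P')$, so in particular the $p_{i+1}$-to-$p_i$ path furnished by the claim lies in $F[V(P')]$. Hence for every consecutive pair $p_i,p_{i+1}$ of $P'$, either $p_ip_{i+1}\in A(F)$ or there is a directed path from $p_{i+1}$ to $p_i$ in $F[V(P')]$; in both cases $p_i$ and $p_{i+1}$ lie in one weakly connected component of $F[V(P')]$. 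Therefore $F[V(P')]$ is weakly connected, and together with acyclicity and the degree bound of the first step it is a directed path, as required.
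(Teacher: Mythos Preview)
Your proof is correct and follows essentially the same route as the paper's: both establish the degree-$1$ structure on $V(P')$, prove via the same leaf-increasing exchange (replacing $xp_{i+1}$ by $p_ip_{i+1}$) that whenever $p_ip_{i+1}\notin A(F)$ there is an $F$-path from $p_{i+1}$ to $p_i$, and then use Observation~\ref{obs:uniquePath} together with $p_{x-1}p_x,\,p_yp_{y+1}\notin A(F)$ to confine that path to $V(P')$ and conclude weak connectivity. Your verification that $F'$ remains a nice forest is slightly more explicit than the paper's, but the argument is the same.
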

\begin{proof}
We first prove that for any arc $p_ip_{i+1} \in A(P')$ such that $p_ip_{i+1} \notin A(F)$, there is a path from $p_{i+1}$ to $p_i$ in $F$. Suppose for contradiction that there is no path from $p_{i+1}$ to $p_i$ in $F$, and
let $x$ be the parent of $p_{i+1}$ in $F$.
Then $p_ip_{i+1}$ is not a backward arc of $F$ and hence $F'= (F \setminus xp_{i+1}) \cup \{p_ip_{i+1}\}$ is a forest of out-trees rooted at vertices in $S$. Also, since $p_{i+1}$ is not a key vertex, $x$ has out-degree $1$ in $F$ and thus $x$ is a leaf in $F'$. Since $p_i$ is not a leaf in $F$, $F'$ has one more leaf than $F$. Now, every vertex with out-degree at least $2$ in $F$ has out-degree at least $2$ in $F'$. Additionally, $p_i$ has out-degree $2$ in $F'$. Hence $F'$ is a nice forest of $P$ with more leaves than $F$, contradicting the choice of $F$.

Now, notice that by Observation \ref{obs:uniquePath} any path in $D[V(P)]$ from a vertex $u \in V(P')$ to a vertex $v \in V(P')$ that contains a vertex $w \notin V(P')$ must contain either the arc $p_{x-1}p_x$ or the arc $p_{y}p_{y+1}$. 
Since neither of those two arcs are arcs of $F$ it follows that for any arc $p_ip_{i+1} \in A(P')$ such that $p_ip_{i+1} \notin A(F)$, there is a path from $p_{i+1}$ to $p_i$ in $F[V(P')]$.
Hence $F[V(P')]$ is weakly connected, that is, the underlying undirected graph is connected.
Since every vertex in $V(P')$ has in-degree $1$ and out-degree $1$ in
$F$ we conclude that $F[V(P')]$ is a directed path.
\end{proof}

\setcounter{lemma}{11}
\begin{obs}\label{obs:PFOrder}
For any pair of vertices $p_i,p_j \in V(P')$ if $i \leq j-2$
then $p_j$ appears before $p_i$ in $Q'$.
\end{obs}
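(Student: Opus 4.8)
The plan is to argue by contradiction, exploiting the uniqueness of paths inside $V(P)$ (Observation~\ref{obs:uniquePath}) together with the structural restriction on nice forests from Observation~\ref{obs:noTwoFront}. So suppose, for contradiction, that $i \le j-2$ but $p_i$ appears \emph{before} $p_j$ on the directed path $Q' = F[V(P')]$ guaranteed by Lemma~\ref{le:FInducesPath}.

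The first step is to identify the sub-path of $Q'$ from $p_i$ to $p_j$. Since $Q'$ is a directed path lying inside $F$, and $A(F) \subseteq A(D[V(P')]) \subseteq A(D[V(P)])$, this sub-path is a directed path from $p_i$ to $p_j$ in $D[V(P)]$. By Observation~\ref{obs:uniquePath} the only such path is $p_ip_{i+1}\ldots p_j$; note this path stays entirely within $V(P')$ because $x \le i < j \le y$, so it is in fact the unique $p_i$-to-$p_j$ path in $D[V(P')]$ as well, and hence it must be exactly the relevant sub-path of $Q'$.

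The second step extracts the contradiction. Because $i \le j-2$, the path $p_ip_{i+1}\ldots p_j$ contains the two consecutive arcs $p_ip_{i+1}$ and $p_{i+1}p_{i+2}$, and since this path is a sub-path of $Q'$, both of these arcs belong to $A(F)$. But $P'$ was chosen to contain no key vertices, so neither $p_i$ nor $p_{i+1}$ is a key vertex; Observation~\ref{obs:noTwoFront} then says at least one of $p_ip_{i+1}$, $p_{i+1}p_{i+2}$ is \emph{not} in $A(F)$ — a contradiction. Hence $p_j$ precedes $p_i$ on $Q'$, as claimed.

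There is no real obstacle here; the only point requiring a little care is the first step, namely checking that uniqueness of the $p_i$–$p_j$ path transfers from $D[V(P)]$ down to the induced sub-digraph $D[V(P')]$ (and therefore pins down the corresponding stretch of $Q'$), which is immediate since the unique path already lies inside $V(P')$.
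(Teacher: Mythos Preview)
Your proof is correct and follows exactly the same approach as the paper: assume for contradiction that $p_i$ precedes $p_j$ on $Q'$, use Observation~\ref{obs:uniquePath} to identify the $Q'$-subpath from $p_i$ to $p_j$ as $p_ip_{i+1}\ldots p_j$, and then invoke Observation~\ref{obs:noTwoFront} on the arcs $p_ip_{i+1},\,p_{i+1}p_{i+2}$. One small slip: the containment $A(F) \subseteq A(D[V(P')])$ is false as written (since $F$ lives on all of $V(P)$), but what you actually need and use is $A(Q') = A(F[V(P')]) \subseteq A(D[V(P)])$, which is fine.
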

\begin{proof}
Suppose for contradiction that $p_i$ appears before $p_j$ in $Q'$.
By Observation \ref{obs:uniquePath} $p_ip_{i+1}p_{i+2}\ldots p_j$
is the unique path from $p_i$ to $p_j$ in $D[V(P')]$.
This path contains both the arc $p_ip_{i+1}$ and $p_{i+1}p_{i+2}$
contradicting Observation \ref{obs:noTwoFront}.
\end{proof}

\setcounter{lemma}{12}
\begin{lemma}\label{le:PAndFIsAll}
All arcs of $D[V(P')]$ are contained in $A(P') \cup A(F)$.
\end{lemma}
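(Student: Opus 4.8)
The plan is to exploit that the BFS-tree $T$ has no forward arcs in order to reduce the statement to \emph{backward} arcs. Any arc $p_ap_b$ of $D[V(P')]$ with $a<b$ is either the path arc $p_ap_{a+1}$ (when $b=a+1$) or, when $b>a+1$, a forward arc of $T$, which cannot exist since $T$ is a BFS-tree; so it suffices to show that every arc $p_jp_i\in A(D[V(P')])$ with $i<j$ lies in $A(F)$.

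Assume for contradiction that $p_jp_i\in A(D)$ with $p_i,p_j\in V(P')$, $i<j$, and $p_jp_i\notin A(F)$. Since $p_i$ is not a key vertex and $S$ consists of key vertices, $p_i$ has a parent $p_q$ in $F$. First I would pin down $q$: the vertex $p_i$ has in-degree at least $2$ in $D$ (it receives $p_jp_i$ and $p_{i-1}p_i$, the boundary index $i=1$ being impossible because then $p_1\in S$ unless $p_1=r$, and $r$ has no in-arc), so property (b) of a nice forest rules out $q<i$ (that would force $p_q$ to have out-degree at least $2$ in $F$, making $p_i$ a key vertex). Hence $q>i$, the arc $p_qp_i$ is a backward arc of $F$, and property (c) gives $p_tp_i\notin A(D)$ for all $t>q$; in particular $j\le q$, and $j\ne q$ since $p_qp_i\in A(F)$ while $p_jp_i\notin A(F)$, so $i<j<q$.

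The remaining argument is a case split on the relative position of $p_i$ and $p_j$ along the directed path $Q'=F[V(P')]$ provided by Lemma~\ref{le:FInducesPath}. If $p_j$ comes after $p_i$ on $Q'$, then Observation~\ref{obs:PFOrder} forces $j=i+1$ and, with a short extra argument ruling out intermediate vertices, that $p_i$ is the immediate $Q'$-predecessor of $p_j$; thus $p_ip_{i+1}\in A(F)$, and since $p_{i+1}=p_j$ is not a key vertex, property (b) forces $p_j$ to have in-degree $1$ in $D$. Then $p_i$ disconnects $p_j$ from $r$, so Reduction Rule~\ref{uslessArcRule} applies to $p_jp_i$, contradicting that $D$ is reduced. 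If instead $p_j$ comes before $p_i$ on $Q'$, then $p_j$ is an $F$-ancestor of $p_i$, hence of $p_q$, so $p_q$ lies on $Q'$ (it is on the unique $F$-path from $p_j$ to $p_i$, which is a subpath of $Q'$) and $p_j$ precedes $p_q$ on $Q'$; Observation~\ref{obs:PFOrder} then gives $q=j+1$ and, since no vertex can sit strictly between $p_j$ and $p_q$ on $Q'$, yields $p_jp_q\in A(F)$, whence property (b) forces $p_q$ (not a key vertex) to have in-degree $1$ in $D$. Now the single vertex $p_j$ disconnects $p_q$ from $r$, while $p_jp_i$ and $p_qp_i$ are both arcs of $D$, so Reduction Rule~\ref{AvoidableArcRule} applies with $S=\{p_j\}$ to delete $p_qp_i$ --- again contradicting that $D$ is reduced. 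In either case the assumption fails, proving the lemma.

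I expect the bookkeeping in the second case to be the main obstacle: one must verify carefully that $p_q$ actually lands on $Q'$ and that no vertices lie strictly between $p_j$ and $p_q$ on $Q'$ (both follow from Observation~\ref{obs:PFOrder} plus the fact that $F[V(P')]$ carries the unique $F$-path between any two of its vertices), and then match the hypotheses of Reduction Rule~\ref{AvoidableArcRule} to the correct arc --- deleting $p_qp_i$ rather than $p_jp_i$, using $p_j$ as the separator. The in-degree-at-least-two claim for $p_i$ also needs one sentence to dispose of the boundary index $i=1$.
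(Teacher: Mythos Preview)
Your proposal is correct and follows essentially the same route as the paper's proof: reduce to backward arcs, let $p_q$ be the $F$-parent of $p_i$, use properties (b) and (c) of a nice forest to obtain $i<j<q$, then case-split on the relative position of $p_i$ and $p_j$ on $Q'$ and derive a contradiction with Reduction Rule~\ref{uslessArcRule} or~\ref{AvoidableArcRule}. You are slightly more explicit than the paper in two places---you spell out why $q>i$ via property~(b), and you obtain $p_ip_{i+1}\in A(F)$ by ruling out intermediate $Q'$-vertices, whereas the paper tacitly uses Observation~\ref{obs:uniquePath} to see that the $F$-path from $p_i$ to $p_{i+1}$ must be the single arc---but the argument is the same.
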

\begin{proof}
Since $P$ has no forward arcs it is enough to prove that any
arc $p_jp_i \in A(D[V(P')])$ with $i < j$ is an arc of $F$.
Suppose this is not the case and let $p_q$ be the parent of $p_i$ in $F$.
We know that $p_i$ has in-degree at least $2$ in $D$ and
also since $p_i$ is not a key vertex $p_q$ has in-degree one in $F$.
Hence by definition of $F$ being a nice forest, we have that for every $t > q$, $p_tp_i \notin A(D)$.
It follows that $i < j < q$.
By Lemma \ref{le:FInducesPath} $F[V(P')]$ is a directed path $Q'$ containing both $p_i$ and $p_j$.
If $p_j$ appears after $p_i$ in $Q'$,
Observation \ref{obs:PFOrder} implies that $i= j-1$ and that $p_j$
has in-degree $1$ in $D$ since $F$ is a nice forest.
Thus $p_i$ separates $p_j$ from the root and Rule \ref{uslessArcRule} can be applied to $p_jp_i$
contradicting that $D$ is a reduced instance.
Hence $p_j$ appears before $p_i$ in $Q'$.

Since $p_j$ is an ancestor of $p_i$ in $F$ and $p_q$ is the parent of $p_i$ in $F$,
$p_j$ is an ancestor of $p_q$ in $F$ and hence $p_q \in V(Q') = V(P')$.
Now, $p_j$ comes before $p_q$ in $Q'$ and $j < q$ so
Observation \ref{obs:PFOrder} implies that $q = j+1$ and
that $p_q$ has in-degree $1$ in $D$ since $F$ is a nice forest.
Thus $p_j$ separates $p_q$ from the root $r$ and both $p_jp_i$ and $p_qp_i$ are arcs of $D$.
Hence Rule \ref{AvoidableArcRule} can be applied to remove the arc $p_qp_i$ contradicting that
$D$ is a reduced instance.
\end{proof}

\setcounter{lemma}{14}

\begin{obs} \label{obs:uniqueBackPath}
Let $Q' = F[V(P')]$.
For any pair of vertices $u,v$ such that there is a path $Q'[uv]$ from $u$ to $v$ in $Q'$,
$Q'[uv]$ is the unique path from $u$ to $v$ in $D[V(P')]$.
\end{obs}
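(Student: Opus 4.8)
The plan is to reduce the whole statement to a single \emph{cut claim} about the directed path $Q' = f_1f_2\ldots f_m$, where $m = |P'|$ and the fact that $Q' = F[V(P')]$ is a directed path is Lemma~\ref{le:FInducesPath}. The cut claim I aim for is: for every $i$ with $1 \le i < m$, the arc $f_if_{i+1}$ is the \emph{only} arc of $D[V(P')]$ leaving the prefix $\{f_1,\ldots,f_i\}$ for the suffix $\{f_{i+1},\ldots,f_m\}$. Granting this, the observation follows quickly. Since $u$ and $v$ are joined by a path inside $Q'$ we may write $u = f_a$, $v = f_b$ with $a < b$ (the case $u = v$ being trivial), and $Q'[uv] = f_af_{a+1}\ldots f_b$. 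Any path $W$ from $u$ to $v$ in $D[V(P')]$ starts in $\{f_1,\ldots,f_i\}$ and ends in $\{f_{i+1},\ldots,f_m\}$ for every $i$ with $a \le i < b$, so $W$ must cross that cut and hence must use the arc $f_if_{i+1}$; together these forced arcs, read off along the simple path $W$, show that $W = f_af_{a+1}\ldots f_b = Q'[uv]$.

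To prove the cut claim I would argue as follows. By Lemma~\ref{le:PAndFIsAll}, every arc of $D[V(P')]$ lies in $A(P') \cup A(F)$. The arcs of $A(F)$ that lie inside $V(P')$ are exactly the arcs of the directed path $Q'$, so among them only $f_if_{i+1}$ runs from the prefix to the suffix. It remains to show that no arc of $A(P')$ other than $f_if_{i+1}$ crosses the cut. An arc of $P'$ has the form $p_jp_{j+1}$; if it runs from $\{f_1,\ldots,f_i\}$ to $\{f_{i+1},\ldots,f_m\}$ then $p_j$ precedes $p_{j+1}$ on $Q'$, i.e.\ $p_j$ is an ancestor of $p_{j+1}$ in $F$. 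By Corollary~\ref{cor:noForward} the arc $p_jp_{j+1}$ is not a forward arc of $F$, hence it must actually belong to $A(F)$; being an arc of $F$ inside $V(P')$ it is an arc of $Q'$, and the only arc of $Q'$ crossing this particular cut is $f_if_{i+1}$. This establishes the cut claim.

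The bulk of the argument is this cut claim, and it is short once Lemma~\ref{le:FInducesPath}, Lemma~\ref{le:PAndFIsAll} and Corollary~\ref{cor:noForward} are available; the only point that needs real care --- and where I expect to spend the most thought --- is the final chaining step, namely checking that a \emph{simple} path forced to contain all of $f_af_{a+1}, f_{a+1}f_{a+2}, \ldots, f_{b-1}f_b$ must literally be the path $f_af_{a+1}\ldots f_b$ and cannot slip in extra detours along ``backward'' arcs of $P'$. The clean way to see this is to exploit simplicity one vertex at a time: $W$ starts at $f_a$ and uses the arc $f_af_{a+1}$, so its second vertex is $f_{a+1}$; it uses $f_{a+1}f_{a+2}$, so its third vertex is $f_{a+2}$; continuing, the vertices of $W$ are $f_a, f_{a+1}, f_{a+2}, \ldots$ in order, and as soon as $W$ reaches $f_b$ it must stop. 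I would also remark that the hypothesis ``there is a path $Q'[uv]$ from $u$ to $v$ in $Q'$'' is used only to guarantee $a < b$, and that the degenerate cases $u = v$ and $b = a+1$ are immediate.
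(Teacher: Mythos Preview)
Your proof is correct and follows essentially the same route as the paper's: both reduce to the cut claim that for each $i$ the only arc of $D[V(P')]$ from $\{f_1,\ldots,f_i\}$ to $\{f_{i+1},\ldots,f_m\}$ is $f_if_{i+1}$, and both derive this from Lemma~\ref{le:FInducesPath}, Lemma~\ref{le:PAndFIsAll} and Corollary~\ref{cor:noForward}. The paper stops after establishing the cut claim and leaves the passage to uniqueness implicit; you spell out the chaining step via simplicity of $W$, which is a welcome addition (your one-line justification that the arc $f_af_{a+1}$ must be the \emph{first} arc of $W$ is correct because $f_a$, being the start vertex of the simple path $W$, is visited exactly once).
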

\begin{proof}
By Lemma \ref{le:FInducesPath} $Q'$ is a directed path $f_1f_2\ldots f_{|P'|}$ and
let $Q'[f_1f_i]$ be the path $f_1f_2\ldots f_i$.
We prove that for any $i < |Q'|$, $f_if_{i+1}$ is the only arc
from $V(Q'[f_1f_i])$ to $V(Q'[f_{i+1}f_{|P'|}])$.
By Lemma \ref{le:PAndFIsAll} all arcs of $D[V(P')]$ are either arcs of $P'$ or arcs of $Q'$.
Since $Q'$ is a path, $f_if_{i+1}$ is the only arc from $V(Q'[f_1f_i])$ to $V(Q'[f_{i+1}f_{|P'|}])$ in $Q'$.
By Corollary \ref{cor:noForward} there are no arcs from $V(Q'[f_1f_i])$ to
$V(Q'[f_{i+1}f_{|P'|}])$ in $P'$, except possibly for $f_if_{i+1}$.
\end{proof}

\setcounter{lemma}{16}
\begin{corollary} \label{cor:fewArcsOut}
There are at most $14(k-1)$ vertices in $P'$ with arcs to vertices outside of $P'$.
\end{corollary}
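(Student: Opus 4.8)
The plan is to obtain the bound by combining the two estimates already available: Lemma~\ref{le:fewOutVertices}, which controls how many \emph{target} vertices lie outside the path, and Lemma~\ref{le:twoArcsOut}, which controls how many arcs can hit a single target. Concretely, let $Y$ be the set of vertices lying outside $V(P')$ that are out-neighbours of some vertex on $P'$, and let $X\subseteq V(P')$ be the set of vertices of $P'$ having at least one out-neighbour outside $V(P')$; the goal is to show $|X|\le 14(k-1)$.

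First I would apply Lemma~\ref{le:fewOutVertices} with $P'$ playing the role of $P$. This is legitimate: $P'$ is a subpath of $P$, so every vertex of $P'$ has out-degree $1$ in $T$, and the Remark following Lemma~\ref{le:fewOutVertices} explicitly states that the path in that lemma need not be a maximal out-degree-one path of $T$. Hence $|Y|\le 7(k-1)$. Next I would apply Lemma~\ref{le:twoArcsOut}: for each fixed $y\in Y$ at most two vertices of $P'$ have an arc to $y$. Summing this over all $y\in Y$ bounds the total number of arcs from $V(P')$ to $V(D)\setminus V(P')$ by $2|Y|\le 14(k-1)$. Since each vertex of $X$ contributes at least one such arc, $|X|\le 14(k-1)$, which is exactly the claimed bound.

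I do not expect any genuine obstacle here; the corollary is a direct two-line consequence of the two cited lemmas. The only point that needs a moment's care is the applicability of Lemma~\ref{le:fewOutVertices} to $P'$ instead of to the original BFS path $P$, and this is precisely what the Remark after that lemma licenses. (As a sanity check on the constant: $2\cdot 7(k-1)=14(k-1)$, matching the statement.)
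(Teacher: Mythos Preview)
Your proposal is correct and matches the paper's own proof essentially line for line: apply Lemma~\ref{le:fewOutVertices} to $P'$ (using the Remark that the path need not be maximal) to bound the number of external targets by $7(k-1)$, then use Lemma~\ref{le:twoArcsOut} to double this, yielding at most $14(k-1)$ source vertices. If anything, you are slightly more explicit than the paper in justifying why Lemma~\ref{le:fewOutVertices} applies to $P'$ and in spelling out the final counting step.
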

\begin{proof}
By Lemma \ref{le:fewOutVertices} there are at most $7(k-1)$ vertices that are endpoints of arcs originating in $P'$.
By Lemma \ref{le:twoArcsOut} each such vertex is the endpoint of at most $2$ arcs from vertices in $P'$.
\end{proof}

\setcounter{lemma}{18}
\begin{lemma} 
\label{lemma:boundpath}
Let $D$ be a reduced no-instance to \textsc{Rooted $k$-Leaf Out-Branching}. 
Then $|V(D)| = O(k^3)$. 
\end{lemma}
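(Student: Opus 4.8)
The plan is to start from a BFS-tree of $D$, show it decomposes into only $O(k)$ long paths all of whose vertices have out-degree one in the tree, bound the length of each such path by $O(k^2)$ using the machinery developed in Section~\ref{subsection:boundpaths}, and then add everything up.

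First I would fix a BFS-tree $T$ of $D$ rooted at $r$; this is well defined since, Reduction Rule~\ref{reachRule} not being applicable, every vertex of $D$ is reachable from $r$. If $T$ had at least $k$ leaves then, being an out-tree rooted at a vertex from which everything is reachable, Proposition~\ref{prop:grow} would give an $r$-out-branching of $D$ with at least $k$ leaves, contradicting that $D$ is a no-instance. Hence $T$ has at most $k-1$ leaves and consequently at most $k-2$ vertices of out-degree at least $2$. Deleting from $T$ its leaves and its out-degree-$\geq 2$ vertices leaves a disjoint union of at most $2k-3$ directed paths $P_1,\ldots,P_t$, each having the property that all its vertices have out-degree exactly $1$ in $T$ --- precisely the hypothesis on $P$ used throughout Sections~\ref{subsec:inandoutnbh} and~\ref{subsection:boundpaths}.

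The core of the proof is the claim $|P_q| = O(k^2)$ for every $q$. To prove it I would apply Lemma~\ref{lemma:niceforest} to obtain a nice forest of $P_q$ and, among all of them, pick one, $F$, with the maximum number of leaves. By Lemma~\ref{le:keyVertexCount}, $F$ has at most $5(k-1)$ key vertices, and these split $P_q$ into at most $5(k-1)+1$ blocks of consecutive non-key vertices. Within a single block I would invoke Observation~\ref{obs:noTwoFront} at each of its two ends to trim off at most a constant number of vertices and obtain a subpath $P' = p_x\ldots p_y$ that still contains no key vertex and additionally satisfies $p_{x-1}p_x \notin A(F)$ and $p_yp_{y+1}\notin A(F)$. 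These are exactly the standing hypotheses under which Lemma~\ref{le:boundPPrime} applies, giving $|P'| \leq 154(k-1)+10$. Summing the block lengths (each at most $154(k-1)+10$, plus the $O(1)$ trimmed vertices), over the at most $5(k-1)+1$ blocks, and adding the at most $5(k-1)$ key vertices themselves yields $|P_q| \leq \bigl(5(k-1)+1\bigr)\bigl(154(k-1)+10\bigr) + O(k) = O(k^2)$.

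Finally, $|V(D)|$ equals the number of leaves of $T$ (at most $k-1$), plus the number of out-degree-$\geq 2$ vertices of $T$ (at most $k-2$), plus $\sum_{q=1}^{t}|P_q| \leq (2k-3)\cdot O(k^2)$, so $|V(D)| = O(k^3)$; tracking the constants gives the explicit bound $1540k^3$ used in Theorem~\ref{thm:polykernel}. I expect the main obstacle to be the middle step: one has to verify carefully that the two-sided trimming of an inter-key block really produces a subpath whose boundary arcs lie outside $F$ while still covering all but $O(1)$ of the block, so that Lemma~\ref{le:boundPPrime} genuinely applies to essentially the entire block. This is exactly where the maximum-leaf choice of the nice forest $F$ is needed, since Lemma~\ref{le:boundPPrime} (through Lemma~\ref{le:FInducesPath} and the observations it invokes) is established only for such an $F$.
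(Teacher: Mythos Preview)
Your proposal is correct and follows essentially the same approach as the paper's own proof: take a BFS-tree, bound leaves and branching vertices by $O(k)$, decompose the rest into at most $2k-3$ paths of out-degree-one vertices, and on each such path use a maximum-leaf nice forest, Lemma~\ref{le:keyVertexCount}, Observation~\ref{obs:noTwoFront}, and Lemma~\ref{le:boundPPrime} to get an $O(k^2)$ bound. Your identification of the delicate step --- the constant-size trimming at each end of an inter-key block so that the boundary arcs lie outside $F$ --- is exactly the point the paper glosses over with its appeal to Observation~\ref{obs:noTwoFront}.
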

\begin{proof}
Let $T$ be a BFS-tree of $D$. $T$ has at most $k-1$ leaves and at most $k-2$ inner vertices with out-degree at least $2$. The remaining vertices can be partitioned into at most $2k-3$ paths $P_1 \ldots P_t$ with all vertices having out-degree $1$ in $T$.
We prove that for every $q \in \{1, \ldots, t\}$, $|P_q| = O(k^2)$. Let $F$ be a nice forest of $P_q$ with the maximum number of leaves. By Lemma \ref{le:keyVertexCount}, $F$ has at most $5(k-1)$ key vertices. Let $p_i$ and $p_j$ be consecutive key vertices of $F$ on $P_q$. By Observation \ref{obs:noTwoFront}, there is a path $P' = p_xp_{x+1} \ldots p_y$ containing no key vertices, with $x \leq i+1$ and $y \geq j-1$, such that neither $p_{x-1}p_x$ nor $p_yp_{y+1}$ are arcs of $F$. 
By Lemma \ref{le:boundPPrime} 
$|P'| \leq 154(k-1)+10$ so 
$|P_q| \leq (5(k-1)+1)(154(k-1)+10) + 3(5(k-1))$. 
Hence, $|V(D)| \leq 2k(5k(154(k-1)+10+3)) \leq 1540k^3= O(k^3)$.
\end{proof}

\setcounter{section}{2}
\renewcommand{\thesection}{\Alph{section}}

\section{Proofs moved from Section~\ref{cheatKernel}}
\setcounter{section}{5}
\renewcommand{\thesection}{\arabic{section}}

\setcounter{lemma}{3}
\begin{obs} 
Let $D=(V,A_1 \cup A_2)$ be a nice willow graph. Every out-branching of $D$ with the maximum number of leaves is rooted at the top vertex $p_n$ \end{obs}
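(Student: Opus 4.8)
The plan is to argue by contradiction: assume $T$ is an out-branching of $D$ with the maximum number of leaves whose root is $p_i$ for some $i<n$, and produce an out-branching rooted at $p_n$ with \emph{strictly} more leaves. The starting observation is that $p_n$, being the top of the willow, receives no arc of $A_2$ (all arcs of $A_2$ are backward), so the only arc of $D$ entering $p_n$ is the stem arc $p_{n-1}p_n$. Since $p_n$ is not the root of $T$, this forces $p_{n-1}p_n\in A(T)$, and more generally a short downward induction on $m$ shows that every stem arc $p_mp_{m+1}$ with $i\le m\le n-1$ lies in $A(T)$: if $p_m$ (with $m>i$) had its $T$-parent $p_a$ with $a>m$, this would be a backward arc of $A_2$, and the stem path $p_m\to p_{m+1}\to\cdots\to p_a$ already present in $T$ would close a directed cycle. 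Consequently $p_i,\dots,p_{n-1}$ all have a child in $T$, so $T$ has at most one leaf inside $U:=\{p_i,\dots,p_n\}$, namely possibly $p_n$ itself.

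Next I would build the competitor $\hat T$. Let $T'$ be a $p_n$-out-branching of $D''=(V,A_2)$, which exists because $D$ is nice. Because every arc of $A_2$ is backward, the parent in $T'$ of any vertex $p_m$ (other than $p_n$) has larger index than $p_m$; hence the induced subgraph $T'[U]$ is again a $p_n$-out-branching, this time of $D[U]$. Set $\hat T=\{\,p_xp_y\in A(T):y<i\,\}\cup A(T'[U])$. A routine count (in-degree one at every vertex except $p_n$, total $n-1$ arcs) together with the observation that no arc of $\hat T$ leads from a vertex of index $<i$ to a vertex of index $\ge i$ (so no directed cycle can mix the two parts, and each part is acyclic on its own) shows that $\hat T$ is an out-branching of $D$ rooted at $p_n$.

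The leaf comparison then splits over the two halves of the vertex set. For vertices of index $<i$ I would observe that, since $p_i$ is the root of $T$, no arc of $T$ goes from such a vertex to a vertex of index $\ge i$ (the only candidate would be $p_{i-1}p_i\in A_1$, but $p_i$ has no parent in $T$); hence every vertex $p_y$ with $y<i$ has exactly the same out-neighbourhood in $\hat T$ as in $T$, and so is a leaf of $\hat T$ if and only if it is a leaf of $T$. It remains to compare the leaves inside $U$: in $T$ there is at most one (possibly $p_n$), whereas in $\hat T$ the leaves inside $U$ are exactly the leaves of $T'[U]$.

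Here the ``nice'' hypothesis does the real work, and this is the step I expect to be the crux, since the whole argument has to deliver a \emph{strict} improvement. Because $p_{n-1}$ and $p_{n-2}$ are incident to no arc of $A_2$ other than $p_np_{n-1}$, resp.\ $p_np_{n-2}$, in $T'$ they are both children of $p_n$ and both leaves; so if $i\le n-2$ the set $U$ contains both of them and $T'[U]$ has at least two leaves, beating the at-most-one leaf of $T$ in $U$ by a margin of at least one. The boundary case $i=n-1$ needs a separate remark: then $U=\{p_{n-1},p_n\}$ and $T'[U]$ is the single arc $p_n\to p_{n-1}$ with one leaf, but in $T$ the vertex $p_n$ cannot be a leaf, since $p_{n-1}$'s only out-arc is $p_{n-1}p_n$ and (as $n\ge 3$) the vertices $p_1,\dots,p_{n-2}$ must be reached through $p_n$; so again $\hat T$ gains a leaf. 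In all cases $\hat T$ has strictly more leaves than $T$, contradicting maximality, which proves the statement.
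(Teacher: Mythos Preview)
Your approach is the same as the paper's: both construct the identical hybrid out-branching (your $\hat T$ is the paper's $T''$) by keeping the $T$-arcs into vertices of index $<i$ and replacing the rest by the induced $p_n$-out-branching $T'[U]$ coming from $A_2$. Your write-up is in fact more thorough than the paper's sketch: you explicitly establish by downward induction that every stem arc $p_mp_{m+1}$ with $i\le m\le n-1$ lies in $T$, and you treat all $i<n$ rather than only $i=n-1$.

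There is one small slip in the leaf comparison. You assert that ``in $\hat T$ the leaves inside $U$ are exactly the leaves of $T'[U]$,'' but only the inclusion $\subseteq$ holds: a leaf $p_j$ of $T'[U]$ may have a $T$-child $p_z$ with $z<i$ (via an $A_2$-arc $p_jp_z$), and this arc survives in $\hat T$, making $p_j$ an inner vertex there. Since your strict-gain argument for $i\le n-2$ relies on the ``exactly'' to pass from ``$T'[U]$ has at least two leaves'' to ``$\hat T$ has at least two leaves in $U$,'' the logic as written does not close. The fix is immediate and already implicit in your use of the nice-willow hypothesis: $p_{n-1}$ and $p_{n-2}$ have \emph{no} out-arcs in $D$ except the stem arcs $p_{n-1}p_n$ and $p_{n-2}p_{n-1}$, both of which point into $U$; hence $p_{n-1}$ and $p_{n-2}$ are leaves of $\hat T$ itself (not merely of $T'[U]$). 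With this direct observation in place, your count goes through unchanged in both the case $i\le n-2$ and the boundary case $i=n-1$.
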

\begin{proof}
Let $P=p_1p_2\ldots p_n$ be the stem of $D$ and suppose for contradiction that there is an out-branching $T$ with the maximum number of leaves rooted at $p_i$, $i < n$. Since $D$ is a nice willow $D'=(V,A_2)$ has a $p_n$-out-branching $T'$. Since every arc of $A_2$ is a back arc of $P$, $T'[\{v_j : j \geq i\}]$ is an $p_n$-out-branching of $D[\{v_j : j \geq i\}]$. Then $T'' = (V,\{v_xv_y \in A(T'): y \geq i\} \cup \{v_xv_y \in A(T): y < i\})$ is an out-branching of $D$.
If $i = n-1$ then $p_n$ is not a leaf of $T$ since the only arcs going out of the set $\{p_n, p_{n-1}\}$ start in $p_n$. Thus, in this case, all leaves of $T$ are leaves of $T''$ and $p_{n-1}$ is a leaf of $T''$ and not a leaf of $T$, contradicting that $T$ has the maximum number of leaves.
\end{proof}

\setcounter{lemma}{4}

\begin{lemma} 
\textsc{$k$-Leaf Out-Tree} in nice willow graphs is NP-complete under Karp reductions. \end{lemma}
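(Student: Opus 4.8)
The plan is to establish membership in NP and then NP-hardness by a Karp reduction from \textsc{Set Cover}. Membership is immediate, since an out-tree with at least $k$ leaves is a polynomial-size certificate. For hardness, take a \textsc{Set Cover} instance consisting of a universe $U$ with $|U|=n$, a family $\mathcal F=\{S_1,\dots,S_m\}$, and a bound $b$; we may assume every element of $U$ lies in some $S_i$ and that $b\le m-2$, since otherwise the instance is decidable in polynomial time. I would build a digraph $D=(V,A_1\cup A_2)$ whose vertex set is a root $r$, one vertex $s_i$ per set, one vertex $e_j$ per element, and two padding vertices $p,p'$. The ``structural'' arc set is $A_2=\{rs_i:1\le i\le m\}\cup\{s_ie_j:e_j\in S_i\}\cup\{rp,rp'\}$, and the stem $A_1$ arranges the vertices along a Hamiltonian directed path in the order (from bottom to top) $e_n,\dots,e_1,s_m,\dots,s_1,p,p',r$; that is, $A_1=\{e_{i+1}e_i:1\le i<n\}\cup\{s_{i+1}s_i:1\le i<m\}\cup\{e_1s_m,s_1p,pp',p'r\}$.

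First I would check that $D$ is a nice willow graph: $(V,A_1)$ is a directed Hamiltonian path; every arc of $A_2$ points from a higher stem position to a lower one, so it is a backward arc of the stem, and $(V,A_2)$ is acyclic with $r$ as its unique source; the arcs $p_np_{n-1}=rp'$ and $p_np_{n-2}=rp$ are present, $p$ and $p'$ are incident to no other $A_2$-arc, and $(V,A_2)$ has an $r$-out-branching because from $r$ one reaches every $s_i$ and $p,p'$ directly and every $e_j$ through some set containing it. Set $k:=n+m+2-b$ (note $0<k<|V|=n+m+3$). I claim $\mathcal F$ has a subcover of size at most $b$ iff $D$ has an out-tree with at least $k$ leaves.

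For the forward direction, given a subcover $\mathcal F'$ with $|\mathcal F'|\le b$, take the tree rooted at $r$ whose children are $s_1,\dots,s_m,p,p'$ and in which each $e_j$ is attached below some $S_i\in\mathcal F'$ with $e_j\in S_i$; its only internal vertices are $r$ and the at most $b$ set-vertices that are used, so it is an $r$-out-branching, hence an out-tree, with at least $(n+m+3)-(b+1)=k$ leaves. For the backward direction, let $T$ be an out-tree of $D$ with the maximum possible number of leaves. Using that every vertex of $D$ is reachable from $r$ (Proposition~\ref{prop:grow}) and Observation~\ref{niceWillowRoot}, I would argue that $T$ may be taken to be an $r$-out-branching. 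A sequence of local exchanges then puts $T$ into a normal form: rewire every $s_i$ (and $p$, $p'$) to be a child of $r$, and then rewire every element vertex $e_j$ whose $T$-parent is its stem-predecessor $e_{j+1}$ to be a child of some set vertex containing $e_j$; crucially, once the $s_i$ hang directly below $r$, the subtree of any $e_j$ consists only of element vertices, so the set vertex we move $e_j$ under is never one of its descendants, and no exchange decreases the leaf count. In the resulting $r$-out-branching the internal vertices are exactly $r$ together with those set vertices that parent at least one element; since $T$ spans $n+m+3$ vertices and has at least $k$ leaves, at most $b$ set vertices are internal, and the corresponding sets form a subcover of $U$ of size at most $b$.

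The main obstacle is the normalization step in the backward direction. Reconciling Observation~\ref{niceWillowRoot}, which is phrased for out-\emph{branchings}, with a maximum-leaf out-\emph{tree} requires showing that adding $r$ as a new source (it has out-arcs to all of $s_1,\dots,s_m,p,p'$) and completing to a spanning tree never costs leaves; and one must verify carefully that each rewiring is a legal out-tree operation (the new parent is not a descendant of the moved vertex) and is leaf-nondecreasing, handling the potentially awkward configuration in which, before normalization, all sets containing some $e_j$ sit inside the subtree of $e_j$. The remaining steps are routine counting.
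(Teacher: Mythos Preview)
Your proposal is correct and mirrors the paper's proof: the same \textsc{Set Cover} gadget, the same threshold $k=n+m+2-b$, and the same exchange argument in the backward direction (the paper performs only the swap $e_{i+1}e_i\mapsto s_je_i$, without your preliminary rewiring of the $s_i,p,p'$ below $r$, but your extra step harmlessly pre-empts the descendant worry you raise). For the obstacle you flag---passing from a maximum-leaf out-\emph{tree} to an $r$-out-\emph{branching}---note that every nice willow is strongly connected (go up the stem to $r=p_n$, then reach everything via the $A_2$-out-branching), so Proposition~\ref{prop:grow} applied at the out-tree's own root already yields an out-branching with at least as many leaves, after which Observation~\ref{niceWillowRoot} pins the root to $r$; the paper simply starts from a maximum-leaf out-branching for this reason.
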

\begin{proof}
We reduce from the well known NP-complete \textsc{Set Cover} problem  \cite{Karp72}. A \emph{set cover} of a universe $U$ is a family ${\cal F}'$ of sets over $U$ such that every element of $u$ appears in some set in ${\cal F}'$. In the \textsc{Set Cover} problem one is given a family ${\cal F} = \{S_1, S_2, \ldots S_m\}$ of sets over a universe $U$, $|U|=n$, together with a number $b \leq m$ and asked whether there is a set cover ${\cal F}' \subset {\cal F}$ with $|{\cal F}'| \leq b$ of $U$. In our reduction we will assume that every element of $U$ is contained in at least one set in ${\cal F}$. We will also assume that $b \leq m-2$.
These assumptions are safe because if either of them does not hold, the \textsc{Set Cover} instance can be resolved in polynomial time. From an instance of \textsc{Set Cover} we build a digraph $D=(V,A_1 \cup A_2)$ as follows. The vertex set $V$ of $D$ is a root $r$, vertices $s_i$ for each $1 \leq i \leq m$ representing the sets in ${\cal F}$, vertices $e_i$, $1 \leq i \leq n$ representing elements in $U$ and finally $2$ vertices $p$ and $p'$.

The arc set $A_2$ is as follows, there is an arc from $r$ to each vertex $s_i$, $1 \leq i \leq m$ and there is an arc from a vertex $s_i$ representing a set to a vertex $e_j$ representing an element if $e_j \in S_i$. Furthermore, $rp$ and $rp'$ are arcs in $A_2$. Finally, we let $A_1 = \{e_{i+1}e_i : 1 \leq i < n\} \cup \{s_{i+1}s_i : 1 \leq i < m\} \cup \{e_1s_m, s_1p, pp', p'r\}$. This concludes the description of $D$. We now proceed to prove that there is a set cover ${\cal F}' \subset {\cal F}$ with $|{\cal F}'| \leq b$ if and only if there is an out-branching in $D$ with at least $n+m+2-b$ leaves.

Suppose that there is a set cover ${\cal F}' \subset {\cal F}$ with $|{\cal F}'| \leq b$. We build a directed tree $T$ rooted at $r$ as follows. Every vertex $s_i$, $1 \leq i \leq m$, $p$ and $p'$ has $r$ as their parent. For every element $e_j$, $1 \leq i \leq n$ we chose the parent of $e_j$ to be $s_i$ such that $e_j \in S_i$ and $S_i \in {\cal F}'$ and for every $i' < i$ either $S_{i'} \notin |{\cal F}'|$ or $e_j \notin S_{i'}$. Since the only inner nodes of $T$ except for the root $r$ are vertices representing sets in the set cover, $T$ is an out-branching of $D$ with at least $n+m+2-b$ leaves.

In the other direction suppose that there is an out-branching $T$ of $D$ with at least $n+m+2-b$ leaves, and suppose that $T$ has the most leaves out of all out-branchings of $D$. Since $D$ is a nice willow with $r$ as top vertex, Observation \ref{niceWillowRoot} implies that $T$ is an $r$-out-branching of $D$. Now, if there is an arc $e_{i+1}e_i \in A(T)$ then let $s_j$ be a vertex such that $e_i \in S_j$. Then $T' = (T \setminus e_{i+1}e_i) \cup s_je_i$ is an $r$-out-branching of $D$ with as many leaves as $T$. Hence, without loss of generality, for every $i$ between $1$ and $n$, the parent of $e_i$ in $T$ is some $s_j$. Let ${\cal F'} = \{S_i : s_i$ is an inner vertex of $T\}$. ${\cal F'}$ is a set cover of $U$ with size at most $n+m+2-(n+m+2-b) = b$, concluding the proof.
\end{proof}

\end{appendix}

\end{document}